\newtheorem{thm}{Theorem}[section]
\newtheorem{lem}[thm]{Lemma}
\newtheorem{defn}[thm]{Definition}
\title{Generalized uncertainty principles}
\author{Ronny Machluf}
\address{Department Of Mathematics, The Weizmann
Institute of Science, Rehovot, Israel}
\email{machlufr@weizmann.ac.il} \keywords{Uncertainty principles,
Time-Frequency analysis, Prolate spheroidal wave functions}
\date{9 July 2008}
\begin{document}
\begin{abstract}
The phenomenon in the essence of classical uncertainty principles
is well known since the thirties of the last century. We introduce
a new phenomenon which is in the essence of a new notion that we
introduce: "Generalized Uncertainty Principles". We show the
relation between classical uncertainty principles and generalized
uncertainty principles. We generalized "Landau-Pollak-Slepian"
uncertainty principle. Our generalization relates the following
two quantities and two scaling parameters: 1) The weighted time
spreading $\int_{-\infty}^\infty |f(x)|^2w_1(x)dx$, ($w_1(x)$ is a
non-negative function). 2) The weighted frequency spreading
$\int_{-\infty}^\infty |\hat{f}(\omega)|^2w_2(\omega)d\omega$. 3)
The time weight scale $a$, ${w_1}_a(x)=w_1(xa^{-1})$ and 4) The
frequency weight scale $b$, ${w_2}_b(\omega)=w_2(\omega b^{-1})$.
"Generalized Uncertainty Principle" is an inequality that
summarizes the constraints on the relations between the two
spreading quantities and two scaling parameters. For any two
reasonable weights $w_1(x)$ and $w_2(\omega)$, we introduced a
three dimensional set in $R^3$ that is in the essence of many
uncertainty principles. The set is called "possibility body". We
showed that classical uncertainty principles (such as the
Heiseneberg-Pauli-Weyl uncertainty principle) stem from lower
bounds for different functions defined on the possibility body. We
investigated qualitative properties of general uncertainty
principles and possibility bodies. Using this approach we derived
new (quantitative) uncertainty principles for
Landau-Pollak-Slepian weights. We found the general uncertainty
principles related to homogeneous weights, $w_1(x)=w_2(x)=x^k$,
$k\in N$, up to a constant.
\end{abstract}
\maketitle

\section{Notations}

 \begin{itemize}

\item[$\circ$]

We denote by $L^2$ the space of functions such that:
$\int_{-\infty}^\infty |f(x)|^2 dx<\infty$ with the inner product
defined by: $\langle f, g \rangle = \int_{-\infty}^\infty
f(x)\overline{g(x)}dx$ and norm defined by: $||f||=\langle f, f
\rangle^\frac{1}{2}$

\item[$\circ$]

Fourier transform: $\hat f(w)=\int_{R^d} f(x)e^{-2\pi i x \cdot
\omega}dx$

\item[$\circ$]
Translation operators: $T_a f(x)=f(x-a)$

\item[$\circ$]
Modulation operators:$M_b f(x)=\exp(ibx)f(x)$

\item[$\circ$]
Scaling operators:$S_a f(x)=f(xa^{-1})$

\item[$\circ$]

Time limiting operators $D_h:L^2\rightarrow L^2$: $\widehat{D_h
f}(\omega)=\int_{-h}^{h} f(x)e^{-2\pi i x \cdot \omega}dx$

\item[$\circ$]
Band limiting operators $B_m:L^2\rightarrow L^2$: $B_m
f(x)=\int_{-m}^{m} \hat f(\omega)e^{-2\pi i x \cdot
\omega}d\omega$

\item[$\circ$]
The function $\cos^{-1}(x):[-1,1]\rightarrow [0,\pi]$ is the
inverse of $\cos(x)$:

$\cos^{-1}(\cos (x))=x$, $\forall x\in [0,\pi]$

\item[$\circ$]

The function $\sin^{-1}(x):[-1,1]\rightarrow
[-\frac{\pi}{2},\frac{\pi}{2}]$ is the inverse of $\sin(x)$:
$\sin^{-1}(\sin (x))=x$, $\forall x\in
[-\frac{\pi}{2},\frac{\pi}{2}]$

\end{itemize}

\section{Introduction}
The meta-principle that a signal can not be localized both at time
and frequency is reflected as inequalities involving a function,
say $f$, and its Fourier transform, $\hat{f}$. We will call that
kind of inequalities $\textit{classical uncertainty}$
$\textit{principles}$. A good survey for the subject by Gerald B.
Folland and Alladi Sitaram is \cite{FollandGB_SitaramA}.
Uncertainty principles uses the notion of concentration (e.g.
Landau-Pollak-Slepian (LPS), see below) or the notion of spreading
(e.g. Heisenberg-Pauli-Weyl (HPW), see below). By "translating"
Landau and Pollak (LP) result from "concentration language" to
"spreading language" on one side and adding two parameters to HPW
result on the other side, we show that those results are special
cases (up to minor changes in the LP case) of what we call
$\textit{Generalized Uncertainty Principles}$. Our approach
explains the qualitative behavior which is related to LP result
and HPW result. The important quantitative results in LP result
and HPW result can not be achieved using our approach. recall the
(classical) Heisenberg-Pauli-Weyl uncertainty principle:

\vspace{3 mm}
\begin{thm}[Heisenberg-Pauli-Weyl Uncertainty Principle]
\label{HPWtheorem}

If $f\in L^2(R)$ and $a,b\in R$ are arbitrary, then

\begin{equation}
(\int_{-\infty}^\infty
(x-a)^2|f(x)|^2dx)^\frac{1}{2}(\int_{-\infty}^\infty
(\omega-b)^2|\hat f(\omega)|^2d\omega)^\frac{1}{2}\geq
\frac{1}{4\pi}||f||_2^2
\end{equation}

Equality holds if and only if $f$ is a multiple of $\exp^{2\pi i
b(x-a)}\exp^{-\pi (x-a)^2/c}$ for some $a,b\in R$ and $c>0$

$\Box$

\end{thm}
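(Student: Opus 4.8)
The plan is to reduce to the centered case $a=b=0$ and then combine an integration-by-parts identity with the Cauchy--Schwarz inequality and Plancherel's theorem. First I would set $g(x)=e^{-2\pi i b x}f(x+a)$. Then $\|g\|_2=\|f\|_2$, while $\int x^2|g(x)|^2\,dx=\int(y-a)^2|f(y)|^2\,dy$ and, since $|\hat g(\omega)|=|\hat f(\omega+b)|$, also $\int\omega^2|\hat g(\omega)|^2\,d\omega=\int(\nu-b)^2|\hat f(\nu)|^2\,d\nu$. Hence it suffices to prove $\|xf\|_2\,\|\omega\hat f\|_2\ge\frac{1}{4\pi}\|f\|_2^2$, and we may assume both factors on the left are finite, or there is nothing to prove.

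The core step: assuming momentarily that $f$ is a Schwartz function, integrate by parts to obtain
\[
\|f\|_2^2=\int_{-\infty}^\infty|f(x)|^2\,dx=-\int_{-\infty}^\infty x\,\frac{d}{dx}|f(x)|^2\,dx=-2\,\mathrm{Re}\int_{-\infty}^\infty xf'(x)\overline{f(x)}\,dx,
\]
the boundary term vanishing by the decay of $f$. Cauchy--Schwarz then gives $\|f\|_2^2\le 2\,\|xf\|_2\,\|f'\|_2$, and Plancherel's theorem together with $\widehat{f'}(\omega)=2\pi i\,\omega\,\hat f(\omega)$ gives $\|f'\|_2=2\pi\|\omega\hat f\|_2$, so $\|f\|_2^2\le 4\pi\,\|xf\|_2\,\|\omega\hat f\|_2$, which is the claim. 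For a general $f\in L^2$ with both weighted integrals finite, I would note that $\omega\hat f\in L^2$ forces the distributional derivative $f'$ to lie in $L^2$ (its Fourier transform is $2\pi i\,\omega\hat f$), so $f$ agrees a.e. with an absolutely continuous function to which the computation applies; alternatively one proves the inequality for Schwartz $f$ and passes to the limit by density.

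For the equality case I would trace back through the two inequalities. Equality in Cauchy--Schwarz forces $f'(x)=\lambda\,x f(x)$ a.e. for some constant $\lambda$, and equality in $\mathrm{Re}(\cdot)\le|\cdot|$ forces $xf'(x)\overline{f(x)}=\lambda\,x^2|f(x)|^2$ to be real and $\le 0$, hence $\lambda<0$; writing $\lambda=-2\pi/c$ with $c>0$ and solving the ODE yields $f(x)=C\,e^{-\pi x^2/c}$. Undoing the reduction produces exactly the stated family $C\,e^{2\pi i b(x-a)}e^{-\pi(x-a)^2/c}$, and conversely one checks directly that each such Gaussian attains equality.

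The main obstacle is the rigorous justification of the integration by parts for a general $f\in L^2$ rather than a Schwartz function: one has to argue carefully that finiteness of $\int\omega^2|\hat f|^2$ makes a representative of $f$ absolutely continuous with $f'\in L^2$, and that the boundary terms $x|f(x)|^2$ genuinely tend to $0$ as $|x|\to\infty$. The density route sidesteps most of this for the inequality itself, but the equality statement still requires the ODE argument to be carried out on the level of an absolutely continuous representative, and some bookkeeping with the $2\pi$-normalization is needed so that the extremizers emerge precisely in the form $e^{2\pi i b(x-a)}e^{-\pi(x-a)^2/c}$.
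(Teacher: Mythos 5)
Your argument is correct: it is the standard Cauchy--Schwarz/integration-by-parts proof (reduce to $a=b=0$ by translation and modulation, use $\|f\|_2^2=-2\,\mathrm{Re}\int x f'\overline{f}$, Cauchy--Schwarz, and Plancherel with $\widehat{f'}(\omega)=2\pi i\omega\hat f(\omega)$, then the ODE $f'=\lambda x f$ with $\lambda<0$ for the equality case), and the constant $\frac{1}{4\pi}$ matches the paper's Fourier normalization $\hat f(\omega)=\int f(x)e^{-2\pi i x\omega}dx$. Note, however, that the paper does not prove this theorem at all: it is quoted as a classical result with references (Heisenberg, Kennard, Weyl, and the Folland--Sitaram survey), and the paper's contribution begins only afterwards, so there is no internal proof to compare against; your proof is essentially the one found in the cited survey. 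The technical points you flag are the right ones and are handled correctly in outline: for general $f$ with both weighted integrals finite, $\omega\hat f\in L^2$ gives an absolutely continuous representative with $f'\in L^2$, and the boundary term vanishes because $x|f(x)|^2$ has limits at $\pm\infty$ (its derivative is integrable by Cauchy--Schwarz) which must be zero by integrability of $|f|^2$; carrying the equality analysis out on this representative, as you indicate, completes the characterization of the extremizers.
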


\vspace{3 mm}

HPW uncertainty principle was the first uncertainty principle that
appeared (\cite{HeisenbergW, KennardEH} \cite[Appendix 1]{WeylH}).
Different variations of HPW uncertainty principle have been
published since then \cite[section 3]{FollandGB_SitaramA}. LP
\cite{LandauHJ_PollakHO1} defined for every $f\in L^2$ its
$\textit{time concentration}$ by:

\begin{equation}
\label{Time_concentration}
 \alpha_{T}(f)=\frac{\big(\int\limits_{-T}^{T}
|f(x)|^2dx\big)^\frac{1}{2}}{(\int\limits_{-\infty}^\infty
|f(x)|^2dx)^\frac{1}{2}}
\end{equation}

and its $\textit{frequency concentration}$ by:

\begin{equation}
\label{Frequency_concentration}
\beta_\Omega(f)=\frac{(\int\limits_{-\Omega}^\Omega |\hat
f(\omega)|^2d\omega)^\frac{1}{2}}{(\int\limits_{-\infty}^\infty
|\hat f(\omega)|^2d\omega)^\frac{1}{2}}
\end{equation}

and found explicitly the following set of points in $R^2$:

$$M_{\Omega,T}=\bigcup_{f\in L^2}(\alpha_{T}(f),\beta_\Omega(f))$$

We define the complement of $M_{\Omega,T}$ in

$D_{LP}=[0,1]\times[0,1]\setminus\{(0,1),(1,0),(1,1)\}$:

$$M'_{\Omega, T}=D_{LP}\setminus M_{\Omega, T}$$

We will call the pair $(M_{\Omega,T},M'_{\Omega,T})$ a
"possibility map"; where $M_{\Omega, T}$ is the "possible area"
and $M'_{\Omega, T}$ is the "impossible area" (note the dependence
on $\Omega$ and $T$). Of course, for individual $f\in L^2$, the
point $(\alpha_{T}(f),\beta_\Omega(f))$ depends on $T$ and
$\Omega$. LP noticed that the "possibility map" depends on the
product $c=\Omega T$ only and not on $\Omega$ and $T$ separately
($M_c=M_{\Omega,T}$, In \cite{LandauHJ_PollakHO1} the integral
bounds in the numerator of \eqref{Time_concentration} are from
$-\frac{T}{2}$ to $\frac{T}{2}$ and therefore the notation in
their papers is $c=\Omega \frac{T}{2}$). LP
\cite{LandauHJ_PollakHO1} proved an uncertainty principle of the
form:

\begin{equation}
\label{uncertainty0} c\geq\phi(\alpha,\beta)
\end{equation}

$$\phi:D_{LP}\rightarrow[0,\infty)$$

(more details is section \ref{sec3}). We will call inequalities of
the type \eqref{uncertainty0} and its generalizations that we will
introduce below "Generalized uncertainty principles". We will call
the pair $(M_c,M_c')$ "the possibility map of level c", where
$M_c$ is the "possible area of level c" and $M_c'$ is the
"impossible area of level c". We will call the set

$$PB=\{ (\alpha,\beta,c) | (\alpha,\beta) \in M_c, c\in
(0,\infty)\}$$

"a possibility body". In section \ref{sec3} we state LP result in
the original way and in a way that can be generalized. We show how
uncertainty principles (inequalities) are derived by bounding
functions which are defined on the "possible area of level c"
(where c is a parameter in the inequality). We also show that in
one natural coordinate system the "possible area of level c"  is
convex and in another natural coordinate system the "possible area
of level c" is non-convex.

In section~\ref{sec4} we show that for some general weights we
have the same qualitative behavior regarding uncertainty, and
using HPW uncertainty principle we derive the possibility map and
possibility body for the HPW weight ($x^2$). In section \ref{sec5}
we discuss the question of convexity of the possibility body and
the question of the right coordinate system for describing the
"General Uncertainty Principle" phenomenon. We find the general
uncertainty principles for homogenous weights $w_1(x)=w_2(x)=x^k$,
$k\in N$. \textit{Heisenberg-Pauli-Weyl general uncertainty
principle is a special case that corresponds to $k=2$.}

\section{Slepian-Pollak-Landau uncertainty principle}
\label{sec3}

 In a series of papers by LPS \cite{LandauHJ_PollakHO1,LandauHJ_PollakHO2,SlepianD_PollakHO} the
following integral equation was investigated:

\begin{equation}\label{IntegralEquation}
\lambda
f(t)=\frac{1}{\pi}\int^{T}_{-T}f(s)\frac{sin\Omega(t-s)}{t-s}ds
\end{equation}

on $L^2[-T,T]$

\vspace{7 mm}

They have showed that the eigenvalues of \eqref{IntegralEquation}
are distinct, positive and depend on the product $c=\Omega T$. (In
\cite{LandauHJ_PollakHO1} the integral bounds of
\eqref{IntegralEquation} are from $-\frac{T}{2}$ to $\frac{T}{2}$
and therefore the notation in their papers is $2c=\Omega T$).

LP \cite{LandauHJ_PollakHO1} proved the following uncertainty
principle:

\begin {thm}[LP Theorem]
\label{LPtheorem}
 There is a function $f$ such that
$||f||=1$, $\alpha_T(f)=\alpha$ and $\beta_\Omega(f)=\beta$, under
the following conditions, and only under the following conditions:

1) If $\alpha=0$                     and $0\leq \beta<1$

2) If $0<\alpha<\sqrt{\lambda_0}$    and $0\leq \beta \leq 1$

3) If $\sqrt{\lambda_0}\leq\alpha<1$   and
$\cos^{-1}\alpha+\cos^{-1}\beta\geq \cos^{-1}\sqrt{\lambda_0}$

4) If $\alpha=1$                       and $0<\beta\leq
\sqrt{\lambda_0}$

where $\lambda_0$ is the largest eigenvalue of
\eqref{IntegralEquation}

$\Box$
\end{thm}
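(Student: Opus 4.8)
The plan is to recast the statement in terms of the geometry of the range of the joint map $f\mapsto(\alpha_T(f),\beta_\Omega(f))$ and to reduce everything to two ingredients: (i) a two-subspace angle computation for the orthogonal projections $P$ (time limiting to $[-T,T]$) and $Q$ (band limiting to $[-\Omega,\Omega]$), and (ii) the spectral information about the commuting-after-composition operator $PQP$, whose eigenvalues are exactly the $\lambda_j(c)$ of the integral equation \eqref{IntegralEquation}. First I would observe that, writing $P$ and $Q$ for these projections, we have $\alpha_T(f)^2=\langle Pf,f\rangle/\|f\|^2$ and $\beta_\Omega(f)^2=\langle Qf,f\rangle/\|f\|^2$, so without loss of generality $\|f\|=1$ and the question is: for which pairs $(\alpha,\beta)\in[0,1]^2$ does there exist a unit vector $f$ with $\|Pf\|=\alpha$ and $\|Qf\|=\beta$? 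This is a purely Hilbert-space question about two subspaces, and the only arithmetic input special to this setting is that $\sigma(PQP)\subset[0,\lambda_0]$ with $\lambda_0<1$ (distinctness and positivity of eigenvalues from LPS), that $0$ is not isolated / the essential spectrum reaches $0$, and that the top eigenvalue $\lambda_0$ is attained.

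Next I would handle the four cases by the standard ``angle between subspaces'' dichotomy. Put $\theta_P=\cos^{-1}\alpha$ and $\theta_Q=\cos^{-1}\beta$, i.e. $\alpha=\cos\theta_P$ is the cosine of the angle from $f$ to the range of $P$, and likewise for $Q$. The triangle-type inequality to prove is that a unit $f$ with prescribed $\|Pf\|=\alpha$, $\|Qf\|=\beta$ exists iff the angle $\theta_P$ to $\mathrm{ran}(P)$, the angle $\theta_Q$ to $\mathrm{ran}(Q)$, and the minimal angle $\cos^{-1}\sqrt{\lambda_0}$ between $\mathrm{ran}(P)$ and $\mathrm{ran}(Q)$ satisfy the spherical triangle inequality $\theta_P+\theta_Q\ge\cos^{-1}\sqrt{\lambda_0}$ — except in the degenerate regimes where one of $\alpha,\beta$ is $0$ or $1$. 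The ``only if'' direction is the clean half: if $f$ is within angle $\theta_P$ of $\mathrm{ran}(P)$ and within angle $\theta_Q$ of $\mathrm{ran}(Q)$, and the two ranges are at minimal mutual angle $\gamma=\cos^{-1}\sqrt{\lambda_0}$, then by the triangle inequality on the sphere (for the geodesic metric $\mathrm{dist}(u,v)=\cos^{-1}|\langle u,v\rangle|$ restricted appropriately, or via a direct Cauchy–Schwarz/$2\times2$ reduction) one cannot have $\theta_P+\theta_Q<\gamma$; this is where case 3 gets its inequality, and it also forces $\beta\le\sqrt{\lambda_0}$ when $\alpha=1$ and $\beta<1$ when $\alpha=0$. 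The ``if'' direction — realizability — is done by an explicit construction: take the Slepian function $\varphi_0$ realizing the minimal angle (the top eigenfunction of $PQP$), and build $f$ as a suitable unit-norm combination of $\varphi_0$, a vector in $\mathrm{ran}(P)\cap\ker Q$ or a small-$\lambda$ eigenfunction, and a vector orthogonal to both ranges, tuning the coefficients to hit any admissible $(\alpha,\beta)$; continuity of $(\alpha,\beta)$ as a function of the coefficients plus an intermediate-value argument on the relevant curve closes it.

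The main obstacle I expect is the sharpness at the boundary, i.e. proving that the curve $\cos^{-1}\alpha+\cos^{-1}\beta=\cos^{-1}\sqrt{\lambda_0}$ is actually attained (not merely approached) for every $\alpha\in[\sqrt{\lambda_0},1)$, and simultaneously that nothing strictly inside the ``forbidden'' region is attained. Attainment uses that $\lambda_0$ is a genuine eigenvalue with eigenfunction $\varphi_0$, so $\|P\varphi_0\|\|Q\varphi_0\|$-type extremal configurations exist honestly; the exclusion uses a variational/Lagrange-multiplier argument showing that any $f$ extremizing $\|Qf\|$ subject to $\|f\|=1$, $\|Pf\|=\alpha$ must be supported on at most a two-dimensional invariant subspace spanned by eigenvectors of $PQP$, on which the problem is a finite trigonometric optimization whose optimum is exactly the stated curve. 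The degenerate cases 1, 2, 4 then fall out by noting: when $\alpha^2<\lambda_0$ the constraint $\theta_P>\cos^{-1}\sqrt{\lambda_0}\ge$ (anything up to $\theta_Q=0$) is vacuous so all $\beta\in[0,1]$ occur; when $\alpha=1$ we need $f\in\mathrm{ran}(P)$, so $\beta^2=\|Qf\|^2=\langle PQPf,f\rangle\le\lambda_0$ with every value in $(0,\sqrt{\lambda_0}]$ attained by spectral calculus on $PQP$; and when $\alpha=0$, $f\perp\mathrm{ran}(P)$ forces $\beta<1$ since $\mathrm{ran}(P)^\perp\cap\mathrm{ran}(Q)=\{0\}$ (no nonzero function is both band-limited and vanishes on $[-T,T]$, by analyticity).
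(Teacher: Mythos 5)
First, a point of comparison you could not have known: the paper does not prove Theorem~\ref{LPtheorem} at all --- it is quoted verbatim from Landau--Pollak \cite{LandauHJ_PollakHO1} (hence the $\Box$ directly after the statement), so there is no internal proof to measure you against; your sketch is in effect a reconstruction of Landau and Pollak's own argument. The skeleton you propose is the right one. Writing $P=D_T$, $Q=B_\Omega$ as orthogonal projections, $\alpha=\|Pf\|$, $\beta=\|Qf\|$ for unit $f$, and using $\|QP\|^2=\|PQP\|=\lambda_0$, the sharp ``only if'' in case 3 does follow exactly as you say: apply the triangle inequality for the projective metric $d(u,v)=\cos^{-1}|\langle u,v\rangle|$ to the three unit vectors $f$, $u=Pf/\|Pf\|$, $v=Qf/\|Qf\|$, noting $|\langle u,v\rangle|\le\|QP\|=\sqrt{\lambda_0}$; this gives $\cos^{-1}\alpha+\cos^{-1}\beta\ge\cos^{-1}\sqrt{\lambda_0}$ with the correct constant (the cruder estimate $\beta\le\sqrt{\lambda_0}\,\alpha+\sqrt{1-\alpha^2}$ would not be sharp). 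The degenerate exclusions via analyticity ($\mathrm{ran}(P)\cap\mathrm{ran}(Q)=\{0\}$, and likewise $\mathrm{ran}(P)\cap\ker Q=\{0\}$, $\ker P\cap\mathrm{ran}(Q)=\{0\}$) are also correct and give $\beta<1$ in case 1 and $\beta>0$, $\beta^2=\langle PQPf,f\rangle\le\lambda_0$ in case 4.

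The gap is in the realizability (``if'') half, which is where the actual content of the theorem lies, and one ingredient you name is simply unavailable: $\mathrm{ran}(P)\cap\ker Q=\{0\}$ by the very analyticity argument you invoke for case 1, so there is no such vector to mix in (your fallback ``or a small-$\lambda$ eigenfunction'' is the correct repair, but then nothing is automatic and the limiting/continuity arguments must be done, since $\inf_n\lambda_n=0$ is not attained). Exact attainment of the boundary curve for every $\alpha\in[\sqrt{\lambda_0},1)$, not just its approximation, needs an explicit two-dimensional computation: with $\varphi_0$ the top band-limited prolate, set $g_1=P\varphi_0/\sqrt{\lambda_0}$, $g_2=(I-P)\varphi_0/\sqrt{1-\lambda_0}$ and $f=\cos\theta\,g_1+\sin\theta\,g_2$; then $\|Pf\|=\cos\theta$ and $Qf=\cos(\theta-\gamma_0)\varphi_0$ with $\gamma_0=\cos^{-1}\sqrt{\lambda_0}$, which realizes equality $\cos^{-1}\alpha+\cos^{-1}\beta=\gamma_0$ for all $\theta\in[0,\gamma_0]$. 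Interior and edge values ($\beta=0$ with $0<\alpha<1$, $\beta=1$ with $0<\alpha<\sqrt{\lambda_0}$, etc.) then require separate constructions (e.g.\ modulated bumps with spectrum strictly outside $[-\Omega,\Omega]$, translations, and intermediate-value arguments), and your Lagrange-multiplier claim that extremizers live in a two-dimensional invariant subspace of $PQP$ is asserted rather than proved. So the plan is faithful to the classical proof, but as written it is a sketch whose hardest steps --- exact attainability in all four cases --- are exactly the ones left at the level of intention, and one of its stated building blocks does not exist.
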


\vspace{7 mm}

The complexity of computing the biggest eigenvalue, $\lambda_0$,
of \eqref{IntegralEquation} for fixed $c$ grows rapidly with c.
Algorithms for computing $\lambda_0$ for fixed $c$ can be found at
\cite{BouwkampCJ,KhareK_GeorgeN},\cite[and the references
therein]{KarouiA_MoumniT}. There are no error estimates and no
complexity analysis at the literature. W.H.J.Fuchs \cite{FuchsWHJ}
proved the following asymptotic formula:

\begin {thm}[Fuchs Theorem]
\label{Fuchs_theorem}
 Let $\lambda_0\geq \lambda_1\geq \lambda_2
 \geq \lambda_3 ...$ be the eigenvalues of the integral equation
 \eqref{IntegralEquation}. Then
\begin{equation}
\label{asymptotic_formula}
 1-\lambda_n\sim 4\pi^{1/2}8^n(n!)^{-1}c^{n+1/2}e^{2c}
\end{equation}
as $c\rightarrow \infty$.

$\Box$
\end{thm}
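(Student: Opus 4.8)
The plan is to replace the integral equation \eqref{IntegralEquation} by the differential equation with which it shares its eigenfunctions, and then to run a matched (uniform WKB) analysis as $c\to\infty$. First I would invoke the Slepian--Pollak fact that the eigenfunctions $\psi_n$ of \eqref{IntegralEquation} are precisely the prolate spheroidal wave functions, i.e.\ the functions that are regular on $[-T,T]$ and eigenfunctions of the self-adjoint operator $\cL_c\psi=\big((1-t^2/T^2)\psi'\big)'+(\chi_n-c^2t^2/T^2)\psi$; rescaled to $[-1,1]$ this is the prolate equation $\big((1-x^2)\psi'\big)'+(\chi_n-c^2x^2)\psi=0$ with parameter $c=\Omega T$. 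Normalize by $\int_{-T}^{T}|\psi_n|^2=1$. The indicial exponents of the prolate equation at $x=\pm1$ are both $0$, and the prolate function is the branch free of logarithms, so $\psi_n$ is analytic up to the endpoints; in particular its zero-extension $\tilde\psi_n$ to $\RR$ lies in $L^2$ and Plancherel applies to it, with $\psi_n(\pm T)\neq0$.

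Next I would rewrite $1-\lambda_n$ as an out-of-band energy and then reduce it to a single endpoint value. The kernel in \eqref{IntegralEquation} is, up to a constant, the reproducing kernel of the band-limiting projection, so the eigenvalue relation reads $P_{[-T,T]}B\,\tilde\psi_n=\lambda_n\tilde\psi_n$ (with $B$ the band-limiting to $|\omega|\le\Omega/(2\pi)$ in the paper's Fourier normalization); pairing with $\tilde\psi_n$ and using $\tilde\psi_n=P_{[-T,T]}\tilde\psi_n$ gives $\lambda_n=\|B\tilde\psi_n\|^2$, whence
$$1-\lambda_n=\int_{|\omega|>\Omega/(2\pi)}|\widehat{\tilde\psi_n}(\omega)|^2\,d\omega .$$
Now integrate $\widehat{\tilde\psi_n}(\omega)=\int_{-T}^{T}\psi_n(t)e^{-2\pi i\omega t}\,dt$ by parts, using the definite parity of $\psi_n$ and the differential equation to bound the remainder; the leading term is $\widehat{\tilde\psi_n}(\omega)=\dfrac{\psi_n(T)}{\pi\omega}\big(\sin 2\pi\omega T\ \text{or}\ \cos 2\pi\omega T\big)+O(\omega^{-2})$, and substituting, together with $\int_{c}^{\infty}u^{-2}\sin^2u\,du=\tfrac{1}{2c}+O(c^{-2})$, collapses the tail integral to
$$1-\lambda_n=C\,\Omega^{-1}\,\psi_n(T)^2\,\big(1+o(1)\big)$$
for an explicit elementary constant $C$. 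Thus everything is reduced to the asymptotics of the normalized endpoint value $\psi_n(T)$ as $c\to\infty$.

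The remaining, and hardest, step is to evaluate $\psi_n(T)$; this is where $8^n$, $n!$, $\pi^{1/2}$, $c^{\,n+1/2}$ and the exponential factor of \eqref{asymptotic_formula} are manufactured. Work on $[-1,1]$. In the bulk $x=O(c^{-1/2})$ the prolate equation degenerates, after $x=u/\sqrt c$, to the harmonic oscillator $\psi_{uu}+(\chi_n/c-u^2)\psi\approx0$, forcing $\chi_n=(2n+1)c+O(1)$ and $\psi_n\propto H_n(\sqrt c\,x)e^{-cx^2/2}$, and the normalization then fixes the bulk amplitude as $c^{1/4}(2^nn!\sqrt\pi)^{-1/2}\big(1+o(1)\big)$, which accounts for the $\sqrt\pi$ and the $2^nn!$. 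Between the turning point $x_t\sim\sqrt{(2n+1)/c}$ and $x=1$ the solution is exponentially small, of Liouville--Green form $A(x)\exp\!\big(-\!\int_{x_t}^{x}\sqrt{(c^2s^2-\chi_n)/(1-s^2)}\,ds\big)$; since $\int_0^1 s(1-s^2)^{-1/2}\,ds=1$ the action equals $c$ to leading order, which after squaring produces the $e^{-2c}$, while the subleading $(2n+1)\log c$-type terms in the action and the WKB amplitude $A$ supply the power of $c$. The delicate points --- and the main obstacle --- are that the crude expansion of the integrand is not uniform near $x_t$ (several of its terms each contribute $O(1)$ to the action there), so the turning-point region must be treated by the uniform parabolic-cylinder (Olver-type) approximation in order to harvest the exact connection constant between the oscillator core and the WKB region: this is what converts $2^nn!$ into $8^n/n!$ and brings in the extra $\pi^{1/2}$ and $c^{\,n+1/2}$; and, separately, $x=1$ is a regular singular point, so a Frobenius/Bessel local analysis there is needed to connect the WKB solution to the analytic prolate function and read off $\psi_n(1)$. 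Assembling the three matched expansions, squaring, and inserting the result into $1-\lambda_n=C\,\Omega^{-1}\psi_n(T)^2(1+o(1))$ with $c=\Omega T$ yields \eqref{asymptotic_formula}. (One could instead start from Bouwkamp's power-series/continued-fraction representation of the prolate functions and its known large-$c$ expansion; the analytic difficulty --- rigorous, uniform error control through the turning point and the singular endpoint --- is the same.)
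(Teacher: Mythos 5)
First, a point of comparison: the paper does not prove this theorem at all --- it is imported verbatim from Fuchs \cite{FuchsWHJ} --- so there is no internal proof to measure your argument against; note also that the formula as printed must read $e^{-2c}$ rather than $e^{2c}$ (otherwise $1-\lambda_n\le 1$ is violated), which is what Fuchs proves, what the paper's own figure caption uses ($1-4\pi^{1/2}c^{1/2}e^{-2c}$), and what your sketch would produce. Your overall architecture --- pass from \eqref{IntegralEquation} to the prolate differential equation, express $1-\lambda_n$ through boundary/exterior data of the eigenfunction, and extract the endpoint asymptotics by matching a Hermite-oscillator core, a Liouville--Green region, a uniform turning-point analysis, and a local analysis at the regular singular point $x=\pm1$ --- is indeed the classical route to \eqref{asymptotic_formula}. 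But as written it is a programme rather than a proof: every constituent of the formula ($4\pi^{1/2}$, $8^n/n!$, $c^{n+1/2}$, $e^{-2c}$) is manufactured exactly in the steps you explicitly defer (the parabolic-cylinder connection through the turning point and the Frobenius connection at $x=1$), and no error control uniform in $c$ is offered anywhere.

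One step would moreover fail as stated. In your reduction $1-\lambda_n=\int_{|\omega|>\Omega/(2\pi)}|\widehat{\tilde\psi_n}|^2$, the one-term integration by parts $\widehat{\tilde\psi_n}(\omega)=\psi_n(T)(\cdots)/(\pi\omega)+O(\omega^{-2})$ is not uniform where it matters: the $O(\omega^{-2})$ term carries $\psi_n'(\pm T)$, and the prolate equation at the endpoint gives $2T\psi_n'(T)=(\chi_n-c^2)\psi_n(T)$, so $|\psi_n'(T)|\asymp (c^2/T)\,|\psi_n(T)|$. At frequencies $\omega\asymp\Omega=c/T$, where an order-one fraction of the out-of-band energy sits, this "remainder" is larger than the retained boundary term by a factor $\asymp c$, so the tail integral does not collapse to $C\,\Omega^{-1}\psi_n(T)^2(1+o(1))$ by this argument. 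To make the reduction rigorous you need a structural identity rather than crude integration by parts: e.g.\ use that $\widehat{\tilde\psi_n}$ is proportional (on and off the band) to the entire prolate function, equivalently that the unit-energy-on-$[-T,T]$ bandlimited extension $\psi_n$ satisfies $1-\lambda_n=\lambda_n\int_{|t|>T}\psi_n^2\,dt$, or invoke Slepian's derivative formula relating $\partial\lambda_n/\partial c$ to the endpoint value; only then can the (still to be proved, uniformly in $c$) endpoint asymptotics be inserted to yield \eqref{asymptotic_formula}. So the proposal identifies the right road map, but the decisive analytic content of Fuchs' theorem is not yet supplied.
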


We used H.Xiao, V.Rokhlin and N.Yarvin (XRY)
\cite{XiaoH_RokhlinV_YarvinN} algorithm for computing $\lambda_0$
as a function of $c\in[0,6]$ and compered it to the asymptotic
formula \eqref{asymptotic_formula} (see Figure
\ref{lambda_0_as_a_function_of_c_fig}). We used $k=300$ for the
complexity parameter in XRY \cite{XiaoH_RokhlinV_YarvinN}
algorithm (see equation (54) and section 4 therein) . The
asymptotic formula is a good approximation for $\lambda_0$
starting from small $c's$. The relative difference (i.e.
(Numerical $\lambda_0$ - asymptotic $\lambda_0$)/(1- asymptotic
$\lambda_0$)) is decreasing up to $c=5$ and then it starts to
increase. We believe that with higher complexity resources then us
(We used PC) one should use numerical algorithms for $c\leq 10$.
We believe that for $c>10$ the asymptotic formula is good enough.

 Possibility maps for
different values of $\lambda_0$ are plotted in Figure
\ref{Concentration_alpha2_possibility_maps_fig}. Below we will
introduce possibility maps in term of spreading. This is an
important difference between our approach to LPS approach.

\begin{figure}[!h]
  \centerline{
    \mbox{\includegraphics[scale=0.4]{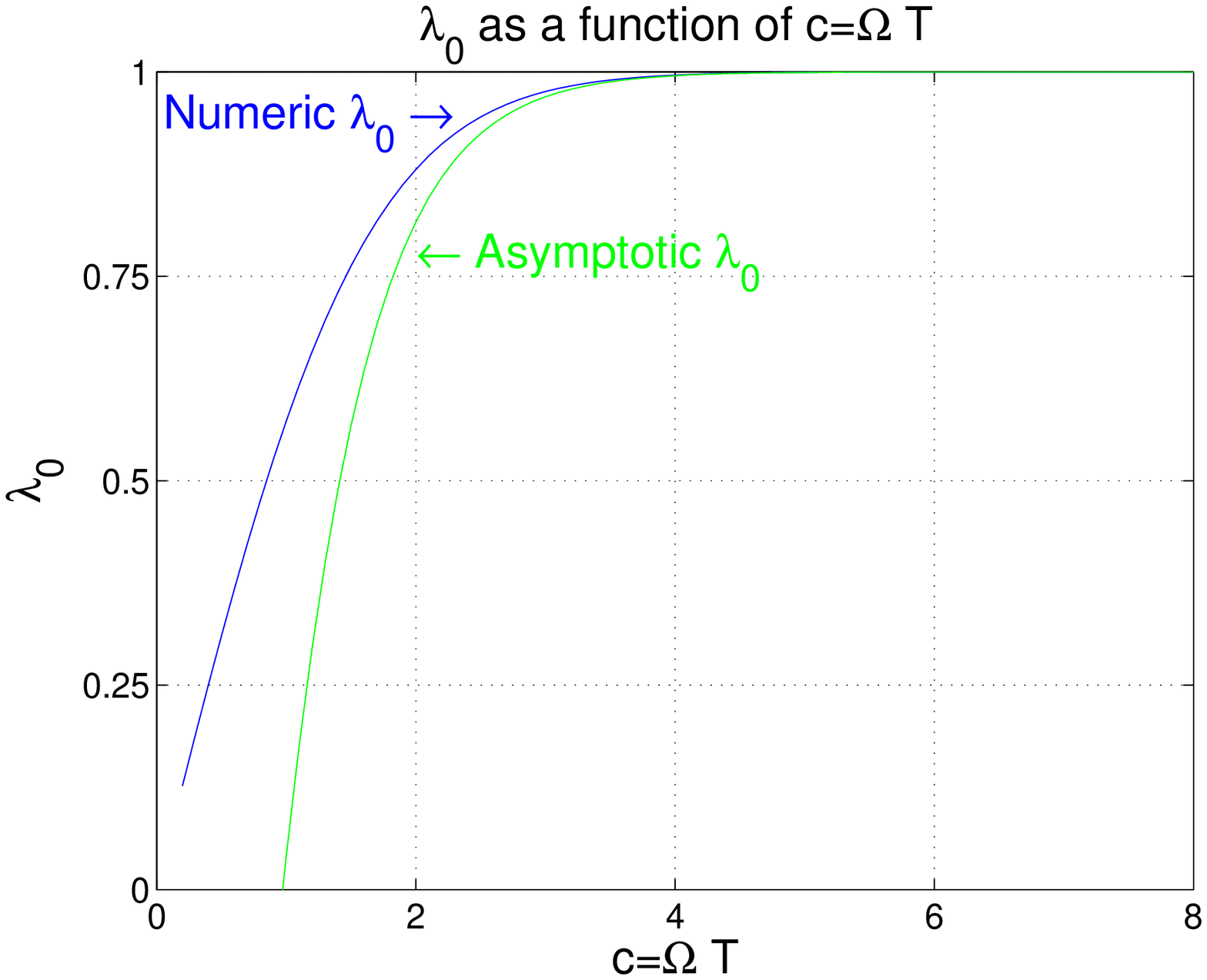}}
    \mbox{\includegraphics[scale=0.4]{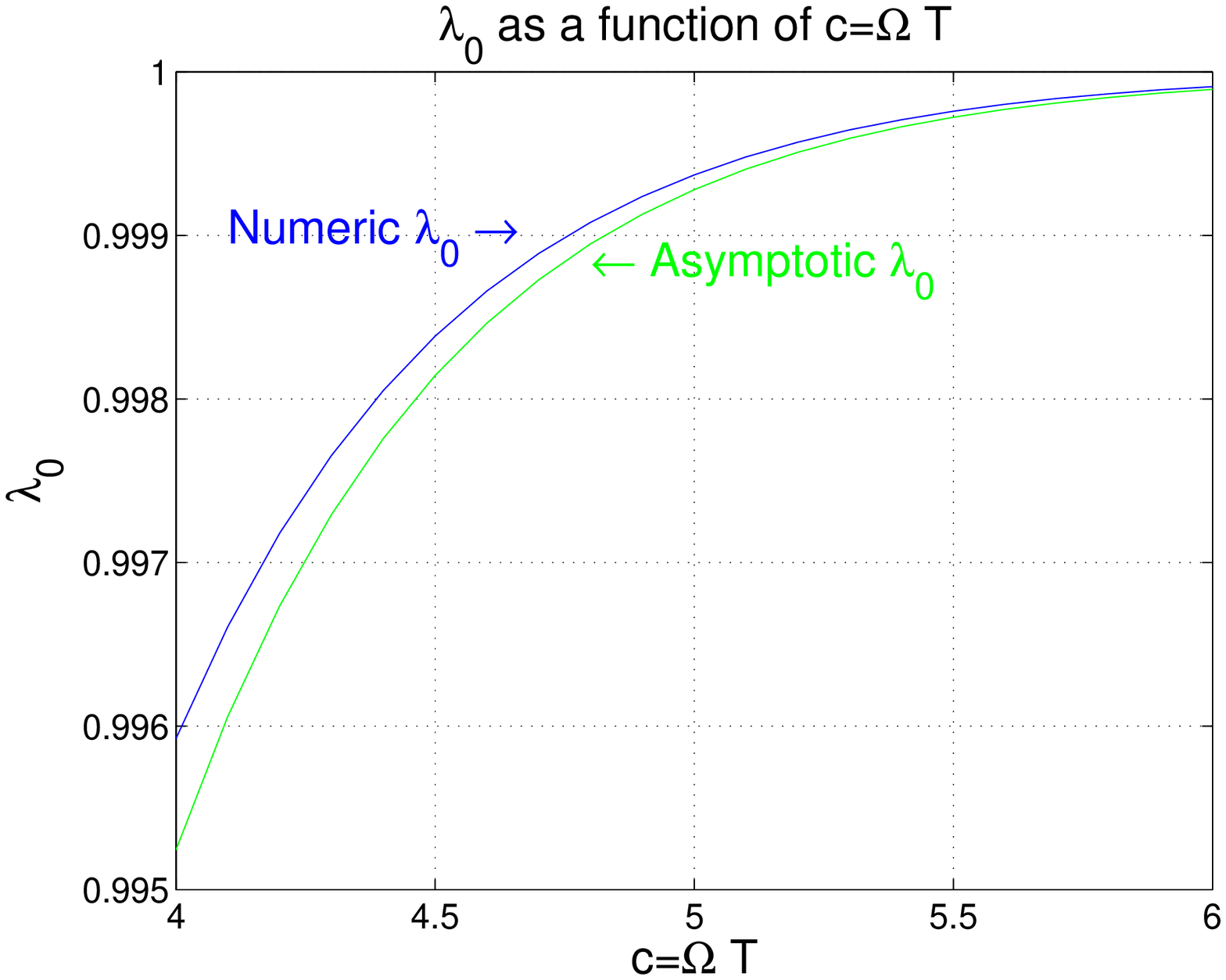}}
  }
    \centerline{
    \mbox{\includegraphics[scale=0.4]{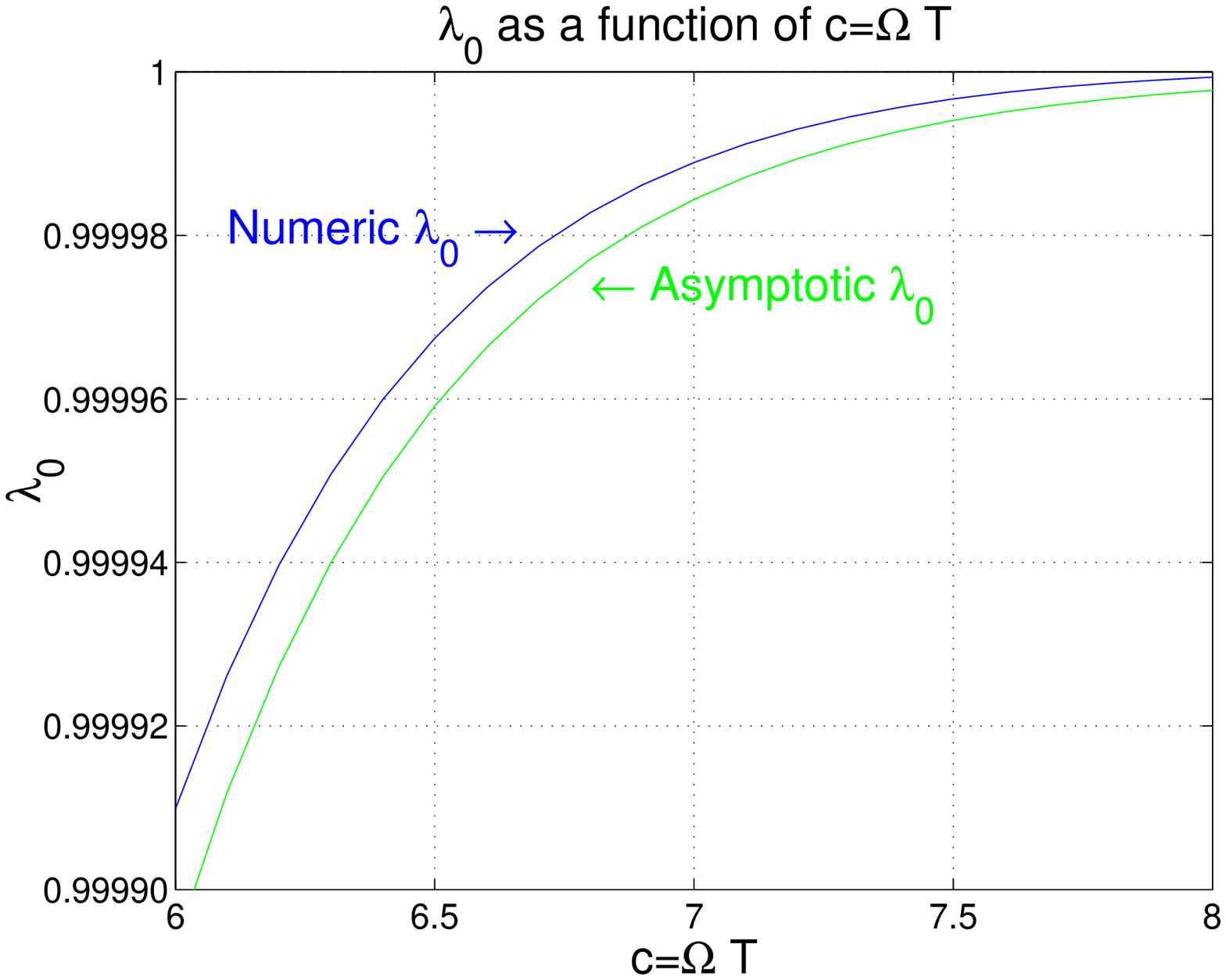}}
    \mbox{\includegraphics[scale=0.4]{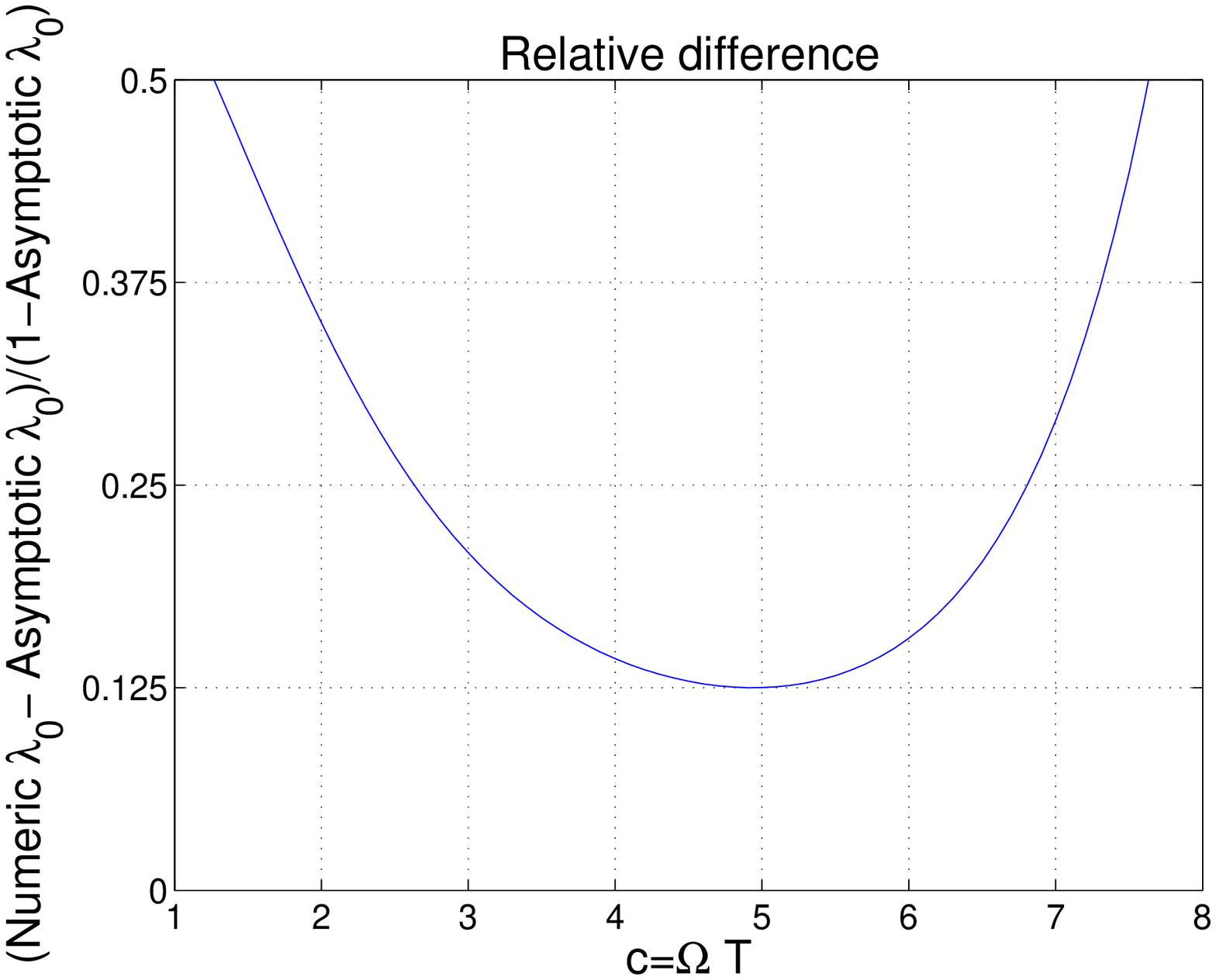}}
  }
  \caption{ $\lambda_0$ as a function of $c=\Omega T$ and the relative difference between the asymptotic formula ($1-4\pi^{1/2}c^{1/2}e^{-2c}$) and the numerical calculations .}
  \label{lambda_0_as_a_function_of_c_fig}
\end{figure}

LP \cite{LandauHJ_PollakHO1} mentioned that their theorem can be
used for describing the function $\phi(\alpha,\beta)$ that can be
used for writing the uncertainty principle in the form:

\begin{equation}
\label{uncertainty1}
 c=\Omega T\geq \phi(\alpha,\beta)
\end{equation}

or equivalently (to match Figure
\ref{Concentration_alpha2_possibility_maps_fig}; LP used the
concentration parameters $\alpha^2_T$ and $\beta^2_\Omega$. We
will discuss the parameters issue in section~\ref{sec5}):

 \begin{figure}[!h]
  \centerline{
    \mbox{\includegraphics[scale=0.4]{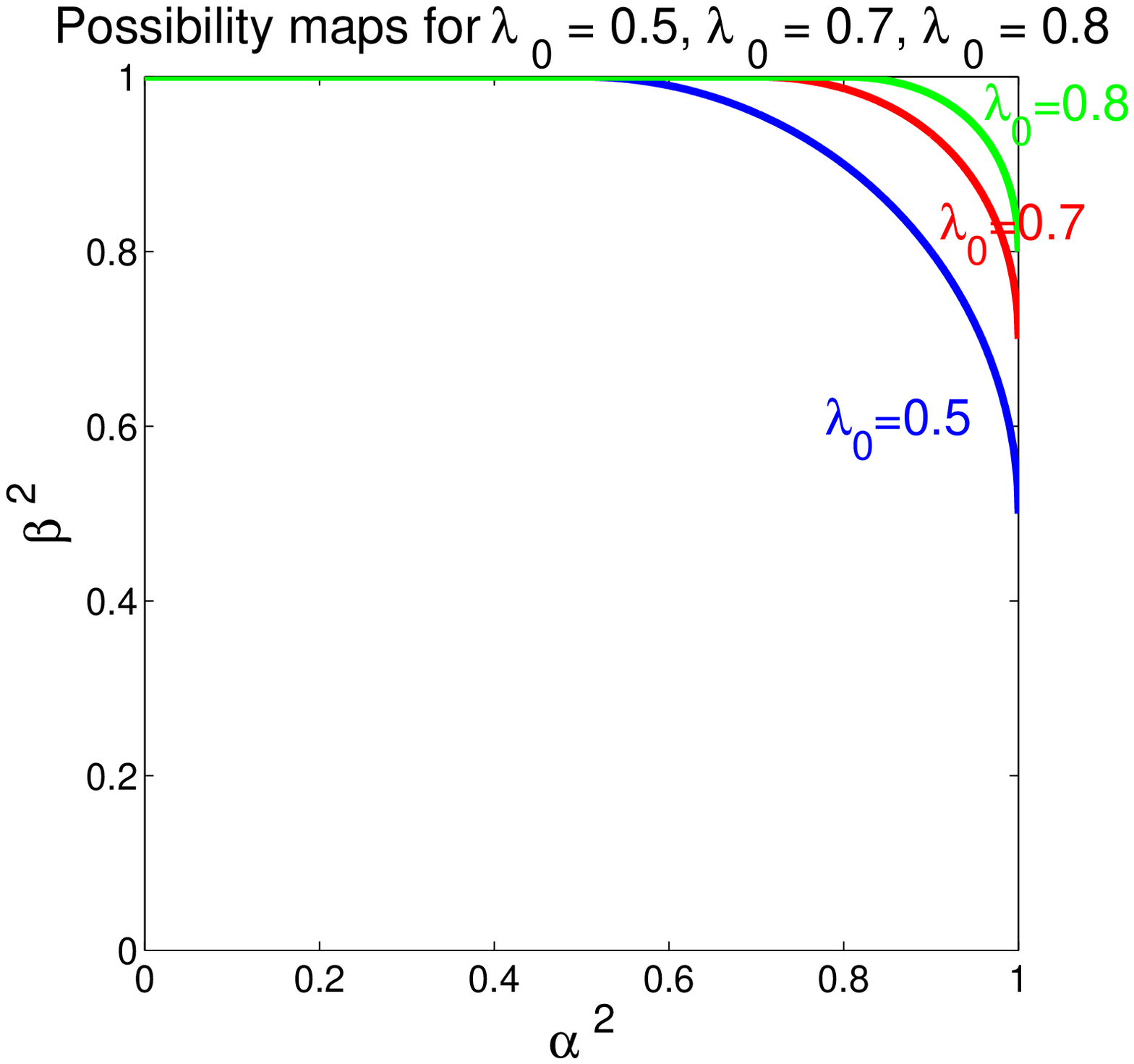}}
    \mbox{\includegraphics[scale=0.4]{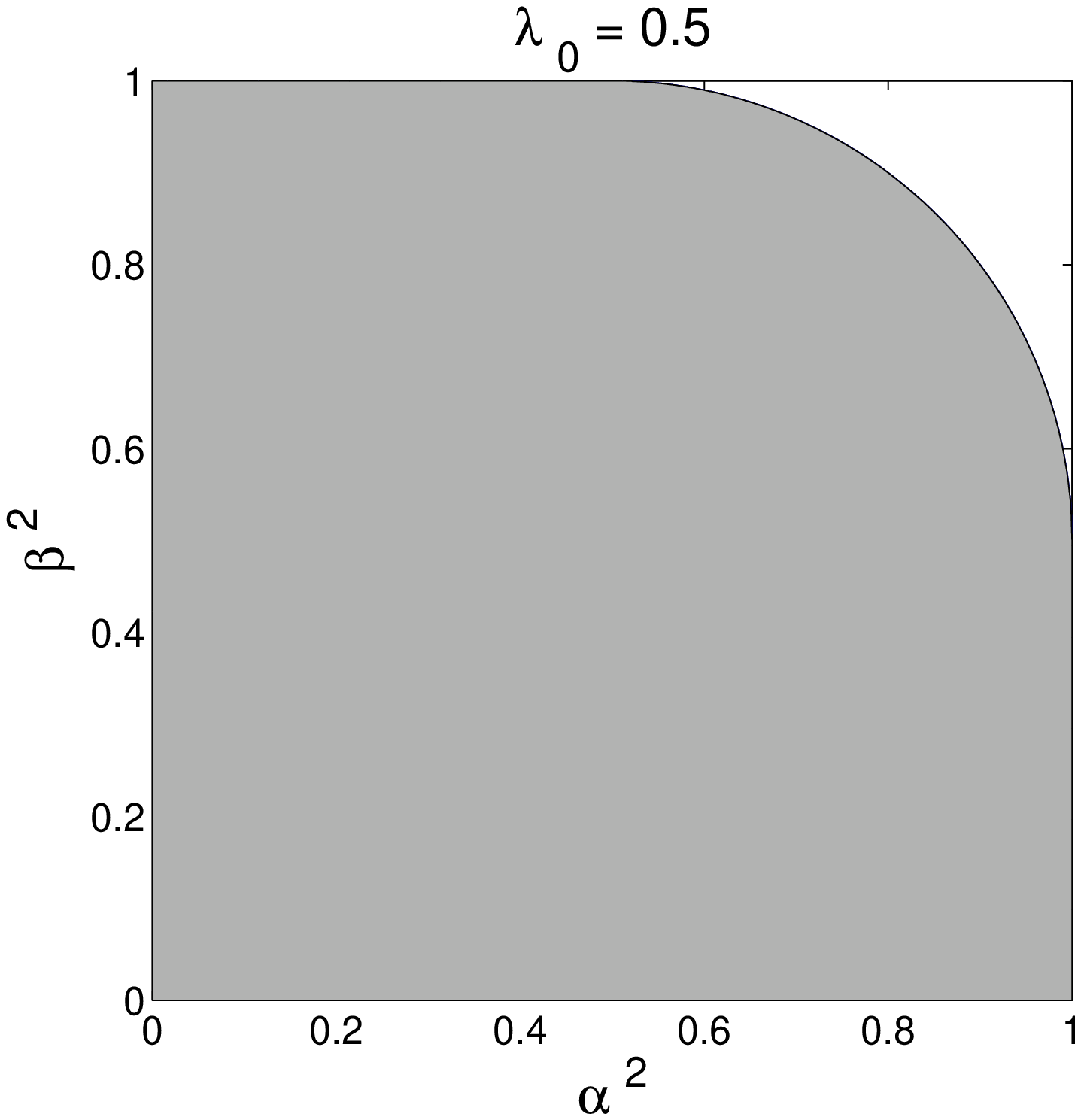}}
  }
  \centerline{
    \mbox{\includegraphics[scale=0.4]{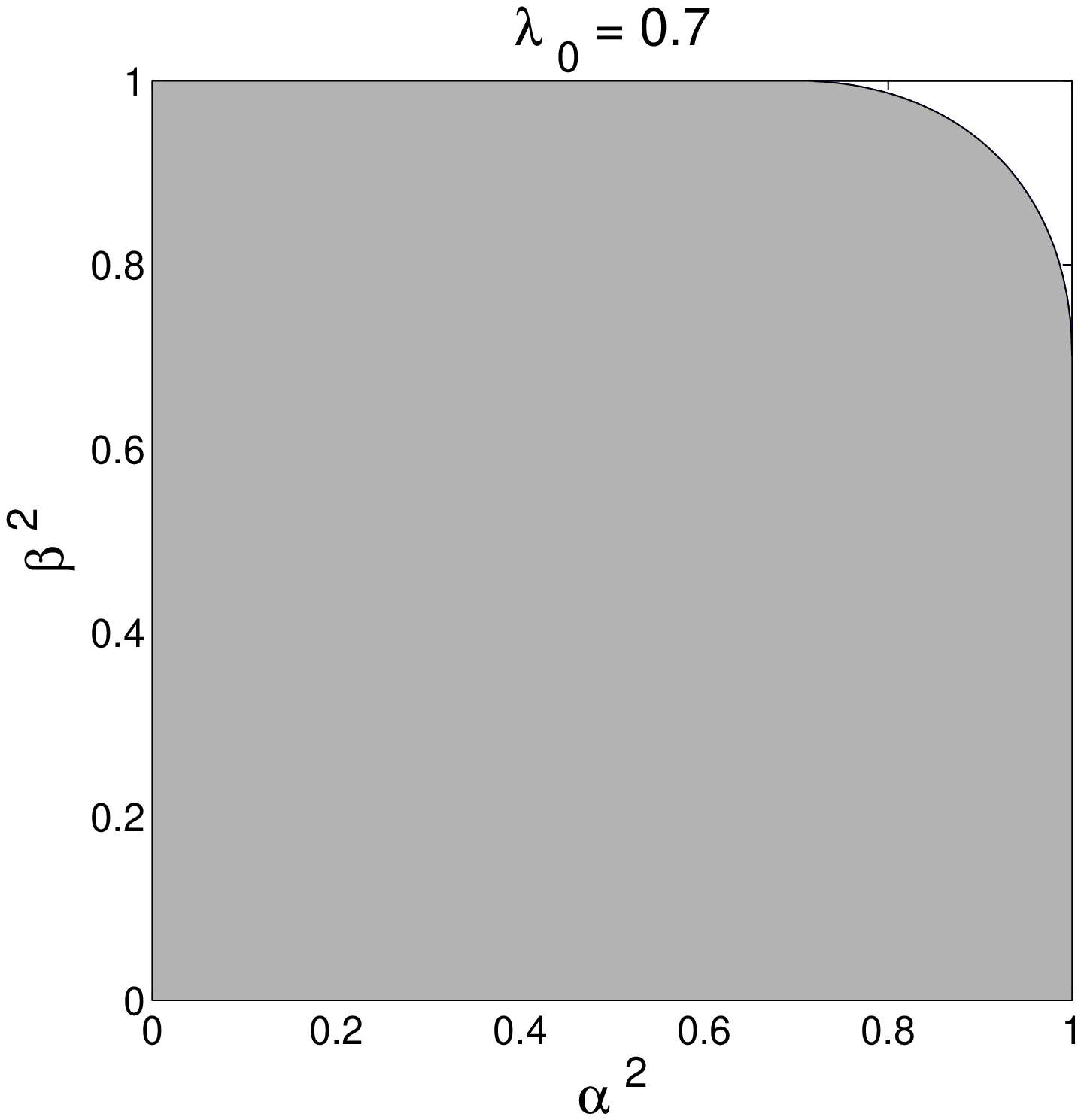}}
    \mbox{\includegraphics[scale=0.4]{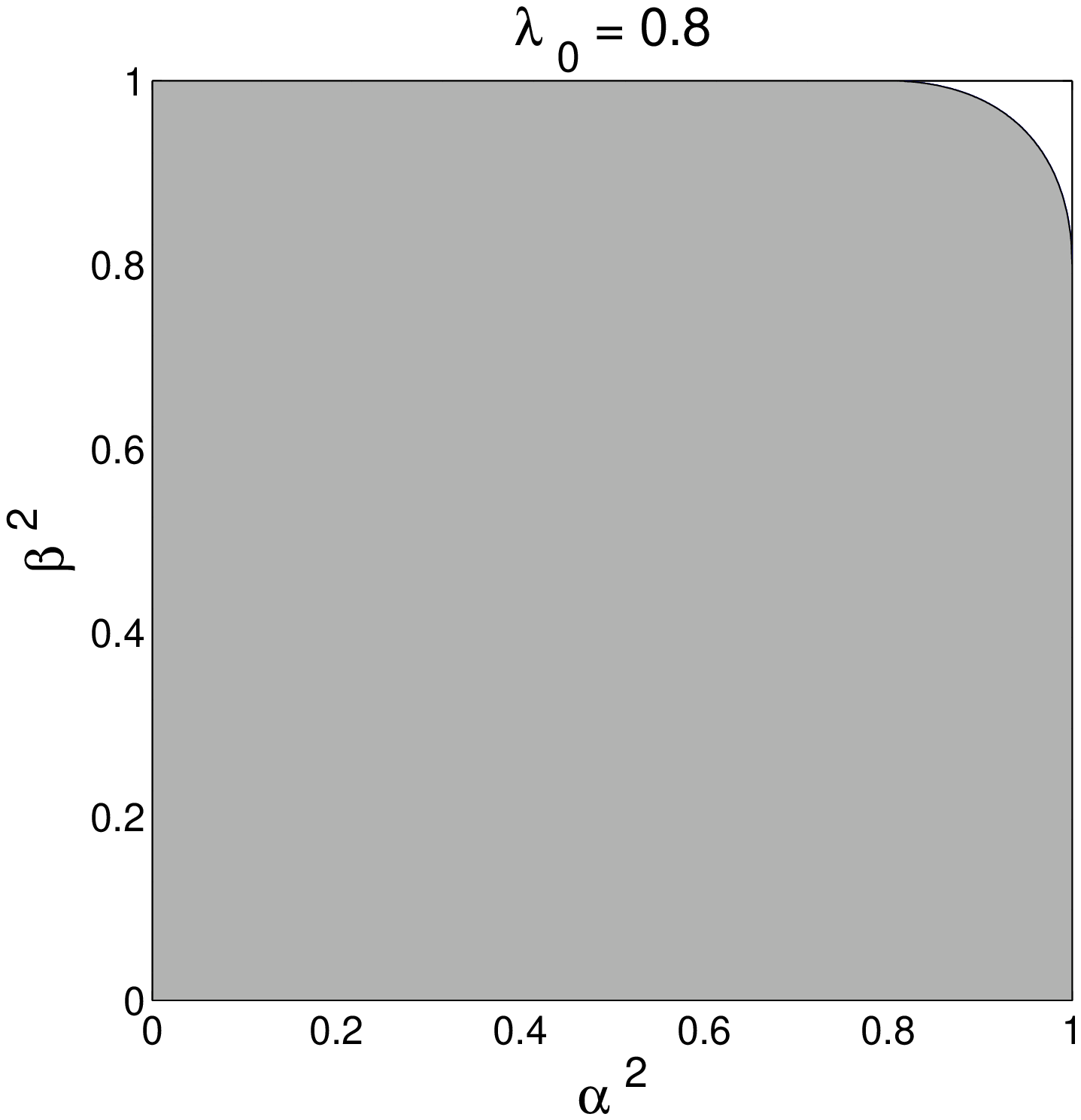}}
  }
  \caption{ Possibility maps for $\lambda_0=0.5, \lambda_0=0.7,
\lambda_0=0.8$ and the concentration parameters $\alpha^2$ and
$\beta^2$.}
 \label{Concentration_alpha2_possibility_maps_fig}
\end{figure}

\begin{equation}
\label{uncertainty2} c=\Omega T\geq
\Phi(\alpha^2,\beta^2)=\phi(\alpha,\beta)
\end{equation}

To the inequalities \ref{uncertainty1} and \ref{uncertainty2}
above and to their generalizations that will be introduced below
we will call $\textit{Generalized Uncertainty Principles}$.

LP used the concentrations $\alpha_T(f)$
\eqref{Time_concentration} and $\beta_\Omega(f)$
\eqref{Frequency_concentration} for formulating their generalized
uncertainty principle. The notion of $\textit{spreading}$ is in
some sense dual to the notion of $\textit{concentration}$. In
section \ref{sec4} we will consider uncertainty principles for
general weights using the spreading notion and describe the
properties of the weights, $w_1(x)$ and $w_2(\omega)$
(see~\eqref{Time_spreading1} and~\eqref{Frequency_spreading1}),
rigourously. In this section we define the time spreading
\eqref{Time_spreading1} and frequency spreading
\eqref{Frequency_spreading1} and use specific weights
\eqref{LPtime_weight} and \eqref{LPfrequency_weight} that fits to
LP Theorem. We will restate LP Theorem, using the spreadings
$\gamma(f,a)$ and $\zeta(f,b)$ (see below \eqref{Time_spreading1},
\eqref{Frequency_spreading1}, \eqref{LPtime_weight} and
\eqref{LPfrequency_weight}). Below we will redefine the terms
\textit{"possibility map"}, \textit{"possibility body"} and
\textit{"generalized uncertainty principle"} in an obvious way
which fits the notion of spreading instead of the notion of
concentration. We will use "the spreading language" from now on.

We define the time weight, $w_{1LP}(x)$, and the frequency weight,
$w_{2LP}(\omega)$ as follows:

\begin{equation}
\label{LPtime_weight} w_{1LP}(x)= \left\{
\begin{array}{ll}
         1 & \mbox{if $x \geq 1$};\\
        0 & \mbox{else}.\end{array} \right.
\end{equation}

\begin{equation}
\label{LPfrequency_weight} w_{2LP}(\omega)= \left\{
\begin{array}{ll}
         1 & \mbox{if $\omega \geq 1$};\\
        0 & \mbox{else}.\end{array} \right.
\end{equation}

We define the time spreading function as

\begin{defn}
\label{Time_spreading}
\begin{equation}
\label{Time_spreading1}
\gamma(f,a)=\frac{\big(\int_{-\infty}^\infty |f(x)|^2 S_a
w_1(x)dx\big)^\frac{1}{2}}{||f||}
\end{equation}
\end{defn}

and the frequency spreading as

\begin{defn}
\label{Frequency_spreading}
\begin{equation}
\label{Frequency_spreading1}
\zeta(f,b)=\frac{\big(\int_{-\infty}^\infty |\hat f(\omega)|^2 S_b
w_2(\omega)d\omega\big)^\frac{1}{2}}{||f||}
\end{equation}
\end{defn}

and see that in the case that $w_1(x)=w_{1LP}(x)$ and
$w_2(\omega)=w_{2LP}(\omega)$ we have
$\gamma^2(f,T)=1-\alpha^2_T(f)$ and
$\zeta^2(f,\Omega)=1-\beta^2_\Omega(f)$.

Therefore, defining $\Psi(\gamma^2,\zeta^2)$ and
$\psi(\gamma,\zeta)$ on

$D_{LP^*}=[0,1]\times[0,1]\setminus \{(0,0),(0,1),(1,0)\} $ by

\begin{equation*}
\Psi(\gamma^2,\zeta^2) = \Phi(1-\gamma^2,1-\zeta^2)
\end{equation*}

\begin{equation*}
\psi(\gamma,\zeta)=\Psi(\gamma^2,\zeta^2)
\end{equation*}

we get the LP uncertainty principle in the "spreading language":

\vspace{3 mm}

\begin{thm}[LP* Theorem]
\label{LP2theorem} There is a function $f$ such that $||f||=1$,
$\gamma(f,T)=\gamma$ and $\zeta(f,\Omega)=\zeta$ under the
following conditions and only under the following conditions:

1) $\gamma=0$ and $\sqrt{1-\lambda_0}\leq\zeta<1$

2) $0<\gamma\leq \sqrt{1-\lambda_0}$ and
$\cos^{-1}\sqrt{1-\gamma^2}+\cos^{-1}\sqrt{1-\zeta^2}\geq
\cos^{-1}\sqrt{\lambda_0}$

3) $\sqrt{1-\lambda_0}< \gamma <1$ and $0\leq\zeta\leq 1$

4) $\gamma=1$ and $ 0\leq 1$

where $\lambda_0$ is the largest eigenvalue of
\eqref{IntegralEquation}
\end{thm}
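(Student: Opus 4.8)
The plan is to obtain Theorem~\ref{LP2theorem} as a direct transcription of Theorem~\ref{LPtheorem} (LP Theorem) under the substitution $\alpha=\sqrt{1-\gamma^2}$, $\beta=\sqrt{1-\zeta^2}$; no new analysis is needed beyond what LP already proved, only a careful change of coordinates.

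First I would record the dictionary between ``spreading'' and ``concentration''. For the weights \eqref{LPtime_weight} and \eqref{LPfrequency_weight}, the scaled weight $S_T w_{1LP}$ equals (a.e.) the indicator of $\{x:\ |x|\ge T\}$, so that
\begin{equation*}
\int_{-\infty}^\infty |f(x)|^2 S_T w_{1LP}(x)\,dx=\|f\|^2-\int_{-T}^{T}|f(x)|^2\,dx ,
\end{equation*}
whence $\gamma^2(f,T)=1-\alpha_T^2(f)$; applying the identical computation to $\hat f$ with \eqref{LPfrequency_weight} gives $\zeta^2(f,\Omega)=1-\beta_\Omega^2(f)$. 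Consequently, for a fixed $f$ with $\|f\|=1$, one has $\gamma(f,T)=\gamma$ and $\zeta(f,\Omega)=\zeta$ \emph{if and only if} $\alpha_T(f)=\sqrt{1-\gamma^2}$ and $\beta_\Omega(f)=\sqrt{1-\zeta^2}$. Thus the existence assertion of Theorem~\ref{LP2theorem} is literally the existence assertion of Theorem~\ref{LPtheorem} read through this substitution.

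Next I would note that $t\mapsto\sqrt{1-t^2}$ is a decreasing homeomorphism of $[0,1]$ onto itself, and that the induced map $(\gamma,\zeta)\mapsto(\sqrt{1-\gamma^2},\sqrt{1-\zeta^2})$ carries $D_{LP^*}$ bijectively onto $D_{LP}$: the excluded points $(0,0),(0,1),(1,0)$ of the $(\gamma,\zeta)$ square go to $(1,1),(1,0),(0,1)$, exactly the points removed to form $D_{LP}$. Then I would translate the four defining clauses of Theorem~\ref{LPtheorem} one at a time, using the elementary equivalences $\sqrt{1-\gamma^2}=0\iff\gamma=1$, $\sqrt{1-\gamma^2}=1\iff\gamma=0$, $\sqrt{1-\gamma^2}<\sqrt{\lambda_0}\iff\gamma>\sqrt{1-\lambda_0}$, $\sqrt{1-\gamma^2}\ge\sqrt{\lambda_0}\iff\gamma\le\sqrt{1-\lambda_0}$ (and the same with $\zeta,\beta$ in place of $\gamma,\alpha$), together with the fact that $\cos^{-1}\alpha=\cos^{-1}\sqrt{1-\gamma^2}$ and $\cos^{-1}\beta=\cos^{-1}\sqrt{1-\zeta^2}$, so that the angle inequality in clause~3) of Theorem~\ref{LPtheorem} transcribes verbatim into clause~2) of Theorem~\ref{LP2theorem}. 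Carrying this out sends clause~1) of Theorem~\ref{LPtheorem} ($\alpha=0$) to clause~4) here ($\gamma=1$), clause~2) ($0<\alpha<\sqrt{\lambda_0}$) to clause~3) ($\sqrt{1-\lambda_0}<\gamma<1$), clause~3) to clause~2), and clause~4) ($\alpha=1$) to clause~1) ($\gamma=0$). Since the four clauses of Theorem~\ref{LPtheorem} are exhaustive on $D_{LP}$ and their images cover $D_{LP^*}$, this is precisely the statement of Theorem~\ref{LP2theorem}.

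The only place requiring attention is the endpoint bookkeeping: matching half-open against closed intervals for the $\zeta$- (resp.\ $\beta$-) ranges after the substitution, and keeping track of which corner points are excluded — for instance, when $\gamma=1$ the point $(1,0)$ is not in $D_{LP^*}$, so $\zeta$ runs over $(0,1]$, which is exactly the image of the range $0\le\beta<1$ of clause~1) in Theorem~\ref{LPtheorem}. I expect no analytic obstacle whatsoever, since all the hard content already resides in Theorem~\ref{LPtheorem}; the work is purely the change of variables and this boundary accounting.
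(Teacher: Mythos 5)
Your proposal is correct and follows essentially the same route as the paper: the paper obtains Theorem~\ref{LP2theorem} from Theorem~\ref{LPtheorem} precisely by observing that for the weights \eqref{LPtime_weight} and \eqref{LPfrequency_weight} one has $\gamma^2(f,T)=1-\alpha^2_T(f)$ and $\zeta^2(f,\Omega)=1-\beta^2_\Omega(f)$, and then transcribing the four clauses (and the domains $D_{LP}$, $D_{LP^*}$) under this substitution. Your endpoint bookkeeping, including reading the garbled clause~4) as $0<\zeta\le 1$, matches the intended statement.
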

$\Box$


and transforming equation~\eqref{uncertainty2} to an equivalent
general uncertainty principle in the "spreading language" we get:

\begin{equation}
\label{uncertainty3} c=\Omega T \geq
\Psi(\gamma^2,\zeta^2)=\psi(\gamma,\zeta)
\end{equation}

In section \ref{sec4} we will see that the "spreading point of
view" of LP theorem relates it to HPW uncertainty principle and to
general uncertainty principles that are described there. We will
redefine some of the terms that we used in the introduction, using
the "spreading language"

We redefine the "possible area of level c":

\begin{defn}
\label{LPpossible_area} The possible area of level c is the set:
$$M_c=M_{\Omega,T}=\bigcup_{f\in L^2}(\gamma(f,T),\zeta(f,\Omega))$$
\end{defn}

and redefine the "impossible area of level c", the complement of
$M_{\Omega,T}$ in $D_{LP^*}$:

\begin{defn}
\label{LPimpossible_area} The impossible area of level c is the
set:
$$M'_c=D_{LP^*}\setminus M_c$$
\end{defn}

 and call the pair $(M_c,M'_c)$ the possibility map of
level c.

We redefine the "possibility body" by:

\begin{defn}
\label{LPpossibility_body} The possibility body is the set:
\begin{equation}
 PB=\{ (\alpha,\beta,c) | (\alpha,\beta) \in M_c, c\in
(0,\infty)\}
\end{equation}
\end{defn}

 Notice that the "possible areas" in Figure~\ref{Concentration_alpha2_possibility_maps_fig}
(different $\lambda_0$'s corresponds to different $c$'s, see
Figure~\ref{lambda_0_as_a_function_of_c_fig}) are convex. Analytic
derivation of this fact is given below, in the example after
Theorem \ref{LPuncertainty}. To uncertainty inequalities of the
form \eqref{uncertainty3},we will call \textit{"Generalized
Uncertainty Principle"} (we redefined our definition that follows
\eqref{uncertainty2} to fit the "spreading language").

\begin{figure}[!h]
\label{Spreading_zeta_possibility_maps_fig}
  \centerline{
    \mbox{\includegraphics[scale=0.4]{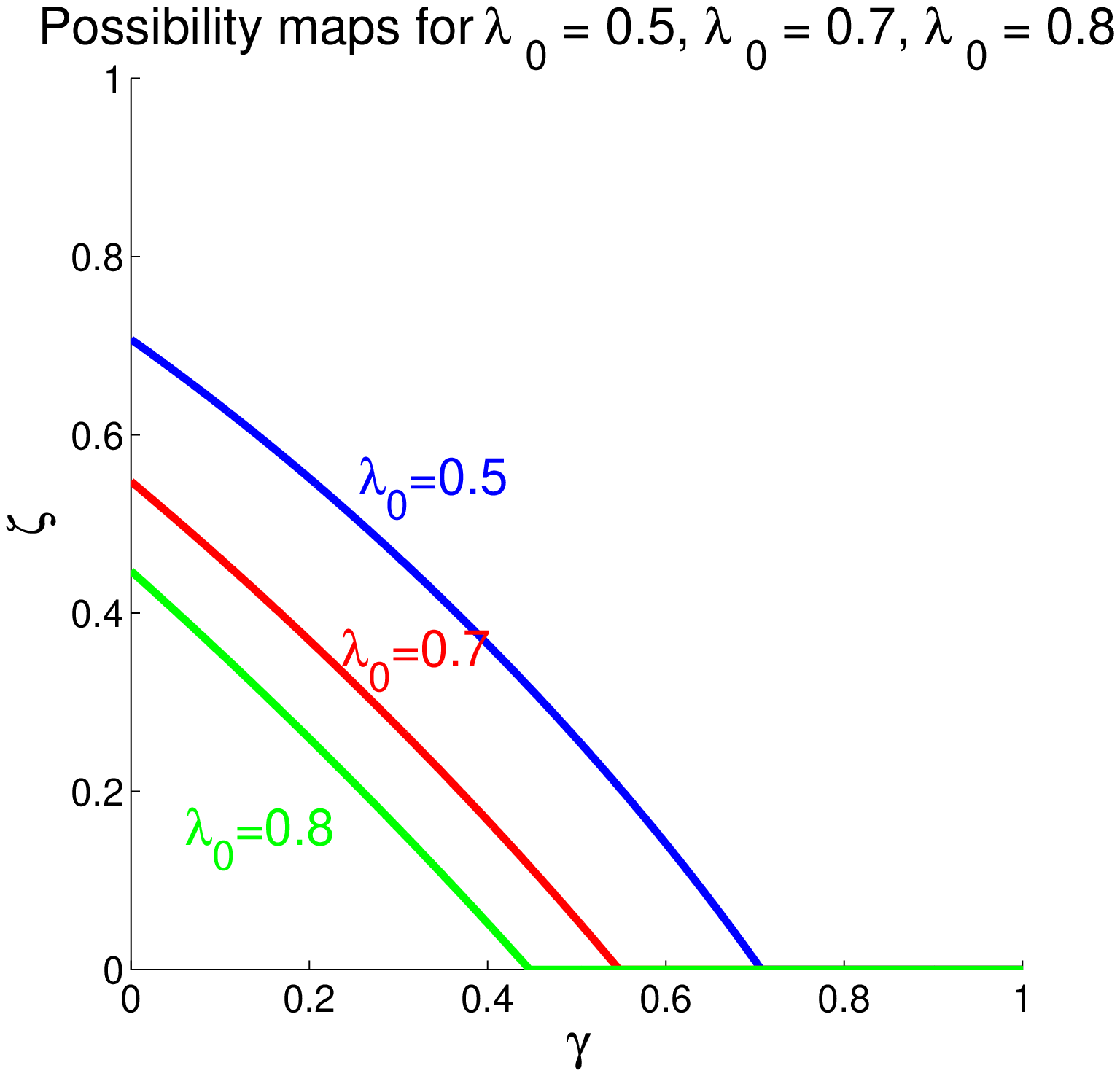}}
    \mbox{\includegraphics[scale=0.4]{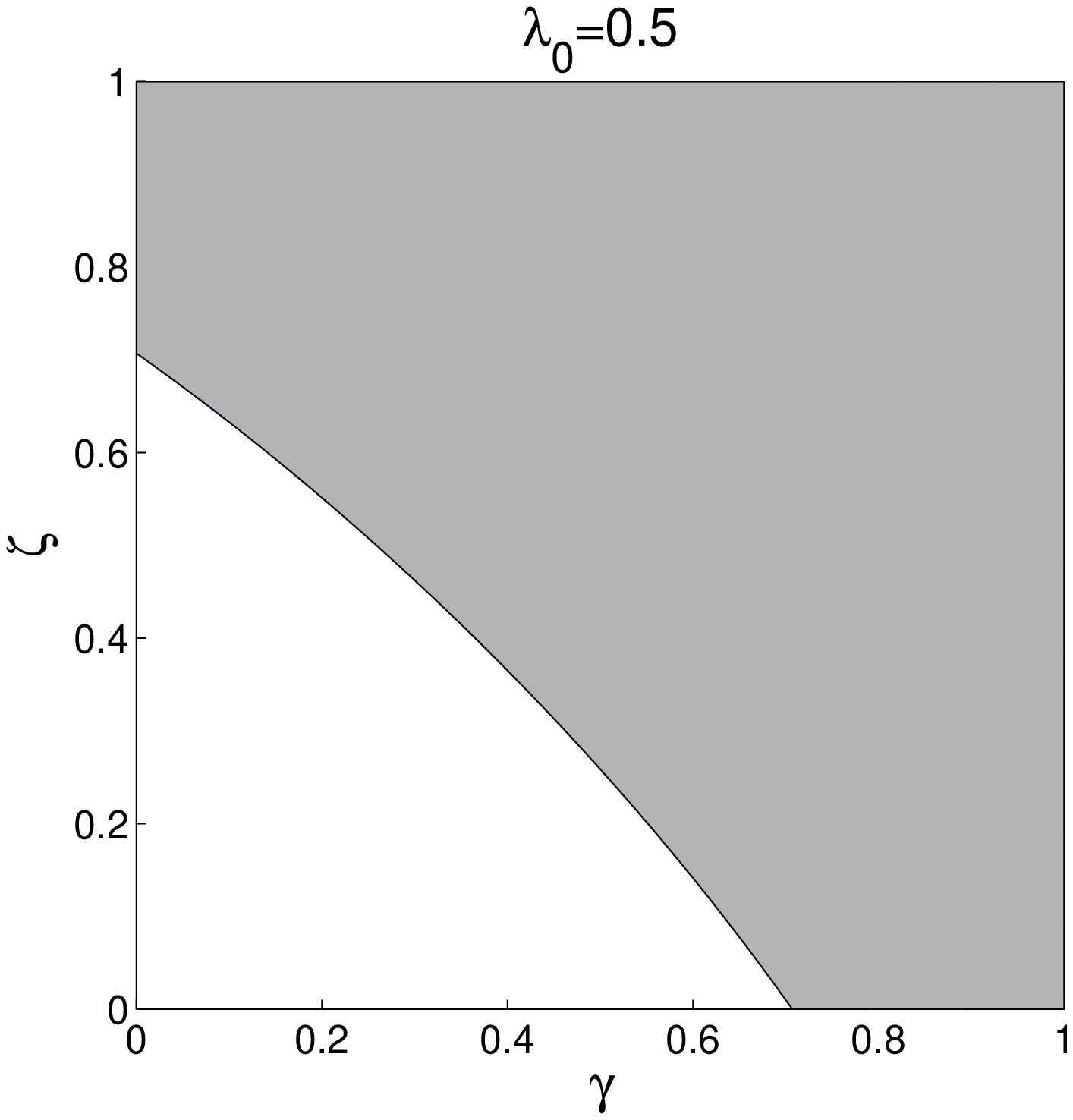}}
  }
  \centerline{
    \mbox{\includegraphics[scale=0.4]{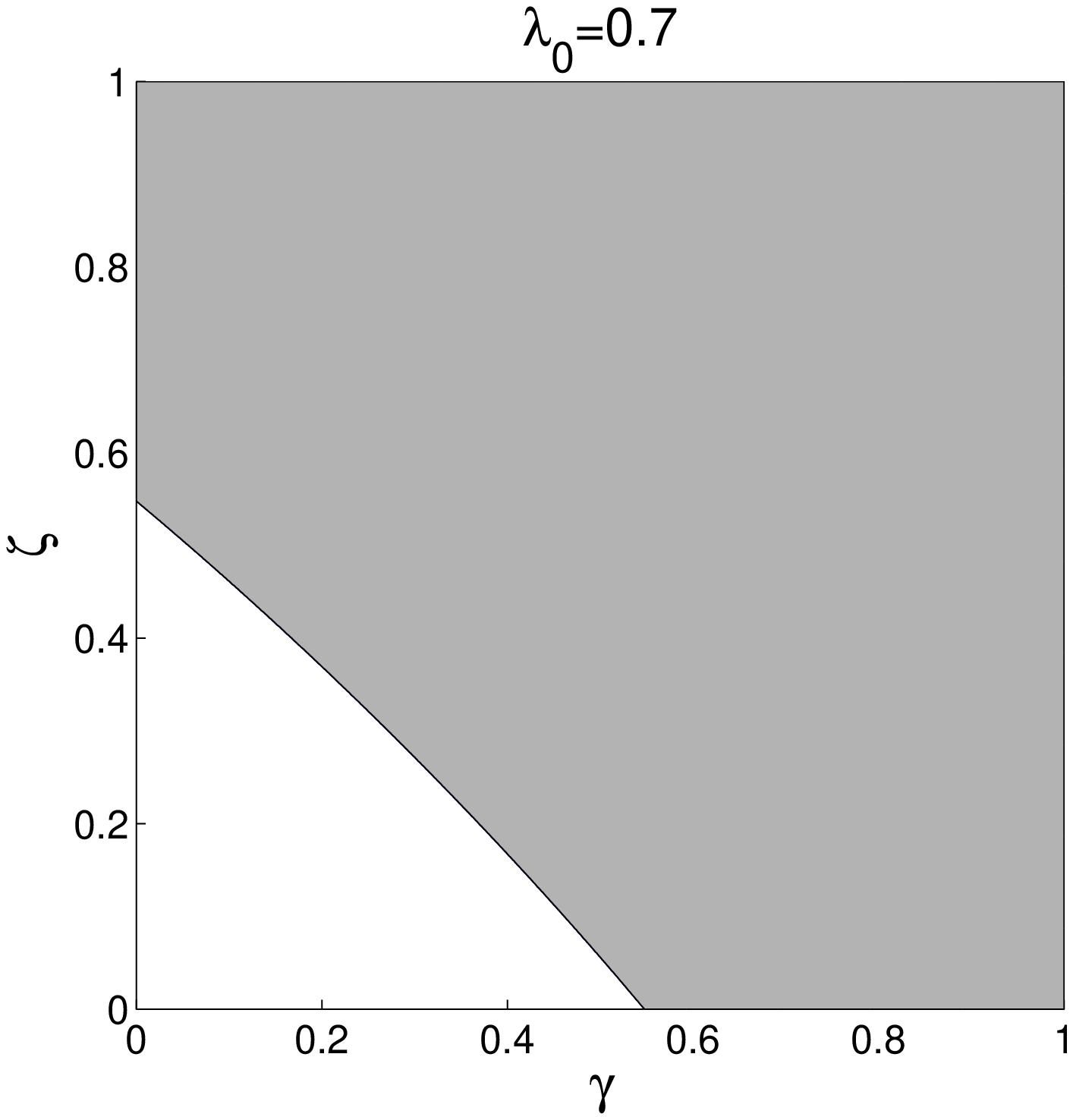}}
    \mbox{\includegraphics[scale=0.4]{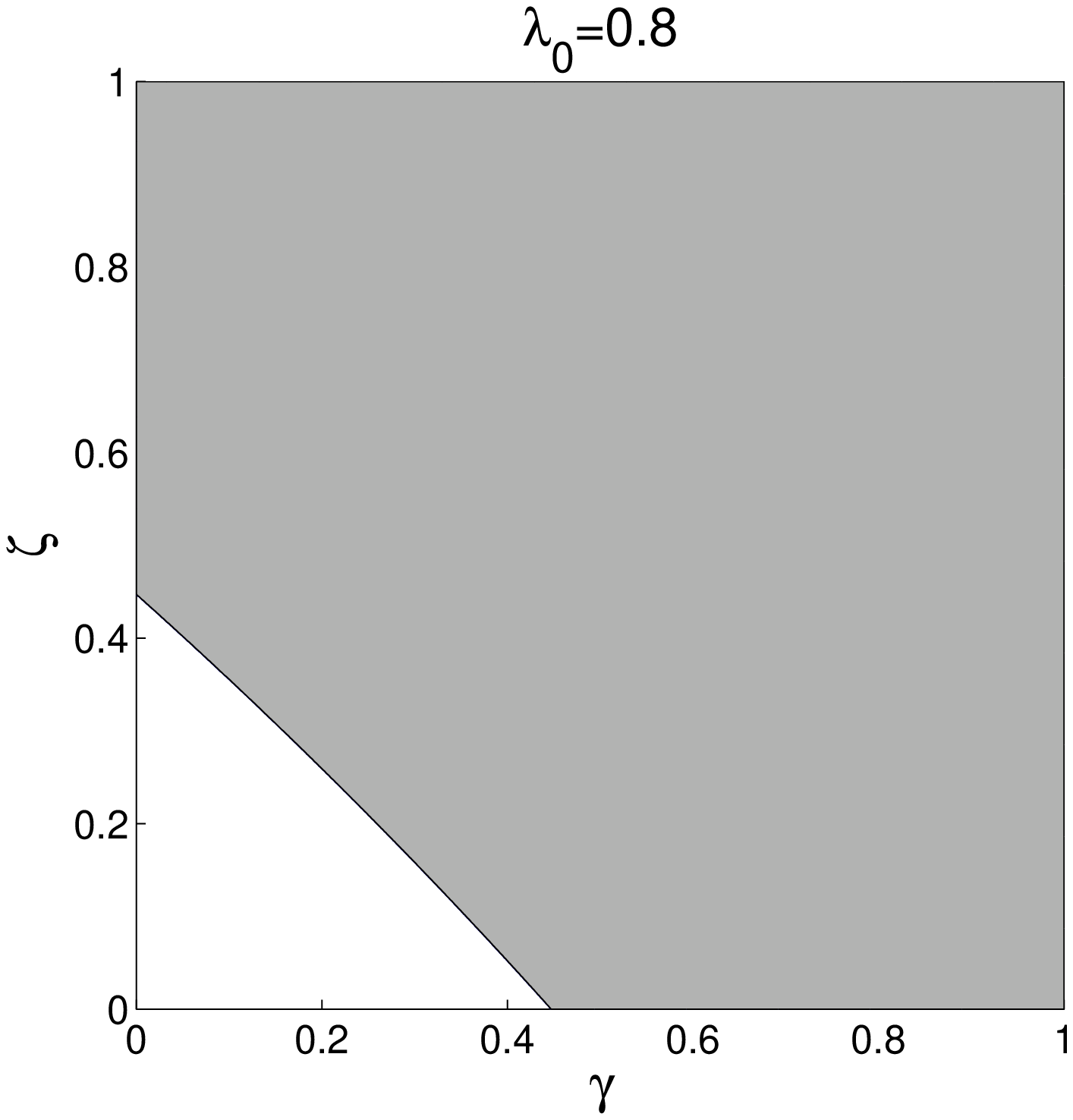}}
  }
  \caption{ Possibility map for $\lambda_0=0.5$, $\lambda_0=0.7$,
$\lambda_0=0.8$ using the coordinates (spreading parameters)
$\gamma$ and $\zeta$.}
\end{figure}

\vspace{7 mm}

$\psi$ is a non-increasing function of $\gamma$ for fixed $\zeta$
and a non-increasing function of $\zeta$ for fixed $\gamma$ since
$\lambda_0(c)$ is an non-decreasing function (see
Figure~\ref{lambda_0_as_a_function_of_c_fig})

Now we will derive classical uncertainty principles from the
general uncertainty principle of LP.

\vspace{7 mm}

\begin{thm}
\label{LPuncertainty}
 $\forall c=\Omega T>0$ we have:

\begin{equation}
\label{LPuncertainty_consequence} \sqrt{\int_{R\setminus [-T,T]}
|f(x)|^2dx}+ \sqrt{\int_{R\setminus [-\Omega ,\Omega]} |\hat
f(\omega)|^2 d\omega}\ge \sqrt{1-\lambda_0}||f||
\end{equation}

\end {thm}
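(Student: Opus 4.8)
The plan is to deduce the inequality \eqref{LPuncertainty_consequence} directly from the LP* Theorem (Theorem \ref{LP2theorem}), using the dictionary $\gamma^2(f,T)=1-\alpha_T^2(f)$ and $\zeta^2(f,\Omega)=1-\beta_\Omega^2(f)$ recorded just before that theorem. Since both sides of \eqref{LPuncertainty_consequence} are homogeneous of degree one in $f$, it suffices to treat $\|f\|=1$, in which case the claim becomes $\gamma(f,T)+\zeta(f,\Omega)\ge\sqrt{1-\lambda_0}$ for every $f$ with $\|f\|=1$. Equivalently, writing $\gamma=\gamma(f,T)$ and $\zeta=\zeta(f,\Omega)$, I want to show that the point $(\gamma,\zeta)$ produced by any admissible $f$ lies in the half-plane $\{\gamma+\zeta\ge\sqrt{1-\lambda_0}\}$; that is, the line $\gamma+\zeta=\sqrt{1-\lambda_0}$ is a supporting line of the possible area $M_c$ from below.

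First I would dispose of the boundary cases 1, 3, 4 of Theorem \ref{LP2theorem}. In case 1 we have $\zeta\ge\sqrt{1-\lambda_0}$, so $\gamma+\zeta\ge\zeta\ge\sqrt{1-\lambda_0}$; in cases 3 and 4 we have $\gamma>\sqrt{1-\lambda_0}$ (resp.\ $\gamma=1\ge\sqrt{1-\lambda_0}$, using $\lambda_0\le 1$), so again $\gamma+\zeta\ge\gamma\ge\sqrt{1-\lambda_0}$. The substantive case is case 2, where $0<\gamma\le\sqrt{1-\lambda_0}$ and the constraint reads
\[
\cos^{-1}\!\sqrt{1-\gamma^2}+\cos^{-1}\!\sqrt{1-\zeta^2}\ \ge\ \cos^{-1}\!\sqrt{\lambda_0}.
\]
Here I would introduce the angles $u=\sin^{-1}\gamma$ and $v=\sin^{-1}\zeta$ (both in $[0,\tfrac{\pi}{2}]$), so that $\cos^{-1}\sqrt{1-\gamma^2}=u$ and $\cos^{-1}\sqrt{1-\zeta^2}=v$, and set $\theta_0=\cos^{-1}\sqrt{\lambda_0}$, equivalently $\sin\theta_0=\sqrt{1-\lambda_0}$. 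The constraint becomes simply $u+v\ge\theta_0$, and the goal becomes $\sin u+\sin v\ge\sin\theta_0$.

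The remaining step is the elementary trigonometric inequality: if $u,v\in[0,\tfrac{\pi}{2}]$ and $u+v\ge\theta_0\in[0,\tfrac{\pi}{2}]$, then $\sin u+\sin v\ge\sin\theta_0$. This follows because, for fixed sum $s=u+v\le\pi$, the quantity $\sin u+\sin v=2\sin\frac{s}{2}\cos\frac{u-v}{2}$ is minimized (over the admissible range with $u,v\ge 0$) at an endpoint where one of $u,v$ vanishes, giving value $\sin s\ge\sin\theta_0$ when $\theta_0\le s\le\tfrac{\pi}{2}$; and if $s>\tfrac{\pi}{2}$ one checks directly that $\sin u+\sin v\ge 1\ge\sin\theta_0$. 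I expect this trigonometric estimate — pinning down where the minimum of $\sin u+\sin v$ under the linear constraint occurs — to be the only real point requiring care, and it is routine. Combining the cases gives $\gamma+\zeta\ge\sqrt{1-\lambda_0}$ for all $f$ with $\|f\|=1$, and rescaling yields \eqref{LPuncertainty_consequence} for general $f\in L^2$.
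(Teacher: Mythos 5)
Your proposal is correct, but it takes a more direct route than the paper. The paper argues geometrically: it studies the boundary curve of the possible area, namely $\zeta$ as an implicit function of $\gamma$ determined by $\cos^{-1}\sqrt{1-\gamma^2}+\cos^{-1}\sqrt{1-\zeta^2}=\cos^{-1}\sqrt{\lambda_0}$, establishes a curvature property of this curve via the auxiliary function $g(\gamma)=\cos^{-1}\sqrt{1-\gamma^2}$ and the two-variable function $g(\gamma)+g(\zeta)$, and then concludes that the extremal value of $\gamma+\zeta$ over the possible area is attained at the two edge points $(\sqrt{1-\lambda_0},0)$ and $(0,\sqrt{1-\lambda_0})$, which is exactly the supporting-line statement needed. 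You instead take the case list of Theorem \ref{LP2theorem} at face value, dispose of cases 1, 3, 4 trivially, and in case 2 pass to angle variables $u=\sin^{-1}\gamma$, $v=\sin^{-1}\zeta$, $\theta_0=\cos^{-1}\sqrt{\lambda_0}$, reducing the claim to the elementary inequality $\sin u+\sin v\ge\sin\theta_0$ whenever $u+v\ge\theta_0$, verified via $\sin u+\sin v=2\sin\frac{u+v}{2}\cos\frac{u-v}{2}$ (with the separate, easy check for $u+v>\frac{\pi}{2}$). This is a genuine simplification: it covers all four regions of the possibility map explicitly and avoids the convexity/concavity bookkeeping, which in the paper's write-up is in fact mislabeled ($\sin^{-1}$ is convex on $[0,1]$, not concave, and the boundary curve $\zeta(\gamma)$ is concave; concavity, i.e.\ the curve lying above the chord joining the edge points, is precisely what makes the line $\gamma+\zeta=\sqrt{1-\lambda_0}$ a lower supporting line, and your estimate of $\sin u+\sin v$ against $\sin(u+v)$ proves exactly this fact directly). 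What the paper's route buys in exchange is the qualitative geometric by-product — the curvature of the boundary and the ensuing remark that $M_c$ is non-convex in the $(\gamma,\zeta)$ coordinates — which is reused in the discussion following the theorem; your argument yields the inequality \eqref{LPuncertainty_consequence} without providing that extra information.
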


\vspace{7 mm}

\begin{proof}
\vspace{3 mm}

We define $g(\gamma)=\cos^{-1}(\sqrt{1-\gamma^2})$ on $[-1,1]$.
$g(\gamma)$ is symmetric with respect to the point $0$ and on
$[0,1]$ we have: $g(\gamma)=\sin^{-1}(\gamma)$ on $[0, 1]$.
therefore $g(\gamma)$ is concave.

$\zeta$ as a function of $\gamma$, $0<\gamma\le
\sqrt{1-\lambda_0}$ defined implicitly by
$\cos^{-1}\sqrt{1-\gamma^2}+\cos^{-1}\sqrt{1-\zeta^2}=\cos^{-1}\sqrt{\lambda_0}$
is convex. To see that we write

$$g_2(\gamma,\zeta)=\cos^{-1}\sqrt{1-\gamma^2}+\cos^{-1}\sqrt{1-\zeta^2}=g(\gamma)+g(\zeta)$$

$g_2(\gamma,\zeta)$ is a concave function of two variables as a
sum of two concave functions:

\begin{align*}
&g_2(\lambda(x1,x2)+(1-\lambda)(x2,y2))=g_2((\lambda
x_1+(1-\lambda)x_2,\lambda y_1+(1-\lambda) y_2))\\
&=g(\lambda x_1+(1-\lambda)x_2)+g(\lambda y_1+(1-\lambda) y_2) <
\lambda g(x_1)+ (1-\lambda)g(x_2)+\\
&+\lambda g(y_1)+(1-\lambda)g(y_2)=\lambda g_2(x_1,y_1) +
(1-\lambda)g_2(x_2,y_2)
\end{align*}

from the symmetry of $g(\gamma)$ it follows that
$g_2(\gamma,\zeta)$ is symmetric with respect to the axes.
Therefore $\zeta$ as a function of $\gamma$, $0<\gamma\le
\sqrt{1-\lambda_0}$ defined implicitly by
$\cos^{-1}\sqrt{1-\gamma^2}+\cos^{-1}\sqrt{1-\zeta^2}=\cos^{-1}\sqrt{\lambda_0}$
is convex.

We have plotted $\zeta$ as a function of $\gamma$ in
Figure~\ref{Spreading_zeta_possibility_maps_fig}.

 \begin{figure}[!h]
  \centerline{
    \mbox{\includegraphics[scale=0.4]{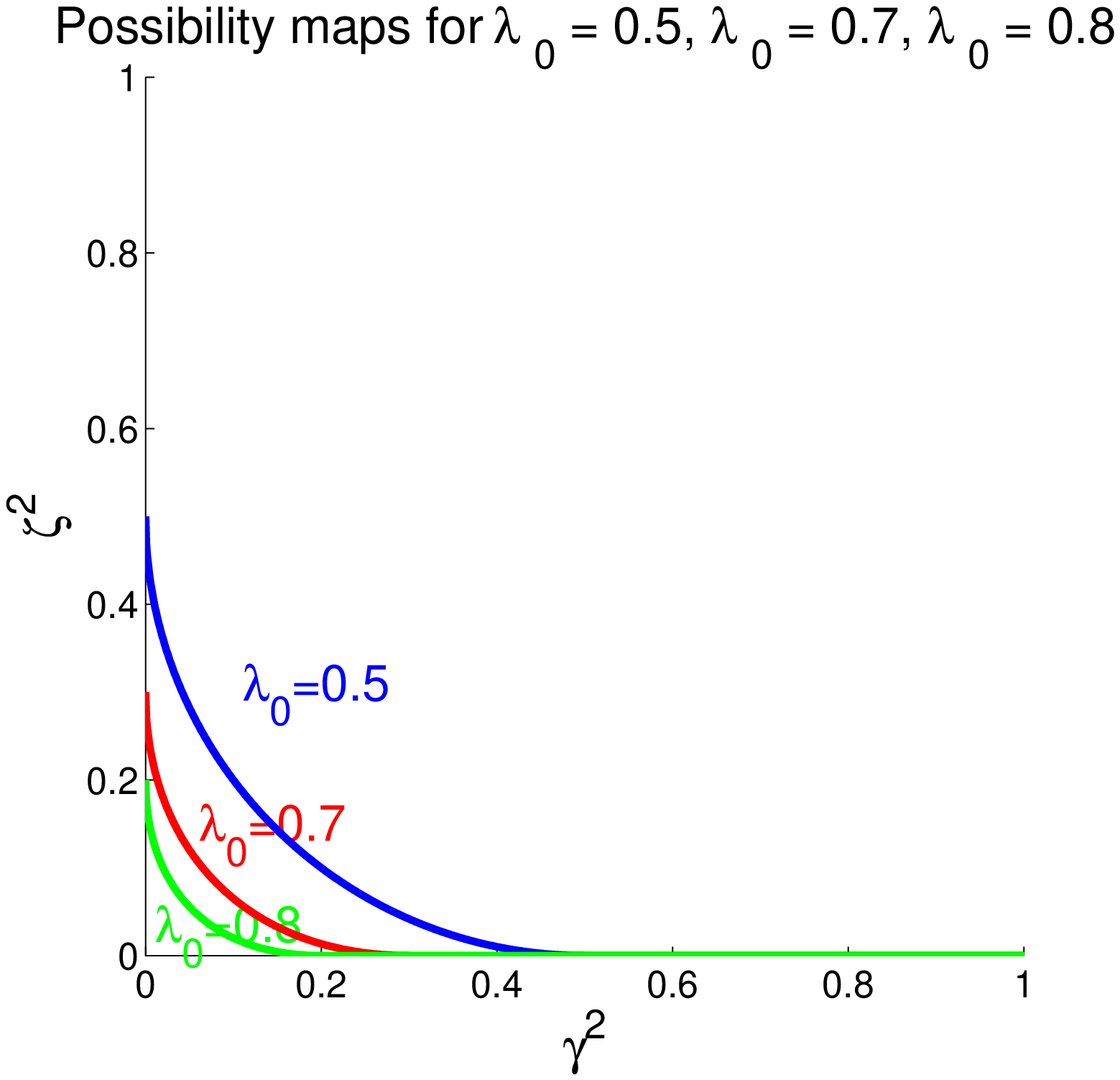}}
    \mbox{\includegraphics[scale=0.4]{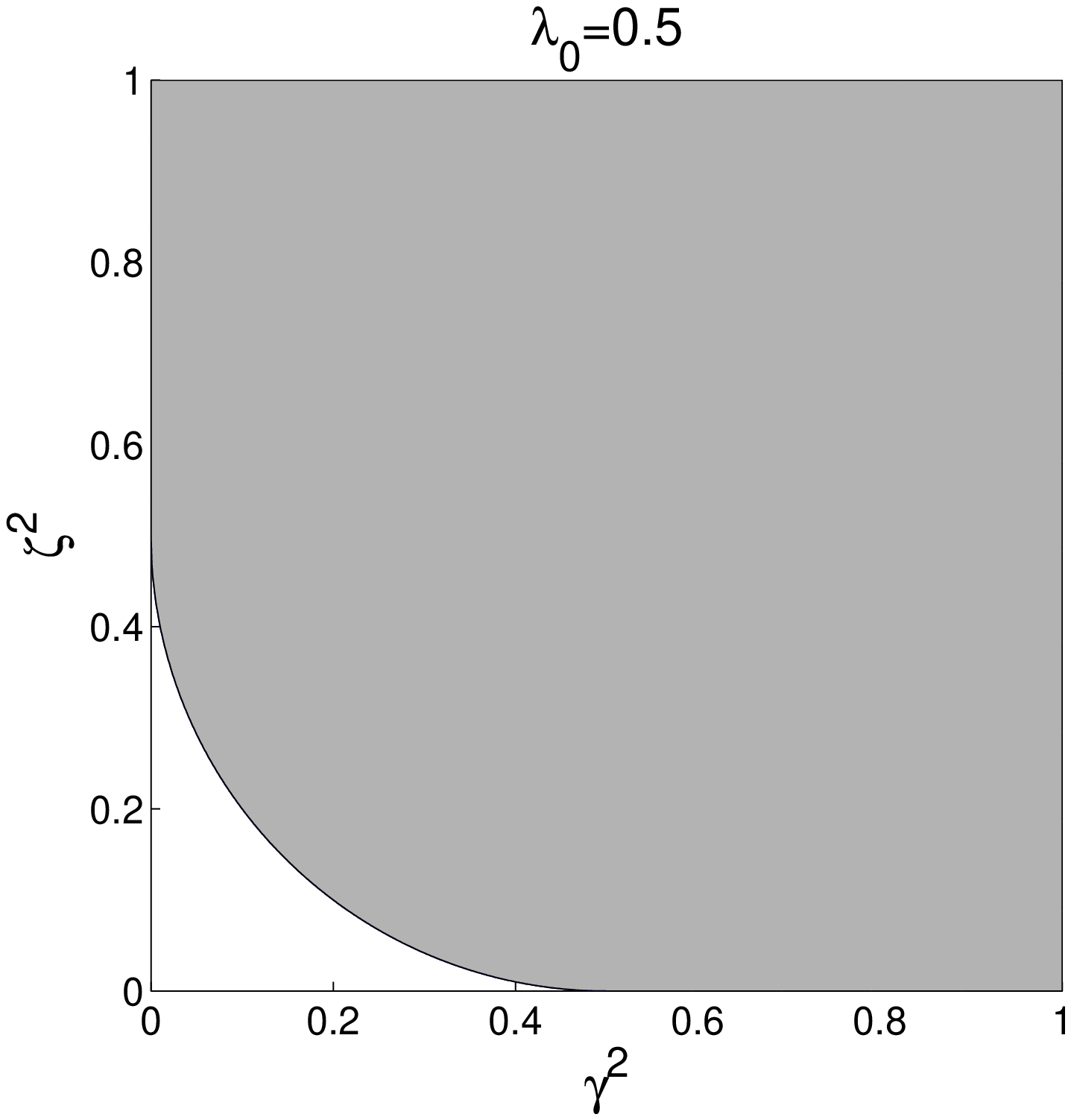}}
  }
  \centerline{
    \mbox{\includegraphics[scale=0.4]{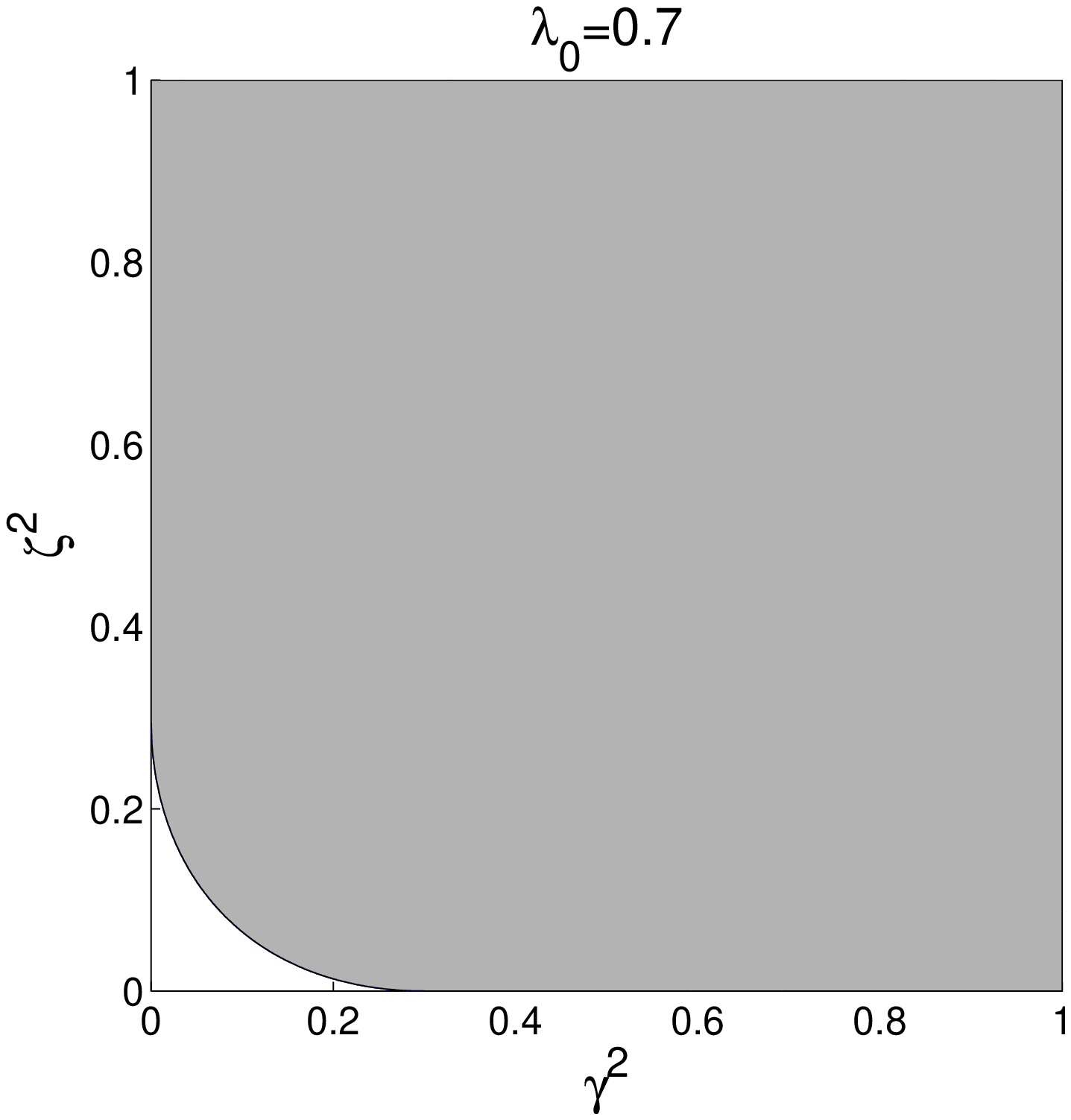}}
    \mbox{\includegraphics[scale=0.4]{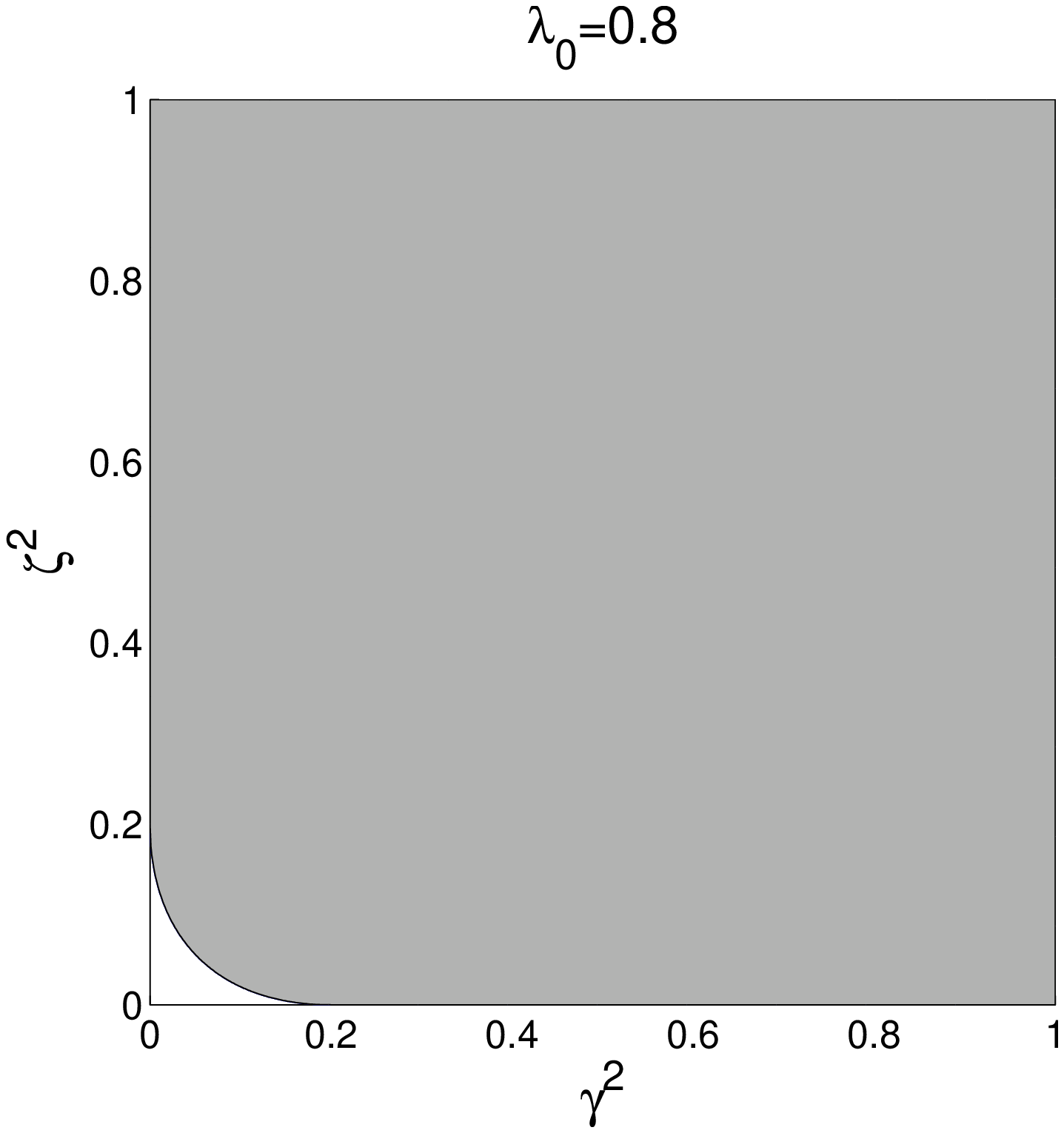}}
  }
  \caption{ Possibility map for $\lambda_0=0.5$, $\lambda_0=0.7$,
$\lambda_0=0.8$ using the coordinates (spreading parameters)
$\gamma^2$ and $\zeta^2$.}
  \label{Spreading_possibility_maps_alpha2_fig}
\end{figure}

thus the maximum of $\zeta+\gamma$ is at the two edge points
$(\sqrt{1-\lambda_0},0)$, $(0,\sqrt{1-\lambda_0})$ and we have the
following uncertainty principle:

\begin{equation}
\frac{\sqrt{\int_{R\setminus  [-\frac{T}{2},\frac{T}{2}]}
|f(x)|^2dx}}{||f||}+ \frac{\sqrt{\int_{R\setminus [-\Omega
,\Omega]} |\hat f(\omega)|^2 d\omega}}{||f||} \ge
\sqrt{1-\lambda_0}
\end{equation}

which implies \eqref{LPuncertainty_consequence}.

\end{proof}

In Theorem \ref{LPuncertainty} above we showed that $M_c$, $c>0$
is a non-convex set (we showed that $\zeta$ as a function of
$\gamma$ is convex, which is equivalent). We would like to remark
that if we use a different set of natural parameters (such as
$\zeta^2$ and $\gamma^2$, see
Figure~\ref{Spreading_possibility_maps_alpha2_fig}) then $M_c$,
where $c$ belongs to some interval, may be a convex set.

 As an example we show that using the parameters $\zeta^2$ and $\gamma^2$, $M_c$, where $\lambda_0(c)>\frac{1}{2}$, is a convex set:
  $M_c$ is convex iff  $\zeta^2$ as a function of
$\gamma^2$, $0<\gamma^2\le 1-\lambda_0$ defined implicitly by
$\cos^{-1}\sqrt{1-\gamma^2}+\cos^{-1}\sqrt{1-\zeta^2}=\cos^{-1}\sqrt{\lambda_0}$
is concave. To see that we write $\beta^2$ as a function of
$\alpha^2$

\begin{equation}
\beta^2(\alpha^2)=\cos^2(\cos^{-1}\sqrt{\lambda_0}-\cos^{-1}\sqrt{\alpha^2})
\end{equation}

we denote
$h(\alpha^2)=\cos^{-1}\sqrt{\lambda_0}-\cos^{-1}\sqrt{\alpha^2}$
and differentiate $\beta^2(\alpha^2)$ ($\beta^2$ with respect to
$\alpha^2$ twice) to get

\begin{equation}
(\beta^2)'(\alpha^2)=\frac{\cos(h(\alpha^2))sin(h(\alpha^2))}{\sqrt{\alpha^2(1-\alpha^2)}}
\end{equation}

\begin{equation}
(\beta^2)''(\alpha^2)=\frac{1}{2}\frac{sin^2(h(\alpha^2))}{\sqrt{\alpha^2(1-\alpha^2)}}-\frac{1}{2}\frac{\cos^2(h(\alpha^2))}{\sqrt{\alpha^2(1-\alpha^2)}}+
\end{equation}

\begin{equation}
+\frac{1}{2}\frac{\cos(h(\alpha^2))sin(h(\alpha^2))}{{(\alpha^2)}^{\frac{3}{2}}(1-\alpha^2)^{\frac{1}{2}}}-\frac{1}{2}\frac{\cos(h(\alpha^2))sin(h(\alpha^2))}{{(\alpha^2)}^{\frac{1}{2}}(1-\alpha^2)^{\frac{3}{2}}}
\end{equation}

for $\lambda_0\geq\frac{1}{2}$ since $sin^2(h(\alpha^2))\leq
\cos^2(h(\alpha^2))$ and
${(\alpha^2)}^{\frac{3}{2}}(1-\alpha^2)^{\frac{1}{2}} \geq
{(\alpha^2)}^{\frac{1}{2}}(1-\alpha^2)^{\frac{3}{2}}$ we get that
$\beta^2(\alpha^2)\leq 0$. $\zeta^2(\gamma^2)$ is concave since
$\beta^2(\alpha^2)$ is convex.

\vspace{3 mm}

One can get different inequalities by minimizing different
functions on the possible area. Of course, one can use different
coordinates systems to work with for obtaining different
inequalities. As an illustration we will use the coordinates
$\zeta^2$ and $\gamma^2$. In this case we will not get a new
uncertainty principle. We will get a weaker uncertainty principle
then \ref{LPuncertainty_consequence}, but the illustration is
instructive:

From symmetry with respect to the line $\zeta^2=\gamma^2$ and
concavity of the function $\zeta^2(\gamma^2)$, the maximum of the
function $\gamma^2+\zeta^2$ on the possibility map is attained at
the point
$\gamma^2=\zeta^2=1-\cos^2\frac{\cos^{-1}\sqrt{\lambda_0}}{2}$.
(We used the equation
$2\cos^{-1}\sqrt{1-\gamma^2}=\cos^{-1}\sqrt{\lambda_0}$).

and we get:

\begin{equation}
\frac{\int_{-\infty}^\infty |f(x)|^2 S_{T}
w_{1LP}(x)dx}{||f||^2}+\frac{\int_{-\infty}^\infty |\hat
f(\omega)|^2 S_{\Omega} w_{2LP}(\omega)d\omega}{||f||^2}\ge
2(1-\cos^2\frac{\cos^{-1}\sqrt{\lambda_0}}{2})
\end{equation}

using the identity:

$$ \cos\frac{u}{2}=\sqrt{\frac{1+\cos u}{2}}$$

we get the uncertainty principle:

\begin{equation}
\label{LPuncertainty_weak_consequence} \sqrt{\int_{R\setminus
[-T,T]} |f(x)|^2dx+
     \int_{R\setminus [-\Omega ,\Omega]} |\hat f(\omega)|^2
d\omega} \ge \sqrt{1-\lambda_0}||f||
\end{equation}

As we mentioned the uncertainty principle
\eqref{LPuncertainty_weak_consequence} is weaker then
\eqref{LPuncertainty_consequence}. In fact it is a consequence of
\eqref{LPuncertainty_consequence} using the inequality
$$\sqrt{2}\sqrt{a+b}\ge \sqrt{a}+\sqrt{b}$$

Remark: We wanted to point out the convexity of $\zeta(\gamma)$
and the concavity of $\zeta^2(\gamma^2)$. If one is only
interested in finding the inequalities, there may be easier ways
to get those.

\section{uncertainty principles for general weights}
\label{sec4}

In this section we will generalize the example from section
\ref{sec3} to more general weights. We denote the time weight by
$w_1(x)$ and the frequency weight by $w_2(\omega)$. We will use
the notations:

$${w_1}_a(x)=S_aw_1(x)=w_1(xa^{-1})$$

where $a$ is the called the time weight scaling parameter; and

$${w_2}_b(\omega)=S_bw_2(\omega)=w_2(\omega b^{-1})$$

where $b$ is the called the frequency weight scaling parameter. We
recall the definitions of time spreading and frequency spreading
(see definitions \ref{Time_spreading} and
\ref{Frequency_spreading}:

\vspace{3 mm}

The time spreading:

\begin{equation}
\gamma(f,a)=\frac{\big(\int_{-\infty}^\infty
{w_1}_a(x)|f(x)|^2dx\big)^\frac{1}{2}}{||f||}
\end{equation}

\vspace{3 mm}

The frequency spreading:

\begin{equation}
\frac{\zeta(f,b)=\big(\int_{-\infty}^\infty {w_2}_b(\omega)|\hat
f(\omega)|^2d\omega \big)^\frac{1}{2}}{||f||}
\end{equation}

In section \ref{sec3} we used specific weights $w_{1LP}(x)$ (see
\eqref{LPtime_weight}) and $w_{2LP}(\omega)$ (see
\eqref{LPfrequency_weight}). In this section we will use general
weights that posses mild requirements.

\vspace{3 mm}

\begin{defn}
\label{realizable_point4}
 A point $(p,q,a,b)\in R^{4^+}$ is called realizable iff there is a
function $f\in L_2$ such that: $p=\gamma(f,a)$,
$q=\zeta(f,b),a>0,b>0$
\end{defn}

\vspace{3 mm}

\begin{lem}
\label{realizable_point_lemma}
 A point $(p,q,a_1,b_1)\in R^{4+}$ is
realizable iff the point $(p,q,k a_1,\frac{b_1}{k}), k>0$ is
realizable
\end{lem}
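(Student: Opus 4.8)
The claim is a rescaling symmetry: realizability of $(p,q,a_1,b_1)$ is equivalent to realizability of $(p,q,ka_1,b_1/k)$. The natural tool is the scaling operator $S_a f(x) = f(xa^{-1})$, which interacts simply with dilations of the weights and with the Fourier transform. The plan is to start with a witnessing function $f$ for $(p,q,a_1,b_1)$, apply a scaling operator $S_k$ (for the appropriate power of $k$), and check that the scaled function $g = S_k f$ witnesses $(p,q,ka_1,b_1/k)$; since this works for arbitrary $k>0$, the equivalence follows (the reverse direction is the same statement with $k$ replaced by $1/k$).

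First I would record the elementary identities. For $g(x) = f(xk^{-1})$ one has $\|g\|^2 = k\|f\|^2$, and by the standard dilation rule for the Fourier transform, $\hat g(\omega) = k\,\hat f(k\omega)$, so $|\hat g(\omega)|^2 = k^2|\hat f(k\omega)|^2$. Next, substitute into the time spreading:
\begin{equation*}
\int_{-\infty}^\infty {w_1}_{ka_1}(x)|g(x)|^2\,dx = \int_{-\infty}^\infty w_1\!\left(\frac{x}{ka_1}\right)\left|f\!\left(\frac{x}{k}\right)\right|^2 dx,
\end{equation*}
and the change of variables $x = ku$ turns the right side into $k\int w_1(u/a_1)|f(u)|^2\,du = k\int {w_1}_{a_1}(u)|f(u)|^2\,du$. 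Dividing by $\|g\|^2 = k\|f\|^2$, the factor $k$ cancels and we get $\gamma(g,ka_1) = \gamma(f,a_1) = p$. A parallel computation for the frequency side, using the change of variables $\omega = v/k$ in $\int {w_2}_{b_1/k}(\omega)|\hat g(\omega)|^2\,d\omega = \int w_2(k\omega/b_1)k^2|\hat f(k\omega)|^2\,d\omega$, produces $k\int w_2(v/b_1)|\hat f(v)|^2\,dv$, and dividing by $\|g\|^2 = k\|f\|^2$ again cancels the $k$, giving $\zeta(g,b_1/k) = \zeta(f,b_1) = q$. Since $ka_1>0$ and $b_1/k>0$, the point $(p,q,ka_1,b_1/k)$ is realizable.

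I do not expect a genuine obstacle here: the whole content is that the $1/\|f\|$ normalization in the definitions of $\gamma$ and $\zeta$ exactly absorbs the Jacobian factors produced by dilation, and that dilating the function by $k$ simultaneously dilates the effective time-weight scale by $k$ and the effective frequency-weight scale by $1/k$. The only points requiring a little care are bookkeeping: getting the direction of the dilation right (so that the weight scale moves the intended way), confirming the Fourier dilation identity with the paper's normalization $\hat f(\omega)=\int f(x)e^{-2\pi i x\cdot\omega}\,dx$, and noting that $g\in L^2$ whenever $f\in L^2$ so the witness is legitimate. For the converse direction one simply applies the forward direction to the function $g$ with scaling parameter $1/k$, recovering $(p,q,a_1,b_1)$; hence the two realizability statements are equivalent. $\Box$
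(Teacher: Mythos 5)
Your proof is correct and follows essentially the same route as the paper: both arguments realize the new point by dilating the witness $f$ by the factor $k$ and verifying invariance of the two spreadings via a change of variables and the Fourier dilation identity. The only (cosmetic) difference is that the paper works with the $L^2$-normalized dilate $f_k(x)=\tfrac{1}{\sqrt{k}}f(x/k)$, whereas you use the unnormalized $g(x)=f(x/k)$ and let the $\|g\|$ in the denominators of $\gamma$ and $\zeta$ absorb the Jacobian factor.
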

\vspace{3 mm}

\begin{proof}
 It is enough to show that If

\vspace{1 mm}

$||f||=1$, $p=\gamma(f,a_1)$, $q=\zeta(f,b_1)$, $a_2=ka_1$,
$b_2=\frac{b_1}{k}$

\vspace{1 mm}

then the function $f_k(x)=\frac{1}{\sqrt{k}}f(\frac{x}{k})$
satisfies:

\vspace{1 mm}

$||f_k||=1$, $p=\gamma(f_k,a_2)$, $q=\zeta(f_k,b_2)$

\begin{eqnarray*}
\gamma(f_k,a_2) & = & (\int_{-\infty}^\infty w_1(\frac{x}{a_2})|\frac{1}{\sqrt{k}}f(\frac{x}{k})|^2dx)^\frac{1}{2} \\
             & = & (\int_{-\infty}^\infty w_1(\frac{y}{a_1})\frac{1}{k}|f(y)|^2kdy)^\frac{1}{2} \\
             & = & \gamma(f,a_1)=p
\end{eqnarray*}

where $y=\frac{x}{k}$

We use
$\widehat{\frac{1}{\sqrt{k}}f(\frac{x}{k})}=\sqrt{k}\hat{f}(k\omega)$
and see:

\begin{eqnarray*}
\zeta(f_k,b_2) & = & (\int_{-\infty}^\infty w_2(\frac{\omega}{b_2})|\sqrt{k}\hat{f}(k\omega)|^2d\omega)^\frac{1}{2} \\
             & = & (\int_{-\infty}^\infty w_2(\frac{v}{b_1})k|\hat{f}(v)|^2\frac{dv}{k})^\frac{1}{2} \\
             & = & \zeta(f,a_1)=q
\end{eqnarray*}

where $v=k\omega$

\end{proof}

 \vspace{7 mm}

Note that Lemma \ref{realizable_point_lemma} is steal correct if
we use a change of coordinates from $R^{4^+}$ to $R^{4^+}$ of the
form $(p,q,a,b)\rightarrow (s(p,q),t(p,q),a,b)$. We will use
specific change of variables in Section \ref{sec5}.

Lemma \ref{realizable_point_lemma} indicates that the relevant
parameter is the time weight scaling parameter $\times$ the
frequency weight scaling parameter; So, we can define realizable
points in $R^{3+}$ instead of in $R^{4+}$:

 \vspace{3 mm}

\begin{defn}
\label{realizable_point3}
 A point $(p,q,c)\in R^{3+}$ is called realizable iff there is a
function $f\in L_2$ such that, $||f||=1$, $p=\gamma(f,a)$,
$q=\zeta(f,b)$ and $c=a b$.
\end{defn}

\vspace{3 mm}

\begin{defn}
\label{possibility_body} We will call the set of realizable points
in $R^{3+}$ "possibility body" and denote it by
$PB_{\gamma^1,\zeta^1}$.
\end{defn}

Note that we may think of the set $PB_{\gamma^1,\zeta^1}$ as the
set:

\begin{equation*}
\bigcup_{a,b>0,f\in L_2} (\gamma(f,a),\zeta(f,b),a ,b)
\end{equation*}

where we identify points such that $ab=c$.

The meaning of the subindexes will be clear in Section \ref{sec5},
since definitions \ref{realizable_point4}, \ref{realizable_point3}
and \ref{possibility_body} are special cases of the more general
definitions \ref{realizable_point4_of_oredr_m},
\ref{realizable_point3_of_order_m} and
\ref{possibility_body_of_order_m} in section \ref{sec5}. In the
following it also will become clear that the possibility body PB
defined in Definition \ref{LPpossibility_body}. is similar to
$PB_{\gamma^1,\zeta^1}$ when we use LP-weights
\eqref{LPtime_weight} and \eqref{LPfrequency_weight}.

\vspace{3 mm}

We will use the following type of weights in the theorem. The
weights in LP result , \eqref{LPtime_weight} and
\eqref{LPfrequency_weight}, are pointwise limit of weights of type
1.

\begin{defn}[Weight of type 1]

A weight $w(x)$ is of type 1 if and only if:

\begin{itemize}

\item[a)] $w(0)=0$.

\item[b)] $w(x)$ is a continuous function.

\item[c)] $w(x)$ is an even function i.e. $w(x)=w(-x)$.

\item[d)] $w(x)$ is strictly increasing on $[0,\infty]$ i.e. $\forall
0\leq x<y$ $w(x)<w(y)$.

\item[e)] $w(x)$ tends to a finite number as $x$ tends to $\infty$ i.e.

$\lim_{x\rightarrow \infty}w(x) = L<\infty$

\end{itemize}
\end{defn}

\vspace{3 mm}

Restricting ourselves to  weights of type 1 we get the following
generalized uncertainty principle:

\begin{thm}[General Uncertainty Theorem For Weights Of Type 1]

Let $w_1(x)$ and $w_2(\omega)$ be weights of type 1 where
$$\lim_{x\rightarrow \infty}w_1(x) = P$$
and
$$\lim_{x\rightarrow \infty}w_2(x) = Q$$

Then the possibility body,$PB_{\gamma^1,\zeta^1}$, is defined, up
to a set of measure zero, by a generalized uncertainty inequality
of the form:

\begin{equation}
\label{uncertainty_finite_lim} ab=c\geq
\psi_{w_1(x),w_2(\omega)}(\gamma,\zeta)
\end{equation}

where $\psi$ is defined on the open square $N=(0,\sqrt{P})\times
(0,\sqrt{Q})$

$\psi_{w_1(x),w_2(\omega)}(\gamma,\zeta)$ is a non-increasing
function of $\gamma$ for fixed $\zeta$ and a non-increasing
function of $\zeta$ for fixed $\gamma$

If $w_1(x)=w_2(x)$ then $\psi_{w_1(x),w_2(\omega)}(\gamma,\zeta)$
is symmetric \label{generalized_uncertainty_theorem1}
\end{thm}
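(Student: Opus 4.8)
The statement has three parts: (i) the possibility body is, up to measure zero, the epigraph of a function $\psi$ defined on $N=(0,\sqrt P)\times(0,\sqrt Q)$; (ii) monotonicity of $\psi$ in each variable; (iii) symmetry when $w_1=w_2$. I would organize the proof around the key observation that, because $w_1,w_2$ are of type 1, the maps $a\mapsto \gamma(f,a)$ and $b\mapsto\zeta(f,b)$ are continuous and monotone in the scaling parameters, and that scaling the function (via Lemma \ref{realizable_point_lemma}) lets us trade $a$ against $b$ keeping $c=ab$ fixed. The plan is first to establish the range of $\gamma$ and $\zeta$, then to prove a "downward closure in $c$" property that forces the realizable set at each $(\gamma,\zeta)$ to be an interval $[\psi(\gamma,\zeta),\infty)$ (or its closure), and finally to read off monotonicity and symmetry from the structure of the spreading integrals.

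\textbf{Step 1: range of the spreadings.} For $f$ with $\|f\|=1$, since $0\le w_1(x)<P$ for all $x$ (strict by properties (a),(d),(e)), we get $0\le\gamma(f,a)^2=\int w_1(x/a)|f(x)|^2\,dx<P$, with the value $0$ attained only in the degenerate limit and values arbitrarily close to $\sqrt P$ attained by concentrating $|f|^2$ far from the origin; as $a\to\infty$, $w_1(x/a)\to w_1(0)=0$ pointwise so $\gamma(f,a)\to 0$, and as $a\to 0^+$, $w_1(x/a)\to P$ for $x\neq0$ so $\gamma(f,a)\to\sqrt P$. The same for $\zeta$ and $Q$. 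This pins down that the relevant domain is the open square $N$, with boundary behavior accounting for the "measure zero" caveat.

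\textbf{Step 2: $\psi$ is well-defined as a lower envelope in $c$.} Fix $(\gamma,\zeta)\in N$. I claim the set $C(\gamma,\zeta)=\{c>0:(\gamma,\zeta,c)\text{ realizable}\}$ is, up to measure zero, an upward ray. If $(\gamma,\zeta,c_0)$ is realized by $f$ with scales $(a_0,b_0)$, $a_0b_0=c_0$, I want to realize any larger $c$. The idea: perturb $f$ by combining it with a function whose mass lies at large $|x|$ and large $|\omega|$ (impossible to do both exactly, but approximately — e.g. a highly modulated, highly translated bump of small norm), which increases both $\gamma$ and $\zeta$ slightly; then undo the increase by enlarging $a$ (which decreases $\gamma$, by monotonicity of $w_1(x/a)$ in $a$) and correspondingly shrinking $b$ via Lemma \ref{realizable_point_lemma} — but shrinking $b$ increases $\zeta$, so one instead enlarges $a$ a lot and compensates $\zeta$ by a further small perturbation. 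A cleaner route: show directly that from a realization at $c_0$ one can continuously deform to realizations at all $c\ge c_0$ by a two-parameter family (one parameter adjusting the "spread" of $|f|^2$, one adjusting $a$), using continuity and an intermediate-value argument in the plane. Define $\psi(\gamma,\zeta)=\inf C(\gamma,\zeta)$; then the possibility body equals $\{(\gamma,\zeta,c):\gamma\in(0,\sqrt P),\zeta\in(0,\sqrt Q),c\ge\psi(\gamma,\zeta)\}$ up to measure zero.

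\textbf{Step 3: monotonicity and symmetry.} For monotonicity: suppose $(\gamma_0,\zeta,c)$ is realizable and $\gamma_1<\gamma_0$; I must show $(\gamma_1,\zeta,c)$ is realizable. Take a realizing $f$ with scales $(a,b)$, $ab=c$; since $\gamma(f,a')$ is continuous in $a'$ and decreases to $0$ as $a'\to\infty$, there is $a'>a$ with $\gamma(f,a')=\gamma_1$ — but then $b'=c/a'<b$, and $\zeta(f,b')>\zeta$ since $w_2(\omega/b')\ge w_2(\omega/b)$ pointwise (as $b'<b$ and $w_2$ is even and increasing on $[0,\infty)$). So this $f$ overshoots in $\zeta$; fix it by combining $f$ with a small-norm function well-localized in frequency near $0$ (i.e. reducing $\zeta$) — or, more simply, invoke that the realizable $\zeta$-values at fixed $(\gamma_1,c)$ form an interval reaching down below $\zeta$ (same deformation argument as Step 2, applied in the $\zeta$-direction). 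Hence $\psi(\gamma_1,\zeta,\cdot)$ requires no larger $c$ than $\psi(\gamma_0,\zeta,\cdot)$, giving $\psi(\gamma_1,\zeta)\le\psi(\gamma_0,\zeta)$; symmetrically in $\zeta$. For symmetry when $w_1=w_2$: if $f$ realizes $(\gamma,\zeta,c)$ with scales $(a,b)$, then $\hat f$ (transported back by the inverse Fourier transform, using $\hat{\hat f}(x)=f(-x)$ and that $w_1$ is even) realizes $(\zeta,\gamma,c)$ with scales $(b,a)$, since the roles of the time and frequency integrals swap and $w_1=w_2$. Thus $C(\gamma,\zeta)=C(\zeta,\gamma)$ and $\psi(\gamma,\zeta)=\psi(\zeta,\gamma)$.

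\textbf{Main obstacle.} The delicate point is Step 2 — the "downward closure in $c$" that makes $\psi$ a genuine lower envelope rather than the realizable set in $c$ being something more complicated. The tension is that no single $f$ can be simultaneously perfectly localized in time and frequency, so one cannot naively boost $\gamma$ and $\zeta$ together; the fix must exploit that we only need approximate increments of arbitrarily small size combined with the exact, freely available rescaling from Lemma \ref{realizable_point_lemma}, plus continuity of $\gamma(f,\cdot),\zeta(f,\cdot)$ in the scales. Getting this deformation argument rigorous — in particular handling the boundary of $N$, which is exactly where the "up to a set of measure zero" qualifier absorbs the analysis — is where the real work lies; the monotonicity and symmetry then follow by comparatively short arguments of the same flavor.
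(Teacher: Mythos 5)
There is a genuine gap, and it is exactly where you flag your ``main obstacle.'' The missing idea is the paper's Step 1: translations and modulations. If $f$ realizes $(c,\gamma_1,\zeta)$, consider $f(x-x_0)$; the map $x_0\mapsto\int|f(x-x_0)|^2\,w_1(x a^{-1})\,dx$ is continuous, equals $\gamma_1^2$ at $x_0=0$ and tends to $P$ as $x_0\to\infty$, while $|\widehat{f(\cdot-x_0)}|=|\hat f|$ so $\zeta$ is \emph{exactly} unchanged. Hence realizability propagates upward in $\gamma$ (to any $\gamma_2$ with $\gamma_1<\gamma_2<\sqrt P$) at fixed $\zeta$ and $c$; modulation does the same for $\zeta$ at fixed $\gamma$. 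This one observation makes your two problem steps immediate: monotonicity follows from the inclusion $\{c:(c,\gamma_1,\zeta)\text{ realizable}\}\subseteq\{c:(c,\gamma_2,\zeta)\text{ realizable}\}$ for $\gamma_1<\gamma_2$; and upward closure in $c$ (the paper's Step 4) follows by keeping $a$ fixed, enlarging $b$ to $b_2=c_2/a$ (which strictly \emph{decreases} $\zeta$ by monotonicity of $w_2$), and then modulating to restore $\zeta$ exactly, with $\gamma$ untouched. Your substitute --- adding a small-norm bump ``at large $|x|$ and large $|\omega|$'' or a two-parameter deformation with an intermediate-value argument --- is never carried out, and as you note cannot be done exactly; the translation/modulation trick is what replaces it.

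Separately, your monotonicity argument runs in the wrong direction. You try to show that $(\gamma_0,\zeta,c)$ realizable and $\gamma_1<\gamma_0$ imply $(\gamma_1,\zeta,c)$ realizable, and conclude $\psi(\gamma_1,\zeta)\le\psi(\gamma_0,\zeta)$; that would make $\psi$ non-\emph{decreasing} in $\gamma$, not non-increasing as claimed. Worse, downward propagation in $\gamma$ at fixed $\zeta,c$ is false in general: making $f$ less spread in time without paying in frequency is precisely what the uncertainty constraint forbids at the boundary $c=\psi(\gamma,\zeta)$, so ``the realizable $\zeta$-values at fixed $(\gamma_1,c)$ form an interval reaching down below $\zeta$'' cannot be invoked. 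The correct statement is upward propagation, which the translation argument above delivers. Your Step 1 (range of the spreadings) and your symmetry argument via $\hat{\hat f}(x)=f(-x)$ with the scales $(a,b)$ swapped do agree with the paper and are fine.
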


\begin{proof}

The proof has 4 steps:

Step 1:

 If a point $(c,\gamma_1,\zeta)$ is realizable then all
points $(c,\gamma_2,\zeta)$, $\gamma_1<\gamma_2<\sqrt{P}$, are
realizable.

To see that, we take a function $f$ such that $||f||=1$,
$\gamma_1=\gamma(f,a)$, $\zeta=\zeta(f,b)$ where $c=ab$;

The spreading in time of the translation of $f$,

$$\gamma(f,a,x_0)=\int_{-\infty}^{\infty}|f(x-x_0)|^2{w_1}_a(x)dx$$

is a continuous function of $x_0$. $\gamma(f,a,0)=\gamma_1$ and

$$lim_{x_0\rightarrow\infty}\gamma(f,a,x_0)=\sqrt{P}$$

So for some $x_0$ $\gamma(f,a,x_0)=\gamma_2$.

The spreading in frequency of the translation of $f$ is a
constant,

$$\zeta=\frac{\big(\int_{-\infty}^{\infty}w_2(\omega)|\hat
f(\omega)|^2\big)^{\frac{1}{2}}}{||f||}$$

and therefore $(c,\gamma_2,\zeta)$ is realizable.

In the same way: If a point $(c,\gamma,\zeta_1)$ is realizable
then all points $(c,\gamma,\zeta_2)$, $\zeta_1<\zeta_2<\sqrt{Q}$,
are realizable.

Step 2:

$\forall (\gamma,\zeta)$ in the square $N$, $\exists c>0$ such
that $(c,\gamma,\zeta)$ is realizable:

To see that, we take an arbitrary $f$ with norm $||f||=1$. The
spreading functions $\gamma(f,a)$, $\zeta(f,b)$ are defined and
continuous on the interval $(0,\infty)$ as functions of $a$ and
$b$ respectively. $lim_{a \rightarrow \infty}\gamma(f,a)=0$,
$lim_{b \rightarrow \infty}\zeta(f,b)=0$ and therefore $\exists$
$\gamma_1<\gamma$, $\zeta_1<\zeta$, $c=ab$ such that the point
$(c,\gamma_1,\zeta_1)$ is realizable and by the first part of the
proof $(c,\gamma,\zeta)$ is realizable.

\begin{defn}

$\forall(\gamma,\zeta)\in N$ the function $\psi(\gamma,\zeta)$ is
defined as:

\begin{equation}
\label{definition_of_psi_fuction} \psi(\gamma,\zeta)=\inf\{c\mid
\mbox{ The point } (c,\gamma,\zeta) \mbox{is realizable}\}
\end{equation}

\end{defn}

Note that the points of $\partial N \times [0, \infty)$ are not
realizable, because of the properties of the weights.

Step 3:

The function $\psi_{w_1(x),w_2(\omega)}(\gamma,\zeta)$ is a
non-increasing function of $\gamma$ for fixed $\zeta$ since from
what we have shown above we get that for $\gamma_1<\gamma_2$ we
have the following inclusion

\begin{equation}
\{c\mid \mbox{The point }(c,\gamma_1,\zeta) \mbox{ is
realizable}\}\subseteq \{c\mid \mbox{The point }(c,\gamma_2,\zeta)
\mbox{ is realizable}\}
\end{equation}

In the same way we get that $\psi(\gamma,\zeta)$ is a non
-increasing function of $\zeta$ for fixed $\gamma$.

The case $w_1(x)=w_2(x)$ : If a point $(c,\gamma_1,\zeta_1)$ is
realizable, then there is a function such that $||f||=1$,
$\gamma(f,a)=\gamma_1$, $\zeta(f,b)=\zeta_1$ and $c=ab$. Since
$\hat{\hat{f}}(x)=f(-x)$, we have for $\hat{f}$, $||\hat{f}||=1$,
$\gamma(\hat{f},b)=\zeta_1$, $\zeta(\hat{f},a)=\gamma_1$ and
$c=ba$ which means that the point $(c,\zeta_1,\gamma_1)$ is
realizable and therefore $\psi(\gamma,\zeta)$ is symmetric.

Step 4:

If a point $(c_1,\gamma_0,\zeta_0)$ is realizable then the points
$(c_2,\gamma_0,\zeta_0)$, $c_2\geq c_1$ are realizable: Without
loss of generality we can take $a_1,b_1$ s.t. $a_1b_1=c_1$ and
$b_2$ s.t. $a_1b_2=c_2$. If $\exists f$ s.t.
$\gamma(f,a_1)=\gamma_0$, $\zeta(f,b_1)=\zeta_0$ then for this $f$
$\zeta(f,b_2)=\zeta_1<\zeta_0$ since

$$\big(\int_{-\infty}^{\infty}{w_2}_{b_1}(\omega)|\hat
f(\omega)|^2\big)^{\frac{1}{2}} ||f||^{-1}>
\big(\int_{-\infty}^{\infty}{w_2}_{b_2}(\omega)|\hat
f(\omega)|^2\big)^{\frac{1}{2}}||f||^{-1}$$

 from the monotonicity of $w_2(\omega)$.

This means that the point $(\gamma_0,\zeta_1,c_2)$ is realizable
and from step 1 it follows that $(\gamma_0,\zeta_0,c_2)$ is
realizable.

From our construction the set of points that fulfill inequality
\eqref{uncertainty_finite_lim} and the possibility body are equal
up to the graph of $\psi_{w_1(x),w_2(\omega)}(\gamma,\zeta)$,
which is a set of measure zero.

\end{proof}

\vspace{3 mm}

We will prove now a similar theorem for weights $w(x)$ s.t.
$\lim_{x\rightarrow \infty}w(x)=\infty$. The structure of the
proof is the same. We will state the theorem and explain the
necessary modifications in the proof.

We will define the type of weights we will have in the theorem.
The weight in the HPW case, $w(x)=x^2$, is an example of a weight
of this type. Property 1 of the weights will use the function
$C(h,x)$ which is related to the weight as follows:

\begin{defn}
$$ C(h,x_*)=\sup\{\frac{w(x+h)}{w(x)}\mid x\geq x_*\}.$$
$$\forall (h,x_*)\in (0,\infty)\times (0,\infty)$$
\end{defn}

Note that $C(h,x)$ is a non-increasing function of $x$ for fixed
$h$. Note that if $w(x)$ is a non-decreasing function on $[0,
\infty)$ then $C(h,x)$ is a non-decreasing function of $h$ for
fixed $x$ since $h_2\geq h_1>0$ implies $\frac
{w(x+h_2)}{w(x)}\geq \frac {w(x+h_1)}{w(x)}$.

\vspace{3 mm}

\begin{defn}[Property 1]

A weight has property 1 iff, $\forall h_0>0$ $\exists x_0>0$ such
that $C(h_0,x_0)$ is finite.

\end{defn}
\vspace{3 mm}

\begin{defn} [Weight of type $\infty$]

A weight $w(x)$ is of type $\infty$ if and only if:

\begin{itemize}

\item[a)] $w(0)=0$.

\item[b)] $w(x)$ is a continuous function.

\item[c)] $w(x)$ is an even function i.e. $w(x)=w(-x)$.

\item[d)] $w(x)$ is strictly increasing on $[0,\infty]$ i.e. $\forall
0\leq x<y$ $w(x)<w(y)$.

\item[e)] $w(x)$ tends to $\infty$ as $x$ tends to $\infty$ not faster than some polynomial, i.e.

$\lim_{x\rightarrow \infty}w(x) = \infty$, $w_1(x)\leq P(x)$ where
$P(x)$ is a polynomial.

\item[f)] $w(x)$ obtains property 1.
\end{itemize}

\end{defn}

\vspace{3 mm}

\begin {thm}[General Uncertainty Theorem For Weights Of Type
$\infty$]

 Let $w_1(x)$ and $w_2(x)$ be weights of type $\infty$

Then the possibility body,$PB_{\gamma^1,\zeta^1}$, is defined,up
to a set of measure zero, by a generalized uncertainty inequality
of the form:

\begin{equation}
ab=c\geq \psi_{w_1(x),w_2(\omega)}(\gamma,\zeta)
\end{equation}

where $\psi$ is defined on the open upper-right quarter of the
plain  $N=(0,\infty)\times (0,\infty)$.

$\psi(\gamma,\zeta)$ is a non-increasing function of $\gamma$ for
fixed $\zeta$ and a non-increasing function of $\zeta$ for fixed
$\gamma$.

If $w_1(x)=w_2(x)$ then $\psi_{w_1(x),w_2(\omega)}(\gamma,\zeta)$
is symmetric.

\label{generalized_uncertainty_theorem2}
\end{thm}

\begin{proof}
In Theorem \ref{generalized_uncertainty_theorem1}, in step 1, the
continuity of

$\gamma(f,a,x_0)=\int_{-\infty}^{\infty}|f(x-x_0)|^2{w_1}_a(x)dx$
in $x_0$ was obvious. Here Some elaboration is needed. We will
proof continuity from the right (continuity from the left can be
done in the same way). We will live it as an exercise to show that
if $\gamma(f,a,x_0)$ exists then $\gamma(f,a,x_1)$ exists for all
$x_1>x_0$. Note that it is enough to show continuity at
$\gamma(f,a,0)$.

So we show first continuity at $x_0=0$ of $\gamma(f,a,x_0)$. We
fix an arbitrary positive number $h_0>0$, and choose $q_1<0$ such
that $\int_{-\infty}^{q_1}f(x)w(x)dx<\epsilon_1$. From property 1,
$\exists$ $x_p>h_0$ s.t. $C(h_0,x_p)$ is finite. We choose
$q_2>x_p$ such that $\int_{q_2}^\infty |f(x)|^2 w(x)dx<\epsilon_3$

We have:

\begin{align}
\int_{-\infty}^{\infty}(|f(x-h)|^2-|f(x)|^2)w(x)dx&=\int_{-\infty}^{q_1}(|f(x-h)|^2-|f(x)|^2)w(x)dx+ \notag \\
&+\int_{q_1}^{q_2+h_0}(|f(x-h)|^2-|f(x)|^2)w(x)dx\notag \\
&+\int_{q_2+h_0}^{\infty}(|f(x-h)|^2-|f(x)|^2)w(x)dx
\label{tree_terms}
\end{align}

We will check the three terms separately:

The first term of \eqref{tree_terms}:

$$\forall 0<h<|q_1|$$

\begin{equation}
\int_{-\infty}^{q_1}|f(x-h)|^2w(x)dx=\int_{-\infty}^{q_1-h}|f(y)|^2w(y+h)dy\leq
\int_{-\infty}^{q_1}|f(y)|^2w(y+h)dy\leq
\end{equation}

$$\leq \int_{-\infty}^{q_1}|f(y)|^2w(y)dy<\epsilon_1$$

and from the triangle inequality we have:

\begin{equation*}
\int_{-\infty}^{q_1}(|f(x-h)|^2-|f(x)|^2)w(x)dx<\int_{-\infty}^{q_1}(||f(x-h)|^2-|f(x)|^2|)w(x)dx<2\epsilon_1
\end{equation*}

The second term of \eqref{tree_terms}:

\begin{defn}
 $D=\max\{w(q_1),w(q_2+h_0)\}.$
\end{defn}

$\forall \epsilon_2>0, \exists 0<h_1<\min\{|q_1|,h_0\}$ s.t.

\begin{equation*}
\int_{q_1}^{q_2+h_0}(|f(x-h_1)|^2-|f(x)|^2)w(x)dx<D\int_{-\infty}^{\infty}(||f(x-h_1)|^2-|f(x)|^2|)dx<\epsilon_2
\end{equation*}

since $|f(x)|^2\in L_1$ and $\forall g(x)\in L_1$ we have
$\lim_{h\rightarrow 0}\int_{-\infty}^{\infty}|g(x-h)-g(x)|=0$

The third term of \eqref{tree_terms}:

\begin{align*}
\int_{q_2+h_0}^{\infty}|f(x-h_1)|^2w(x)dx=& \int_{q_2+h_0-h_1}^{\infty}|f(y)|^2w(y)\frac{w(y+h_1)}{w(y)}dy< \\
& <C(h_0,x_0)\epsilon_3
\end{align*}

where we used the monotonicity of $C(h,x)$. By the triangle
inequality we have:

\begin{align}
&\int_{q_2+h_0}^{\infty}(|f(x-h_1)|^2-|f(x)|^2)w(x)dx\leq \notag \\
&\leq \int_{q_2+h_0}^{\infty}(||f(x-h_1)|^2-|f(x)||^2)w(x)dx\leq \notag \\
& \leq (1+C(h_0,x_0))\epsilon_3 \label{tree_terms}
\end{align}

Taking $\epsilon_1=\frac{\epsilon}{3}$,
$\epsilon_2=\frac{\epsilon}{3}$ and
$\epsilon_3=\frac{1}{3(1+C(h_0,x_p))}$ and $h_1$ as above we get
the continuity of $\gamma(f,a,x_0)$ as a function of $x_0$ at
$x_0=0$.

In step 2 the only modification we need is that instead of taking
an arbitrary $f(x)$ we take $f(x)=exp(-x^2)$.

No modification in steps tree and four is needed.

\end{proof}

\vspace{3 mm}

Note that in Theorem \ref{generalized_uncertainty_theorem1}
$N=(0,\sqrt{P})\times (0,\sqrt{Q})$ and in Theorem
\ref{generalized_uncertainty_theorem2} $N=(0, \infty) \times (0,
\infty)$.

Now we define similar definitions as in the LPS case (Definitions
\ref{LPpossible_area} and \ref{LPimpossible_area}) for weights of
type 0 and weights of type $\infty$:

\begin{defn} The possibility area of level c is the set:
$M_c=\{(p,q)| (p,q,c)$ is realizable $\}$
\end{defn}

\begin{defn} The impossible area of level c is the set
$M'_c=N\setminus M_c$
\end{defn}

Note that we have already defined the possibility body at
Definition \ref{possibility_body}.

 To the pair $(M_c,M'_c)$ we will call "The possibility map
of level c".

\subsection{The Heisenberg-Pauli-Weyl General Uncertainty Principle}
\label{subsec_Heisenberg-Pauli-Weyl}
 It is easy to see that the HPW weights i.e.
$w_1(x)=w_2(x)=x^2$ are of type $\infty$. We use the HPW
uncertainty principle to compute the function $\psi(\gamma,\zeta)$
explicitly and then we find the boundary of the possible area of
level c (see
Figure~\ref{Heisenberg_Pauli_Weyl_Spreading_possibility_maps_zeta_gamma_fig}).

 \begin{figure}[!h]
  \centerline{
    \mbox{\includegraphics[scale=0.4]{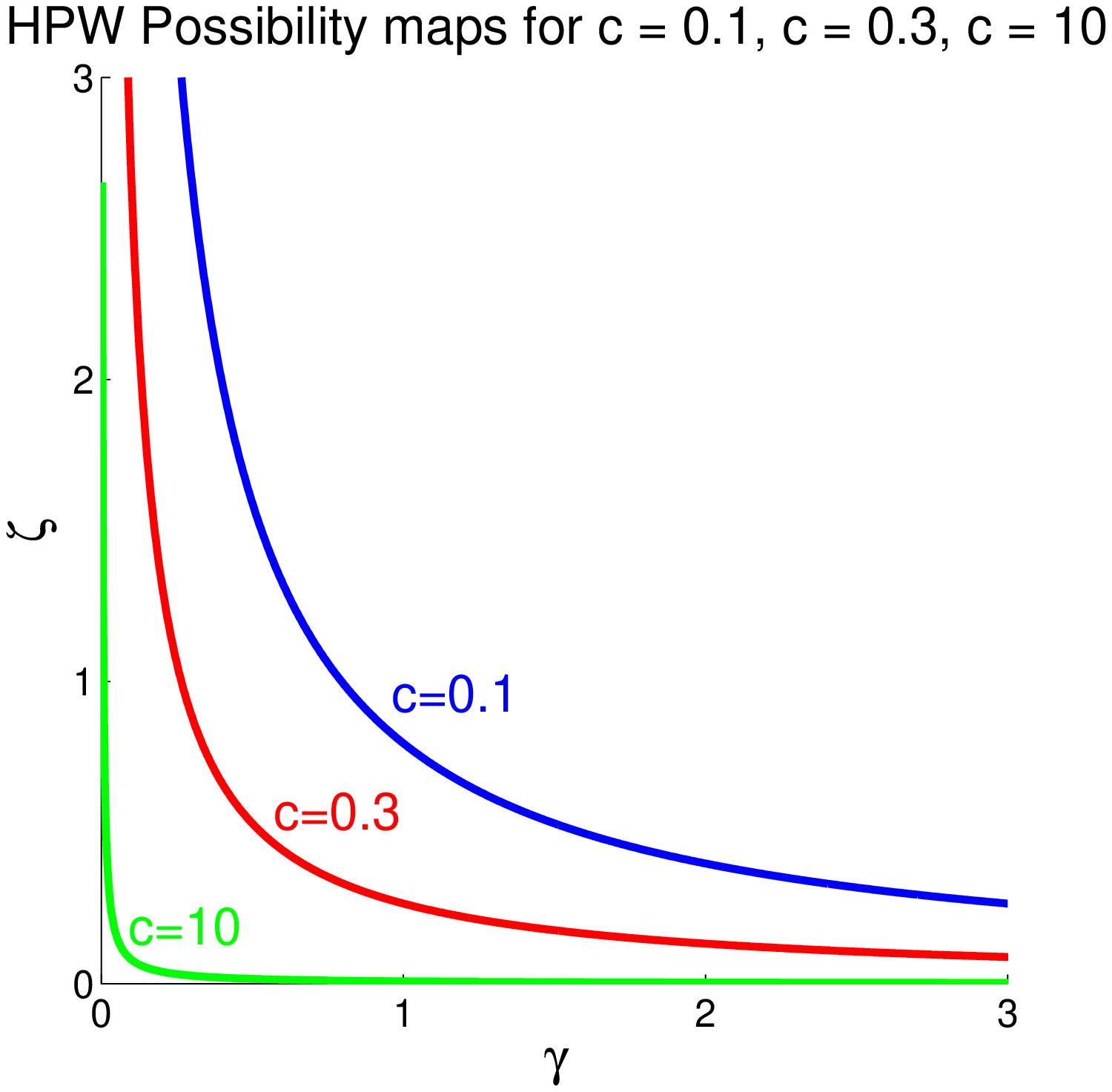}}
    \mbox{\includegraphics[scale=0.4]{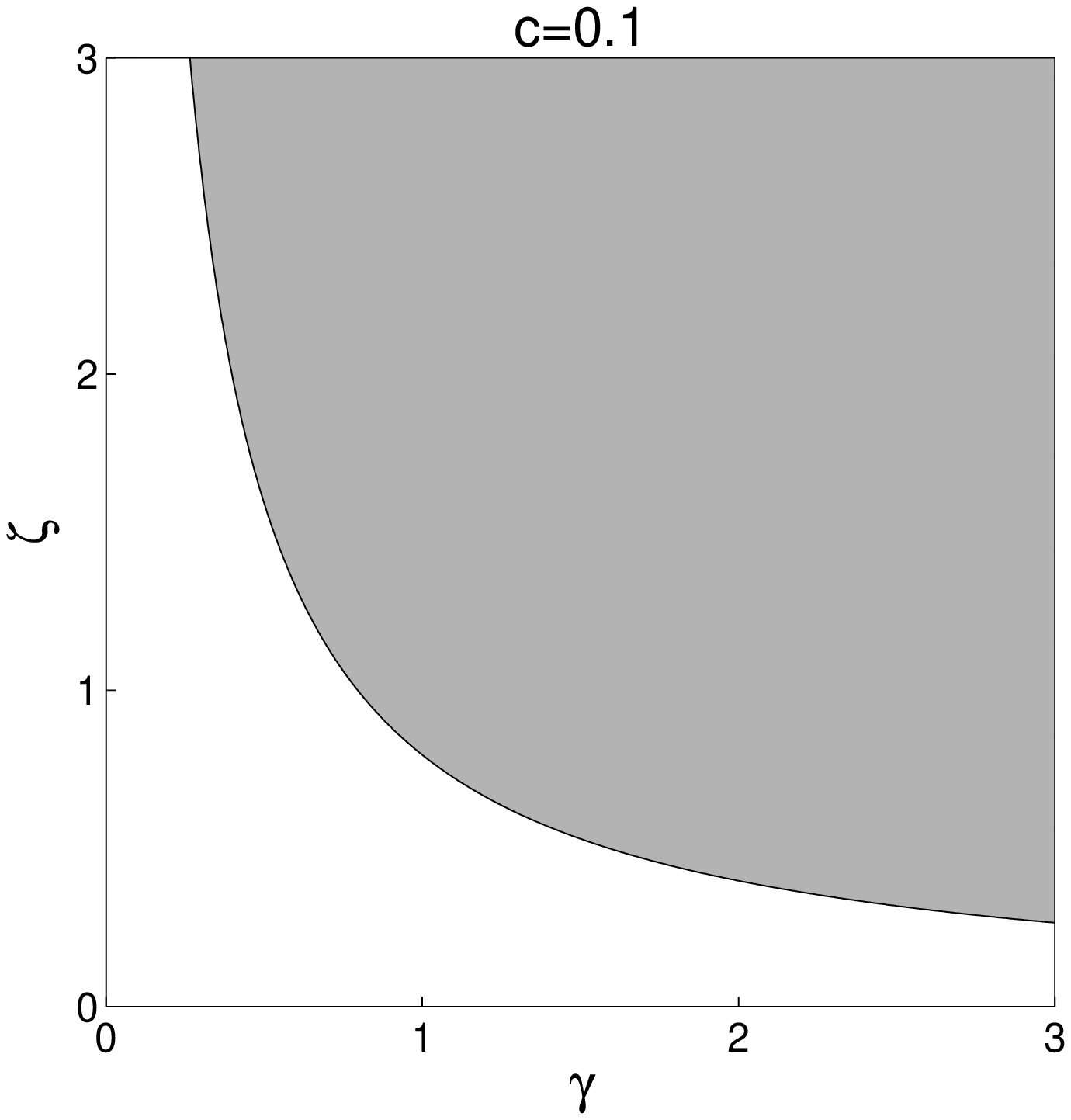}}
  }
  \centerline{
    \mbox{\includegraphics[scale=0.4]{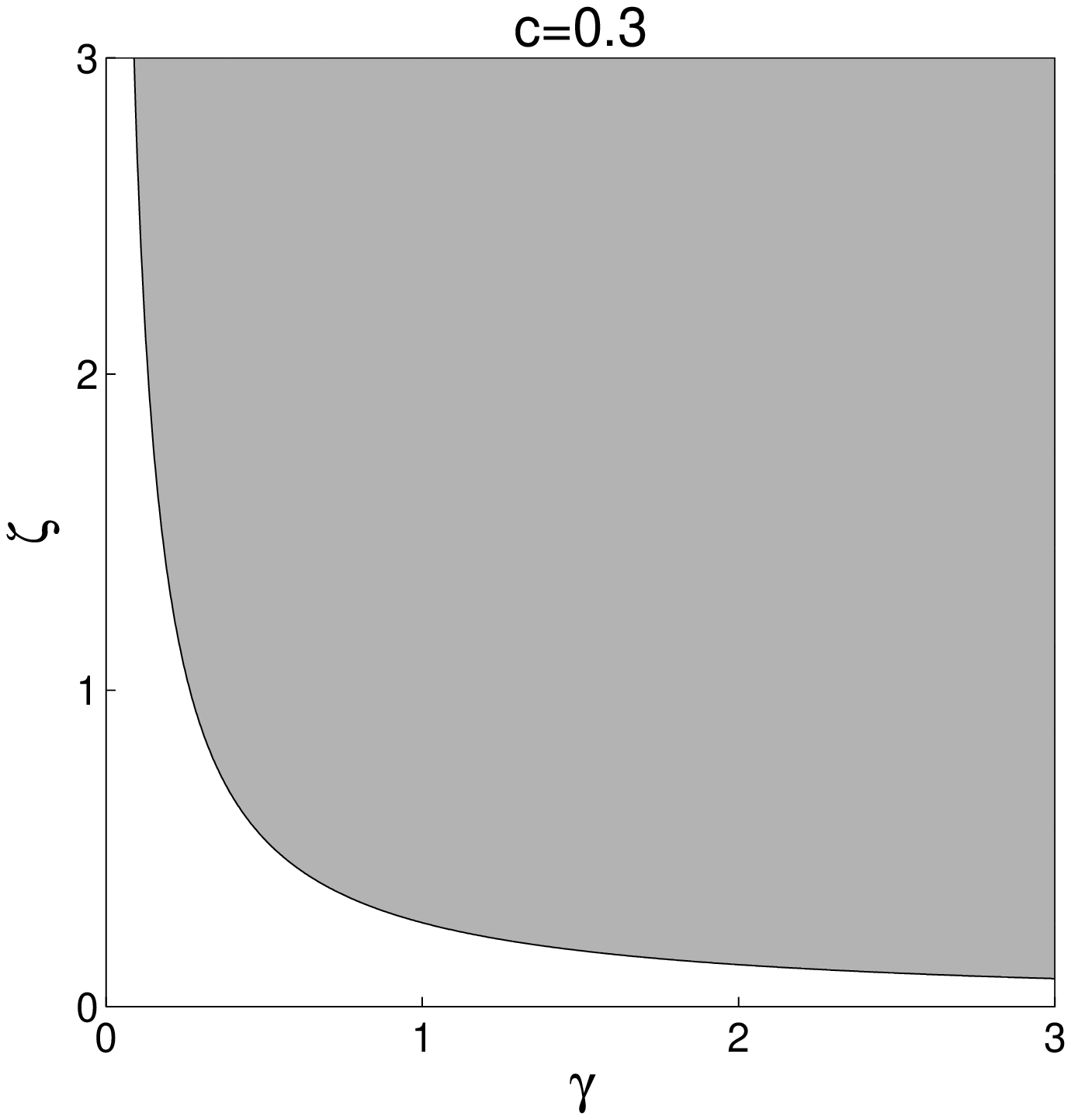}}
    \mbox{\includegraphics[scale=0.4]{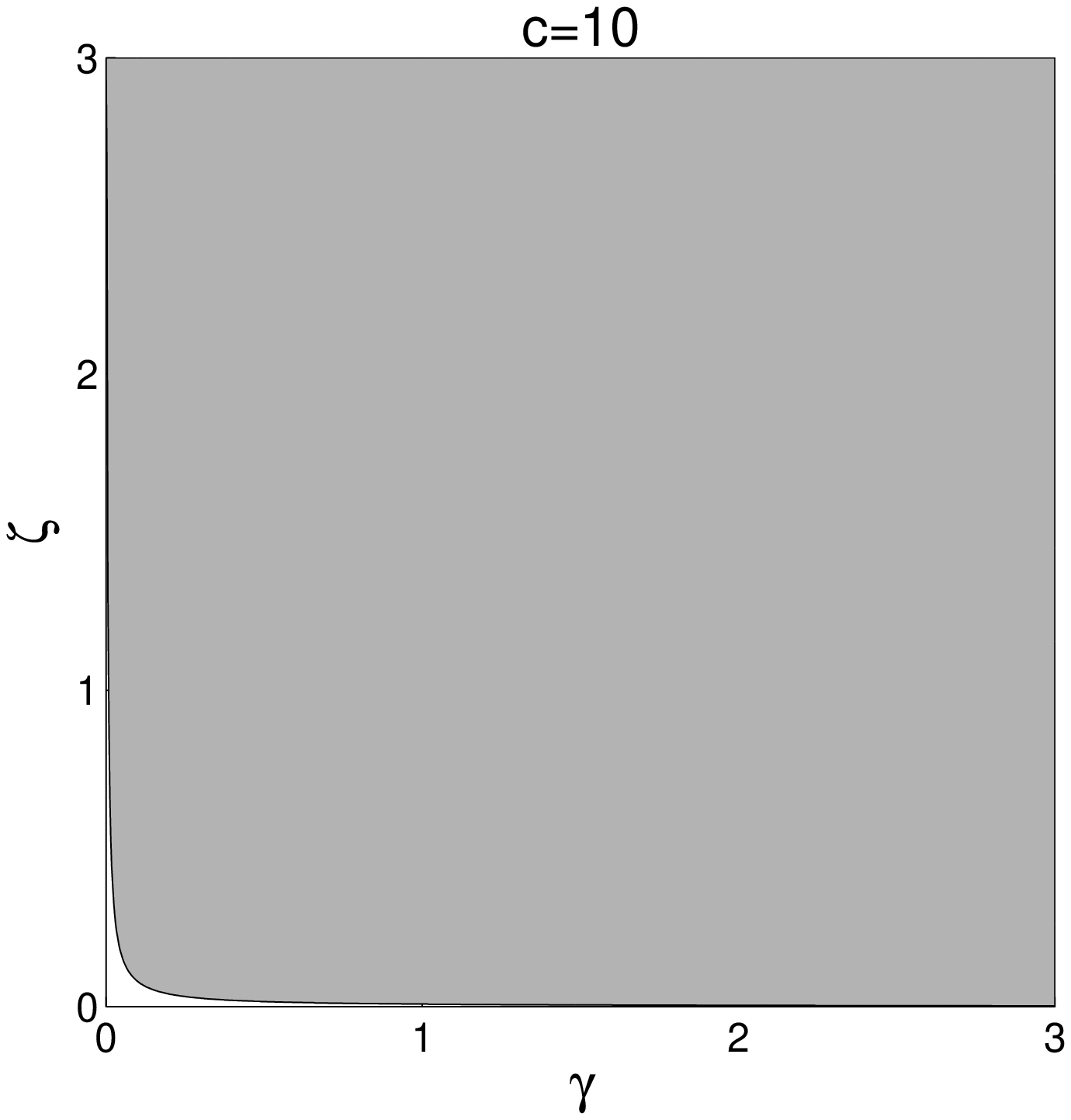}}
  }
  \caption{ Heisenberg-Pauli-Weyl Possibility maps for $c=0.1$, $c=0.3$,
$c=10$ using the coordinates (spreading parameters) $\zeta$ and
$\gamma$.}
  \label{Heisenberg_Pauli_Weyl_Spreading_possibility_maps_zeta_gamma_fig}
  \end{figure}

from:
\begin{equation}
\label{Heisebnerg_multiplication_of_spreadings}
\int_{-\infty}^\infty (\frac{x}{a})^2|f(x)|^2 dx
\int_{-\infty}^\infty (\frac{\omega}{b})^2|\hat f(\omega)|^2
d\omega = (\frac{1}{a b})^2 \int_{-\infty}^\infty x^2|f(x)|^2 dx
\int_{-\infty}^\infty \omega^2|\hat f(\omega)|^2 d\omega
\end{equation}

we write

\begin{equation*}
a^2\gamma^2(f,a)b^2\zeta^2(f,b)=\gamma^2(f,1)\zeta^2(f,1)\geq(\frac{1}{4\pi})^2
\end{equation*}

or

\begin{equation*}
a^2b^2\geq(\frac{1}{4\pi})^2\frac{1}{\gamma^2(f,a)\zeta^2(f,b)}
\end{equation*}

we define:

\begin{equation}
\label{HPW_theta_function}
\theta(\gamma,\zeta)=\frac{1}{4\pi}\frac{1}{\gamma\zeta}
\end{equation}

and get:

\begin{equation*}
c=ab \geq \theta(\gamma,\zeta)
\end{equation*}

We define:

\begin{equation}
u_d(x)=\sqrt{\frac{d}{\sqrt{2}}}e^{-\pi(\frac{x}{d})^2}
\end{equation}

We live it as an exercise to check that $||u_d(x)||=1$,
$\gamma(u_d(x),a)$ is continuous as a function of $d$,
$\lim_{d\rightarrow 0} \gamma(u_d(x),a)=0$ and $\lim_{d\rightarrow
\infty} \gamma(u_d(x),a)=\infty$.

from HPW Theorem and
\eqref{Heisebnerg_multiplication_of_spreadings} it follows that
$\forall a,b,d
>0$:

\begin{equation}
\gamma(u_d(x),a) \zeta(u_d(x),b)ba=\frac{1}{4\pi}
\end{equation}

and therefore the graph of $\theta(\gamma,\zeta)$ is equal:

\begin{equation}
\bigcup _{d\in (0,\infty),c \in (0,\infty)} (\gamma(u_d(x),a),
\zeta(u_d(x),b),c)
\end{equation}

(where again the dependence on a and b separately is not
important), which means that the infimum in
\eqref{definition_of_psi_fuction} is attained and

\begin{equation}
\label{HPW_psi_fumction}
 \psi(\gamma,\zeta)=\theta(\gamma,\zeta)
\end{equation}

$\Box$

\vspace{3 mm}

From \eqref{HPW_theta_function} and \eqref{HPW_psi_fumction} we
see that the boundary of the possible area of level c in the HPW
case, when we use the spreading parameters is (see
Figure~\ref{Heisenberg_Pauli_Weyl_Spreading_possibility_maps_zeta_gamma_fig}):

\begin{equation}
\gamma=\frac{1}{4\pi c \zeta}
\end{equation}

\vspace{7 mm}

\section{Natural coordinates systems for the possibility body and
convexity} \label{sec5}

 As we saw the notion of generalized
uncertainty principles has different settings. In LP Theorem
(Theorem \ref{LPtheorem}) we use concentration parameters and the
function $\phi(\alpha,\beta)$ is defined on $D_{LP}$. In LP*
Theorem (Theorem \ref{LP2theorem}) we use spreading parameters and
the function $\psi(\gamma,\zeta)$ is defined on $D_{LP^*}$. In
General Uncertainty Theorem For Weights Of Type 0 (Theorem
\ref{generalized_uncertainty_theorem1}) and General Uncertainty
Theorem For Weights Of Type $\infty$ (Theorem
\ref{generalized_uncertainty_theorem2}) we use spreading
parameters and $\psi(\gamma,\zeta)$ is defined on
$(0,\sqrt{P})\times (0,\sqrt{Q})$ and $(0,\infty)\times
(0,\infty)$ respectively. The different settings of the "General
uncertainty principles" indicates that $\textit{there is a new
phenomenon underline those}$.

We saw that we can use different coordinates for representing the
generalized uncertainty principles (see \ref{uncertainty2} and
\ref{uncertainty3}). Below we see that it is equivalent to
measuring concentration (spreading) in different ways. In this
section we discuss the question of the existence of natural
coordinates for describing the phenomenon of "generalized
uncertainty principles" and the question of the convexity of the
possibility body.

We start by showing that the boundary of the possible area in the
LP case is an algebraic curve, when we use the concentration
parameters $\alpha$ and $\beta$ (see \eqref{Time_concentration},
\eqref{Frequency_concentration} and LP Theorem - Theorem
\ref{LPtheorem}) or the spreading parameters $\gamma$ and $\zeta$
(see definitions \ref{Time_spreading} and
\ref{Frequency_spreading} and LP* Theorem - Theorem
\ref{LP2theorem})

Using the concentration parameters we have:

\begin{align*}
&\cos^{-1}\alpha + \cos^{-1}\beta                  =\cos^{-1}\sqrt{\lambda_0} \notag \\
&\alpha\beta -\sqrt{1-\alpha^2}\sqrt{1-\beta^2}           =\sqrt{\lambda_0}\notag \\
&\alpha^2\beta^2-2\sqrt{\lambda_0}\alpha\beta+\lambda_0   =(1-\alpha^2)(1-\beta^2) \notag\\
&\alpha^2+\beta^2-2\sqrt{\lambda_0}\alpha\beta =1-\lambda_0 \notag\\
&\frac{1}{1-\lambda_0}\alpha^2+\frac{1}{1-\lambda_0}\beta^2-\frac{\sqrt{\lambda_0}}{1-\lambda_0}\alpha\beta
=1 \label{tree_terms}
\end{align*}

\begin{equation*}
\begin{pmatrix}
\alpha & \beta
\end{pmatrix}
\frac{1}{1-\lambda_0}
\begin{pmatrix}
1 & -\sqrt{\lambda_0} \\
-\sqrt{\lambda_0} &  1
\end{pmatrix}
\begin{pmatrix}
\alpha \\
\beta
\end{pmatrix}
=1
\end{equation*}

\begin{equation}
\begin{pmatrix}
\alpha & \beta
\end{pmatrix}
\begin{pmatrix}
\frac{1}{\sqrt{2}} & \frac{1}{\sqrt{2}} \\
\frac{1}{\sqrt{2}} &  -\frac{1}{\sqrt{2}}
\end{pmatrix}
\frac{1}{1-\lambda_0}
\begin{pmatrix}
1-\sqrt{\lambda_0} & 0 \\
0  & 1+\sqrt{\lambda_0}
\end{pmatrix}
\begin{pmatrix}
\frac{1}{\sqrt{2}} & \frac{1}{\sqrt{2}} \\
\frac{1}{\sqrt{2}} &  -\frac{1}{\sqrt{2}}
\end{pmatrix}
\begin{pmatrix}
\alpha \\
\beta
\end{pmatrix}
=1
\end{equation}

and we see that using the coordinates:

\begin{equation*}
\begin{pmatrix}
u \\
v
\end{pmatrix}
=
\begin{pmatrix}
\frac{1}{\sqrt{2}} & \frac{1}{\sqrt{2}} \\
\frac{1}{\sqrt{2}} &  -\frac{1}{\sqrt{2}}
\end{pmatrix}
\begin{pmatrix}
\alpha \\
\beta
\end{pmatrix}
\end{equation*}

we have the ellipse in simple form:

\begin{equation}
\frac{u}{\big( \sqrt{\frac{1-\lambda_0}{1-\sqrt{\lambda_0}}}\big)
^2} + \frac{v}{\big(
\sqrt{\frac{1-\lambda_0}{1+\sqrt{\lambda_0}}}\big) ^2}=1
\end{equation}

Thus using the concentration parameters $\alpha$ and $\beta$ the
boundary of the possibility area consists of straight lines and
part of an ellipse (see Figure~\ref{Ellipses_fig}) which its main
axis is in the direction

\begin{equation*}
\begin{pmatrix}
 \frac{1}{\sqrt{2}} \\
 \frac{1}{\sqrt{2}}
\end{pmatrix}
\end{equation*}

and its minor axis is in the direction

\begin{equation*}
\begin{pmatrix}
 \frac{1}{\sqrt{2}} \\
 - \frac{1}{\sqrt{2}}
\end{pmatrix}
\end{equation*}

We calculate the distance of the ellipse focuses from the origin

\begin{align}
c=\Big( \Big( \sqrt{\frac{1-\lambda_0}{1-\sqrt{\lambda_0}}}\Big)
^2 - \Big( \sqrt{\frac{1-\lambda_0}{1+\sqrt{\lambda_0}}}\Big) ^2
\Big)^\frac{1}{2} \notag \\
= 2^\frac{1}{2} \lambda_0^\frac{1}{4}
\end{align}

and see that the focuses of the ellipse are placed at

\begin{equation*}
2^\frac{1}{2} \lambda_0^\frac{1}{4}
\begin{pmatrix}
 \frac{1}{\sqrt{2}} \\
 \frac{1}{\sqrt{2}}
\end{pmatrix}
\end{equation*}

and

\begin{equation*}
-2^\frac{1}{2} \lambda_0^\frac{1}{4}
\begin{pmatrix}
 \frac{1}{\sqrt{2}} \\
 \frac{1}{\sqrt{2}}
\end{pmatrix}
\end{equation*}

 \begin{figure}
  \centerline{
    \mbox{\includegraphics[scale=0.4]{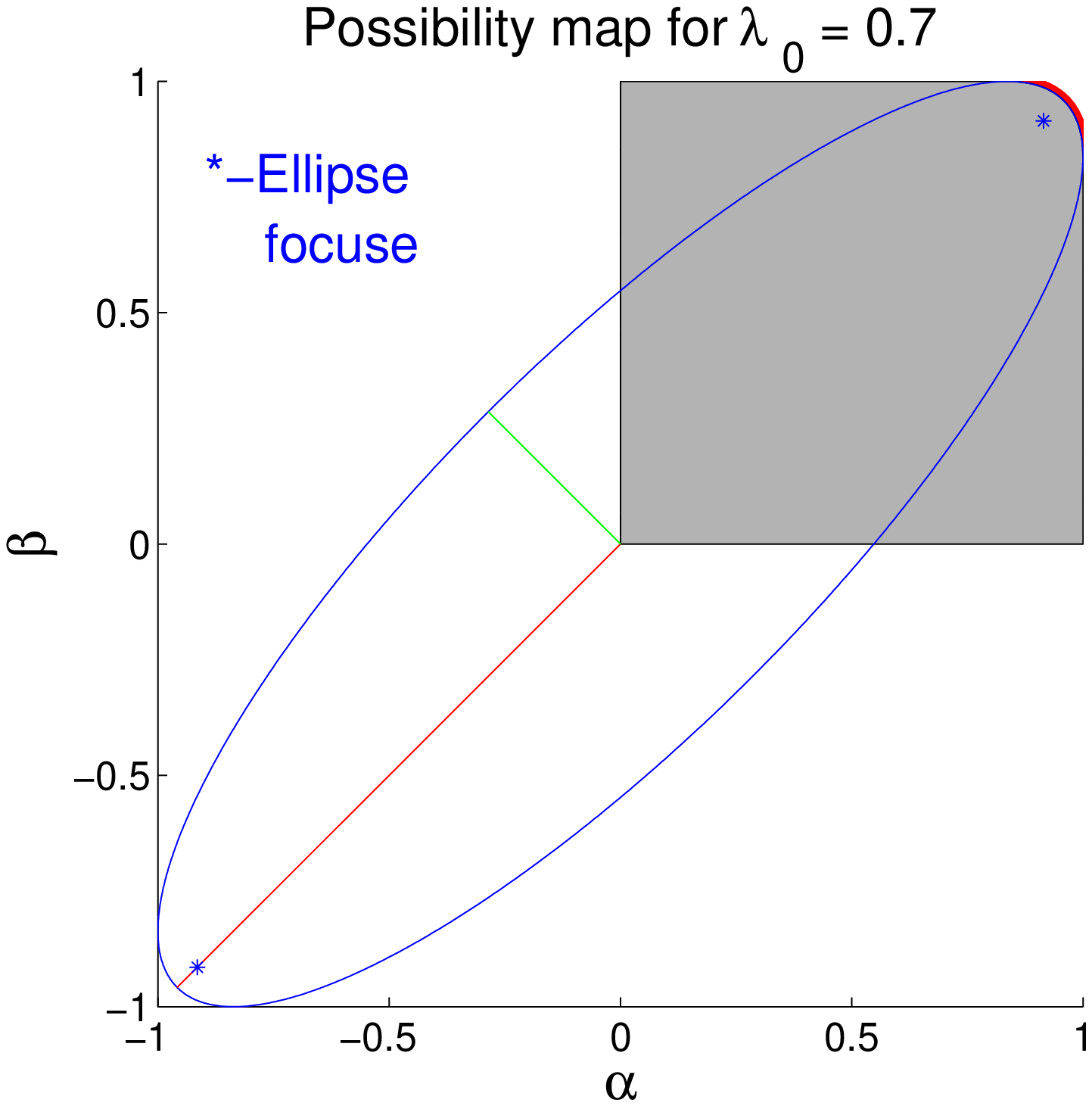}}
    \mbox{\includegraphics[scale=0.4]{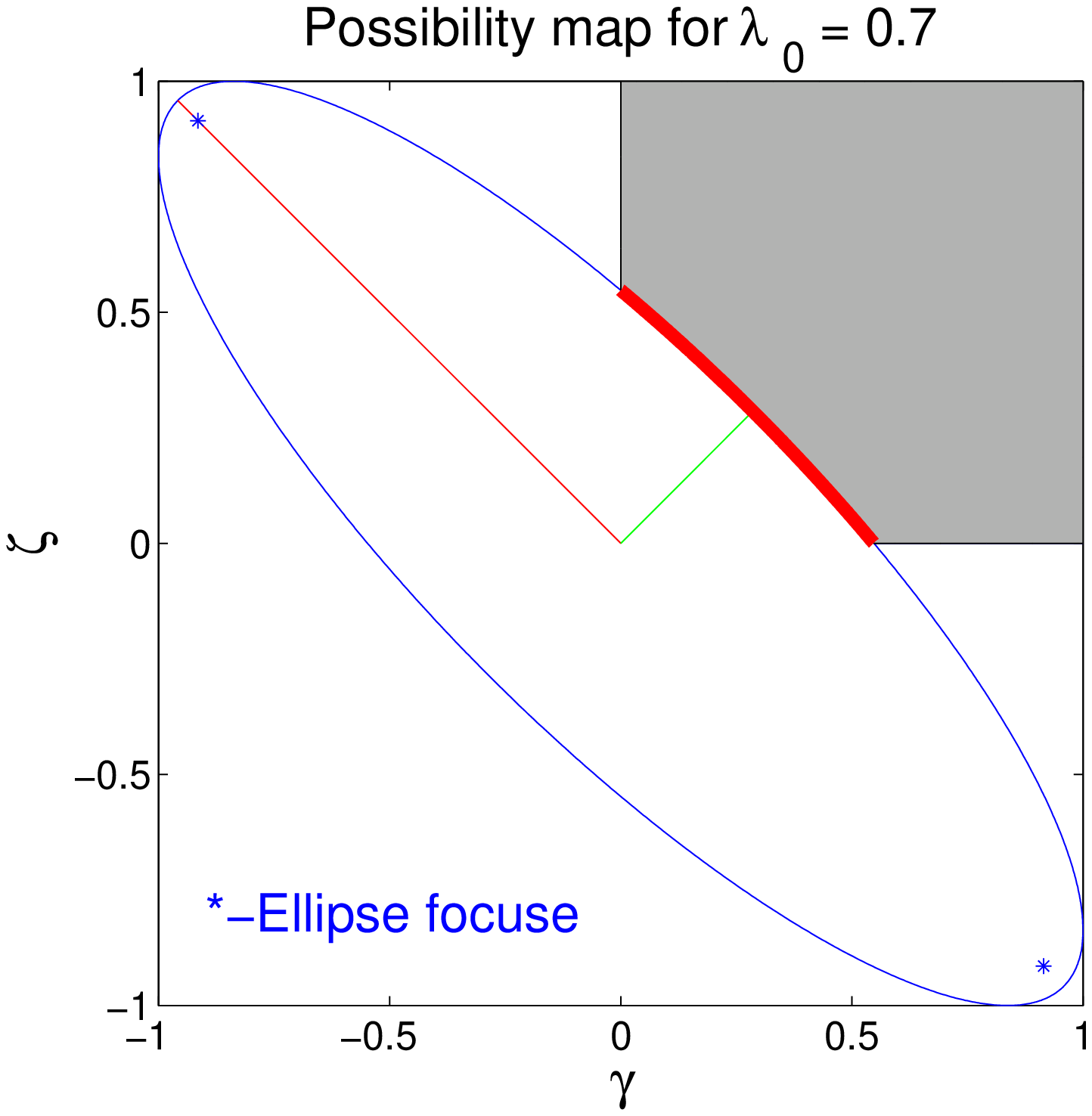}}
  }
  \caption{ Possibility maps and the related ellipses for $\lambda_0=0.7$, using the coordinates (concentration parameters)
$\alpha$ and $\beta$ on the left and using the coordinates
(spreading parameters) $\zeta$ and $\gamma$ on the right.}
  \label{Ellipses_fig}
  \end{figure}

Now we show that the boundary of the possibility area consists of
straight lines and part of an ellipse (see
Figure~\ref{Ellipses_fig}) when we use the spreading parameters
$\gamma$ and $\zeta$.

Similar calculations to the concentration case above gives:

\begin{equation*}
\cos^{-1}\sqrt{1-\gamma^2} +
\cos^{-1}\sqrt{1-\zeta^2}=\cos^{-1}\sqrt{\lambda_0}
\end{equation*}

\begin{equation*}
\begin{pmatrix}
\gamma & \zeta
\end{pmatrix}
\frac{1}{1-\lambda_0}
\begin{pmatrix}
1 & \sqrt{\lambda_0} \\
\sqrt{\lambda_0} &  1
\end{pmatrix}
\begin{pmatrix}
\gamma \\
\zeta
\end{pmatrix}
=1
\end{equation*}

\begin{equation*}
\begin{pmatrix}
\gamma & \zeta
\end{pmatrix}
\begin{pmatrix}
\frac{1}{\sqrt{2}} & \frac{1}{\sqrt{2}} \\
\frac{1}{\sqrt{2}} &  -\frac{1}{\sqrt{2}}
\end{pmatrix}
\frac{1}{1-\lambda_0}
\begin{pmatrix}
1+\sqrt{\lambda_0} & 0 \\
0  & 1-\sqrt{\lambda_0}
\end{pmatrix}
\begin{pmatrix}
\frac{1}{\sqrt{2}} & \frac{1}{\sqrt{2}} \\
\frac{1}{\sqrt{2}} &  -\frac{1}{\sqrt{2}}
\end{pmatrix}
\begin{pmatrix}
\gamma \\
\zeta
\end{pmatrix}
=1
\end{equation*}

In this case the main axis of the ellipse is in the direction

\begin{equation*}
\begin{pmatrix}
 \frac{1}{\sqrt{2}} \\
 - \frac{1}{\sqrt{2}}
\end{pmatrix}
\end{equation*}

and its minor axis is in the direction

\begin{equation*}
\begin{pmatrix}
 \frac{1}{\sqrt{2}} \\
 \frac{1}{\sqrt{2}}
\end{pmatrix}
\end{equation*}

The focuses of the ellipse are at the points

\begin{equation*}
2^\frac{1}{2} \lambda_0^\frac{1}{4}
\begin{pmatrix}
 \frac{1}{\sqrt{2}} \\
 -\frac{1}{\sqrt{2}}
\end{pmatrix}
\end{equation*}

and

\begin{equation*}
-2^\frac{1}{2} \lambda_0^\frac{1}{4}
\begin{pmatrix}
 \frac{1}{\sqrt{2}} \\
 -\frac{1}{\sqrt{2}}
\end{pmatrix}
\end{equation*}

In the HPW case (see \eqref{HPW_theta_function} and
\eqref{HPW_psi_fumction}), when we use $\gamma^m$, $m\in N$ and
$\zeta^n$, $n\in N$ as parameters (we use the notation
$\Psi(\gamma^m,\zeta^n)$) we get:

\begin{equation}
\label{Psi_gamma_m_zeta_n}
\Psi(\gamma^m,\zeta^n)=\psi(\gamma,\zeta)
\end{equation}

i.e

\begin{equation}
\Psi(\gamma^m,\zeta^n)=\frac{1}{4\pi}\frac{1}{\sqrt[m]{\gamma^m}\sqrt[n]{\zeta^n}}
\end{equation}

We see that the boundary of the possibility area of level c is
again an algebraic curve defined by:

\begin{equation}
(\gamma^m)^n (\zeta^n)^m=\frac{1}{(4\pi c)^{m n}}
\end{equation}

The question of finding types of weights and coordinate systems
such that the boundary of the possible areas of different levels
(different c's) are algebraic curves may indicate what are the
natural coordinate systems for describing the general uncertainty
principles phenomenon.

As we saw in section \ref{sec3} the question of convexity of the
possibility body depends on the set of parameters we choose for
describing the possibility body. (In section \ref{sec3} we had a
convex possible area for the parameter set $(\gamma^2,\zeta^2,c)$
and a non-convex possible area for the parameter set
$(\gamma,\zeta,c)$.) The following definitions (Definitions
\ref{realizable_point4_of_oredr_m},
\ref{realizable_point3_of_order_m} and
\ref{possibility_body_of_order_m}) relates the coordinate systems
to different ways of measuring the time and frequency spreadings.
Motivated by the HPW case we generalize definitions
\ref{realizable_point4}, \ref{realizable_point3} and
\ref{possibility_body} as follows:

\begin{defn}
\label{realizable_point4_of_oredr_m}
 A point $(p,q,a,b)\in R^{4+}$ is called realizable with respect to $\gamma^m$, and
 $\zeta^m$, $m\in N$ iff there is a function $f\in L_2$ such that:
 $p=\gamma^m(f,a)$ and
$q=\zeta^m(f,b),a>0,b>0$
\end{defn}

Note that one can think about the symbol $\gamma^m$ (respectively
$\zeta^m$) as the function $\gamma$ (respectively $\zeta$) to the
power $m$ or as a new function for measuring spreading.
In~\ref{Psi_gamma_m_zeta_n} we use the symbols $\gamma^m$ and
$\zeta^m$ also as a spreading coordinates. Bellow we will continue
to use $\gamma^m$ and $\zeta^n$ with their different meanings. The
meaning will be clear from the context. Using Lemma
\ref{realizable_point_lemma} and the note that follows it we can
define:

\begin{defn}
\label{realizable_point3_of_order_m}
 A point $(p,q,c)\in R^{3+}$ is called realizable with respect to $\gamma^m$ and
 $\zeta^m$, $m\in N$ iff there is a function $f\in L_2$ such that, $||f||=1$, $p=\gamma^m(f,a)$,
$q=\zeta^m(f,b)$ and $c=a b$.
\end{defn}

\begin{defn}
\label{possibility_body_of_order_m} We will call the set of
realizable points with respect to $\gamma^m$, and
 $\zeta^m$, $m\in N$, in $R^{3+}$ "possibility body of order m" and
denote it by $PB_{\gamma^m,\zeta^m}$.
\end{defn}

It is easy to see that the following relation holds:

\begin{equation}
\label{possibility_bodies_relation} (p^m,q^m,c)\in
PB_{\gamma^m,\zeta^m} \text{ iff } (p,q,c)\in
PB_{\gamma^1,\zeta^1}
\end{equation}

In the following we will continue to discuss the case of weights
of type $\infty$.

\begin{lem}
\label{argument_1} For every fixed $\gamma_1>0$ and $c_1>0$, there
exists a point $(\gamma_1,\zeta,c_1)$ that is realizable.

For every fixed $\zeta_1$ and $c_1$, there exists a point
$(\gamma,\zeta_1,c_1)$ that is realizable.
\end{lem}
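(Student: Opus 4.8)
The plan is to reduce both assertions to the intermediate value theorem applied to the continuous spreading functions, exactly in the spirit of Step 2 of the proof of Theorem \ref{generalized_uncertainty_theorem2}. I would first fix $\gamma_1>0$ and $c_1>0$ and choose a convenient test function, say $f(x)=e^{-x^2}$ (or any fixed $f\in L_2$ with $\|f\|=1$ for which all the relevant integrals converge, which is guaranteed since the weights are of type $\infty$ and hence dominated by a polynomial). Writing $c_1=ab$, I would treat $a$ as the free parameter and set $b=c_1/a$. Then $\zeta(f,b)=\zeta(f,c_1/a)$ becomes a function of $a$ alone. I would record the two limiting facts already used in the earlier proofs: as $a\to 0$ (so $b=c_1/a\to\infty$) one has $\zeta(f,c_1/a)\to 0$, while as $a\to\infty$ one has $\gamma(f,a)\to 0$; and, crucially, $\gamma(f,a)\to$ its supremum (for HPW-type weights, $\to\infty$) as $a\to 0$.

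The key step is then the following. Since $\gamma(f,a)$ is continuous in $a$ on $(0,\infty)$ (this is precisely the continuity established, with some care, in the proof of Theorem \ref{generalized_uncertainty_theorem2}, reading the scaling parameter in place of the translation parameter), and since $\gamma(f,a)$ ranges over all of $(0,\infty)$ as $a$ ranges over $(0,\infty)$, by the intermediate value theorem there is an $a_0$ with $\gamma(f,a_0)=\gamma_1$. Put $b_0=c_1/a_0$ and $\zeta_1:=\zeta(f,b_0)$. Then $(\gamma_1,\zeta_1,c_1)$ is realizable by construction, which proves the first assertion. The second assertion follows by the symmetry already exploited in Step 3 of Theorem \ref{generalized_uncertainty_theorem1}: passing from $f$ to $\hat f$ interchanges the roles of $\gamma$ and $\zeta$ while preserving $c=ab$, so realizability of $(\gamma,\zeta_1,c_1)$ for a prescribed $\zeta_1$ reduces to the first case applied to $\hat f$. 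Alternatively one repeats the same IVT argument with the roles of $a$ and $b$ exchanged.

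I expect the only genuine obstacle to be the continuity statement: one must know that $a\mapsto\gamma(f,a)=\big(\int w_1(x/a)|f(x)|^2\,dx\big)^{1/2}\|f\|^{-1}$ is continuous on $(0,\infty)$, and this is the same type of $\epsilon$-splitting argument (split the integral near $-\infty$, on a compact middle interval, and near $+\infty$, using property 1 to control the tail) that was carried out for the translation parameter in Theorem \ref{generalized_uncertainty_theorem2}; I would simply cite that argument rather than reproduce it. Everything else — the limiting behaviour of $\gamma$ and $\zeta$ as the scale goes to $0$ or $\infty$, and the monotone range of $\gamma$ — is already recorded for the Gaussian $u_d$ and for $e^{-x^2}$ in Section \ref{sec4}, so the proof is short once continuity is in hand.
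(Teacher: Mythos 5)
Your argument is correct, but it proceeds differently from the paper. The paper fixes $a,b$ with $ab=c_1$ once and for all, takes the Gaussian family $u_d$ only to get \emph{some} realizable point with time spreading strictly below $\gamma_1$ (using $\lim_{d\to 0}\gamma(u_d,a)=0$), and then raises the time spreading to exactly $\gamma_1$ by \emph{translating} the function, invoking Step 1 of Theorems \ref{generalized_uncertainty_theorem1} and \ref{generalized_uncertainty_theorem2}: translation leaves $\zeta$ untouched and moves $\gamma$ continuously up to its supremum, so an intermediate value argument in the translation parameter finishes the proof without ever needing continuity of $\gamma$ in the scale $a$. You instead run the intermediate value theorem in the scale parameter $a$ itself (with $b=c_1/a$) for a fixed Gaussian, which requires the continuity of $a\mapsto\gamma(f,a)$ and the limits $\gamma(f,a)\to\infty$ as $a\to 0$, $\gamma(f,a)\to 0$ as $a\to\infty$; none of this is proved in the paper, but for a Gaussian and a type-$\infty$ weight (even, increasing, polynomially bounded) it follows readily from dominated and monotone convergence, so your route is legitimate and arguably more self-contained — note only that it is not literally ``the same'' continuity statement as the translation argument of Theorem \ref{generalized_uncertainty_theorem2}, though it is easier. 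Two small caveats: your primary suggestion for the second assertion, replacing $f$ by $\hat f$, only interchanges $\gamma$ and $\zeta$ when $w_1=w_2$ (that is exactly the hypothesis under which the paper uses it in Step 3), so for general weights you must fall back on your stated alternative of repeating the IVT argument with the roles of $a$ and $b$ exchanged — which is also what the paper does (``in the same way''); and the limits for $u_d$ you cite from Section \ref{sec4} are recorded there only for the HPW weight $x^2$, so for general type-$\infty$ weights you should derive them from the weight axioms as indicated above.
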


\begin{proof}

We fix $a,b>0$ such that $ab=c_1$

From the fact that

$$lim_{d\rightarrow 0}\gamma(u_d(x),a)=0$$

and the properties of our weights it follows that $\exists d>0$
s.t. $\gamma(u_d(x),a)<\gamma_1$ and

$$(\gamma(u_d(x),a),\zeta(u_d(x),b),c)$$

is realizable, and from step 1 of Theorem
\ref{generalized_uncertainty_theorem1} and its modification at
Theorem \ref{generalized_uncertainty_theorem2} it follows that the
point $(\gamma_1,\zeta,c_1)$ is realizable.

The existence of a realizable point $(\gamma,\zeta_1,c_1)$ for
every fixed $\zeta_1>0$ and $c_1>0$ is proved in the same way.

\end{proof}

Now we can define:

\begin{defn}
\label{psi_with_different_indices}
 We define
$\psi_{w_1(x),w_2(\omega)}^{\gamma^m}(\zeta^m,c)$,
$\psi_{w_1(x),w_2(\omega)}^{\zeta^m}(\gamma^m,c)$ and

$\psi_{w_1(x),w_2(\omega)}^m(\gamma^m,\zeta^m)$ on
$D=(0,\infty)\times (0,\infty)$ by:

\begin{equation}
\psi_{w_1(x),w_2(\omega)}^{\gamma^m}(\zeta^m,c)=inf\{\gamma^m \mid
(\gamma^m,\zeta^m,c)\in PB_{\gamma^m,\zeta^m}\}
\end{equation}

\begin{equation}
\psi_{w_1(x),w_2(\omega)}^{\zeta^m}(\gamma^m,c)=inf\{\zeta^m \mid
(\gamma^m,\zeta^m,c)\in PB_{\gamma^m,\zeta^m}\}
\end{equation}

\begin{equation}
\psi_{w_1(x),w_2(\omega)}^m(\gamma^m,\zeta^m)=inf\{c \mid
(\gamma^m,\zeta^m,c)\in PB_{\gamma^m,\zeta^m}\}
\end{equation}

\end{defn}

Basically we have just change our point of view concerning the
general uncertainty principles and the following facts are easy to
see:

a) The general uncertainty principles can be written also in the
forms:

\begin{equation}
\gamma^m \geq \psi_{w_1(x),w_2(\omega)}^{\gamma^m}(\zeta^m,c)
\end{equation}

and

\begin{equation}
\zeta^m\geq \psi_{w_1(x),w_2(\omega)}^{\zeta^m}(\gamma^m,c)
\end{equation}

where $\psi_{w_1(x),w_2(\omega)}^{\gamma^m}(\zeta^m,c)$ and
$\psi_{w_1(x),w_2(\omega)}^{\zeta^m}(\gamma^m,c)$ have the same
properties as $\psi_{w_1(x),w_2(\omega)}(\gamma,\zeta)$ (see
Theorem \ref{generalized_uncertainty_theorem2}).

b)
$\psi_{w_1(x),w_2(\omega)}^m(\gamma^m,\zeta^m)=\psi_{w_1(x),w_2(\omega)}(\gamma,\zeta)$.

c) If $w_1(x)=w_2(x)$ then
$\psi_{w_1(x),w_2(\omega)}^{\gamma^m}(\sigma,c)=\psi_{w_1(x),w_2(\omega)}^{\zeta^m}(\sigma,c)$.

Now we will focus on homogeneous weights. We will show that
$\forall m\in N$ the possibility body of order m is convex, and
find explicitly the related general uncertainty principles.

When we use homogenous weights of degree $k$ and the parameter set
$(\gamma^m,\zeta^m,c)$ we will use the following notation:

The possible area of level c will be denoted by: $M_c^{k,m}$

The possibility body will be denoted by: $PB_{\gamma^m,\zeta^m}^k$

\vspace{3 mm}

\begin{lem}
\label{homogeneity_lemma}
 If $w_2(\omega)$ is homogeneous of order $k$ (i.e.
$w_2(g\omega)=g^k w_2(\omega)$ then

\begin{equation}
(p,q)\in M_{1}^{k,m} \mbox{ iff }(p,c_0^{-\frac{km}{2}}q)\in
M_{c_0}^{k,m}
\end{equation}

and

\begin{equation}
\label{relation_in_psi_from_homogeneity}
\psi_{w_1(x),w_2(\omega)}^{\zeta^m}(\gamma^m,c_0)=c_0^{-\frac{km}{2}}\psi_{w_1(x),w_2(\omega)}^{\zeta^m}(\gamma^m,1)
\end{equation}

If $w_1(x)$ is homogeneous of order k (i.e. $w_1(gx)=g^k w_2(x)$
then

\begin{equation}
(p,q)\in M_{1}^{k,m} \mbox{ iff } (c_0^{-\frac{km}{2}}p,q)\in
M_{c_0}^{k,m}
\end{equation}

and

\begin{equation}
\psi_{w_1(x),w_2(\omega)}^{\gamma^m}(\zeta^m,c_0)=c_0^{-\frac{km}{2}}\psi_{w_1(x),w_2(\omega)}^{\gamma^m}(\zeta^m,1)
\end{equation}

\end{lem}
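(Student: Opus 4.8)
The plan is to reduce the statement to the elementary scaling law of a homogeneous weight under the frequency scaling parameter $b$ (respectively the time scaling parameter $a$), combined with the fact, already built into Definition \ref{realizable_point3_of_order_m} via Lemma \ref{realizable_point_lemma}, that realizability at ``level $c$'' only sees the product $ab=c$.

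First I would record the scaling law. If $w_2(\omega)$ is homogeneous of order $k$, then ${w_2}_b(\omega)=w_2(\omega b^{-1})=b^{-k}w_2(\omega)$, so for every $f$ with $\|f\|=1$ we have $\zeta^2(f,b)=b^{-k}\zeta^2(f,1)$ and hence $\zeta^m(f,b)=b^{-km/2}\zeta^m(f,1)$; in particular $\zeta^m(f,c_0b)=c_0^{-km/2}\zeta^m(f,b)$ for any $c_0>0$, while $\gamma^m(f,a)$ does not involve $b$ at all. The mirror statement, used for the second half of the lemma, is that if $w_1(x)$ is homogeneous of order $k$ then $\gamma^m(f,c_0a)=c_0^{-km/2}\gamma^m(f,a)$ while $\zeta^m(f,b)$ is untouched.

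Next I would prove the membership equivalence $(p,q)\in M_1^{k,m}\iff(p,c_0^{-km/2}q)\in M_{c_0}^{k,m}$. For the forward implication, pick $f$, $a$, $b$ with $ab=1$, $\|f\|=1$, $\gamma^m(f,a)=p$, $\zeta^m(f,b)=q$; keep $f$ and $a$ fixed and put $b'=c_0b$. Then $ab'=c_0$, the first coordinate $\gamma^m(f,a)=p$ is unchanged, and by the scaling law $\zeta^m(f,b')=c_0^{-km/2}q$, so $(p,c_0^{-km/2}q)$ is realizable at level $c_0$. The reverse implication is the same computation with $b$ replaced by $b/c_0$, so one direction suffices. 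Replacing the roles of $(a,\gamma,w_1)$ and $(b,\zeta,w_2)$ gives the companion equivalence $(p,q)\in M_1^{k,m}\iff(c_0^{-km/2}p,q)\in M_{c_0}^{k,m}$ when $w_1$ is homogeneous of order $k$.

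Finally, the identity \eqref{relation_in_psi_from_homogeneity} follows by passing to infima. Fixing $p=\gamma^m$, the equivalence just proved shows that $\{\,\zeta^m:(p,\zeta^m,c_0)\in PB^k_{\gamma^m,\zeta^m}\,\}$ equals $c_0^{-km/2}\{\,\zeta^m:(p,\zeta^m,1)\in PB^k_{\gamma^m,\zeta^m}\,\}$; since $\inf(\lambda S)=\lambda\inf S$ for $\lambda>0$, applying $\inf$ together with Definition \ref{psi_with_different_indices} yields $\psi^{\zeta^m}_{w_1(x),w_2(\omega)}(\gamma^m,c_0)=c_0^{-km/2}\psi^{\zeta^m}_{w_1(x),w_2(\omega)}(\gamma^m,1)$, and the same argument with $\gamma$ and $\zeta$ interchanged gives $\psi^{\gamma^m}_{w_1(x),w_2(\omega)}(\zeta^m,c_0)=c_0^{-km/2}\psi^{\gamma^m}_{w_1(x),w_2(\omega)}(\zeta^m,1)$ when $w_1$ is homogeneous of order $k$. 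No step here is a genuine obstacle; the only points needing care are deciding which of $a,b$ to rescale so as to keep the product equal to the target level (which fixes the sign of the exponent $-km/2$), and noting that it is precisely the positive-scalar homogeneity of the infimum that upgrades the set identity to the stated identity for $\psi$.
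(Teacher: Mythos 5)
Your proposal is correct and follows essentially the same route as the paper: both rest on the scaling law $\zeta^m(f,c_0b)=c_0^{-km/2}\zeta^m(f,b)$ coming from homogeneity of $w_2$ (the paper normalizes $a=1$, $b=c_0$ via Lemma \ref{realizable_point_lemma}, while you rescale $b$ directly, a cosmetic difference), and both then obtain \eqref{relation_in_psi_from_homogeneity} by passing to infima. Your explicit remark that $\inf(\lambda S)=\lambda\inf S$ for $\lambda>0$ just spells out what the paper leaves as ``follows.''
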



\begin{proof}

Since
\begin{equation}
\zeta^m(f,c_0)=\frac{(\int_{-\infty}^\infty
{w_2}_{c_0}(\omega)|\hat{f}(\omega)|^2d\omega)^\frac{m}{2}}{||f||}=\frac{(\int_{-\infty}^\infty
\frac{1}{c_0^k}w_2(\omega)|\hat{f}(\omega)|^2d\omega)^\frac{m}{2}}{||f||}=
\end{equation}

$$=c_0^{-\frac{km}{2}}\zeta^m(f,1)$$

and since without loss of generality, we can take $a=1, b=c_0,
c_0=ab$, for calculating the map of level $c_0$ and we can take
$a=1, b=1, 1=ab$ for calculating the map of level 1, we have:

$$M_{c_0}^{k,m}=\bigcup_{f\in L^2}(\gamma^m(f,1),\zeta^m(f,c_0))=\bigcup_{f\in
L^2}(\gamma^m(f,1),c_0^{-\frac{km}{2}}\zeta^m(f,1))$$

we get that $(p,q)\in M_{1}^{k,m}$ iff
$(p,c_0^{-\frac{km}{2}}q)\in M_{c_0}^{k,m}$ and
\eqref{relation_in_psi_from_homogeneity} follows.

The second part of the lemma is done in the same way.

\end{proof}

 \vspace{7 mm}

\begin{thm}
\label{generalized_uncertainty_theorem3}
 If $w_1(x)=w_2(x)$ and the weights $w_1(x)$ and $w_2(x)$ are homogeneous of degree $k\in N$
 then

\begin{equation}
\psi_{w_1(x),w_2(\omega)}^m(\gamma^m,\zeta^m)=C(\gamma^m\zeta^m)^{\frac{-2}{km}}
\end{equation}
The sets $PB^k_{\gamma^m,\zeta^m}$ $m\in N$ are convex. Either all
of them are open or all of them are closed
\end{thm}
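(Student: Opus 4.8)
The plan is to combine the scaling behaviour of homogeneous weights with a classical $L^2$ moment uncertainty inequality, and then read off convexity from the resulting explicit formula; all the qualitative facts about realizability (Steps 1--4 in the proof of Theorem~\ref{generalized_uncertainty_theorem1}, its modification in Theorem~\ref{generalized_uncertainty_theorem2}, and facts (a)--(c) after Definition~\ref{psi_with_different_indices}) are used as black boxes.

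First I would record the analogue of \eqref{Heisebnerg_multiplication_of_spreadings} for a weight homogeneous of degree $k$: since $w_i(x/a)=a^{-k}w_i(x)$, for every $f$ with $\|f\|=1$ and all $a,b>0$ one has $(ab)^k\,\gamma^2(f,a)\,\zeta^2(f,b)=\gamma^2(f,1)\,\zeta^2(f,1)$. Next I would invoke the classical fact that $\gamma(f,1)\zeta(f,1)=\|f\|^{-2}\big(\int w_1(x)|f(x)|^2dx\,\int w_2(\omega)|\hat f(\omega)|^2d\omega\big)^{1/2}$ has a strictly positive infimum over unit vectors; call it $K$, so $K>0$ (for $w_1=w_2=x^2$ this is HPW with $K=1/4\pi$; for a general homogeneous weight of type $\infty$ it is a standard $L^2$ moment uncertainty inequality, and we need neither its sharp value nor an explicit extremizer --- this is the ``up to a constant''). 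Combining the two, every realizable triple satisfies $(ab)^k(\gamma\zeta)^2\ge K^2$, i.e. with $c=ab$, $\;c\ge K^{2/k}(\gamma\zeta)^{-2/k}$. Writing this in the coordinates $\gamma^m,\zeta^m$ gives the lower bound $\psi^m_{w_1,w_2}(\gamma^m,\zeta^m)\ge C(\gamma^m\zeta^m)^{-2/(km)}$ with $C=K^{2/k}$.

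For the reverse inequality I would show that every point strictly above this graph is realizable, so that up to the graph (a set of measure zero) $PB^k_{\gamma^m,\zeta^m}$ equals $\{c\ge C(\gamma^m\zeta^m)^{-2/(km)}\}$. Working in $m=1$ notation, fix $\gamma_0,\zeta_0>0$ and $c_0>C(\gamma_0\zeta_0)^{-2/k}=(K/(\gamma_0\zeta_0))^{2/k}$. By definition of the infimum $K$, pick a unit vector $g$ with $\gamma(g,1)\zeta(g,1)$ so close to $K$ that $\big(\gamma(g,1)\zeta(g,1)/(\gamma_0\zeta_0)\big)^{2/k}\le c_0$. Using homogeneity once more, set $a=(\gamma(g,1)/\gamma_0)^{2/k}$ and $b=(\zeta(g,1)/\zeta_0)^{2/k}$; then $\gamma(g,a)=\gamma_0$, $\zeta(g,b)=\zeta_0$ and $ab=\big(\gamma(g,1)\zeta(g,1)/(\gamma_0\zeta_0)\big)^{2/k}\le c_0$, so $(\gamma_0,\zeta_0,ab)$ is realizable and, by monotonicity of realizability in $c$ (Step~4 in the proof of Theorem~\ref{generalized_uncertainty_theorem1}), so is $(\gamma_0,\zeta_0,c_0)$. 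Passing back to the $\gamma^m,\zeta^m$ coordinates and using fact (b) after Definition~\ref{psi_with_different_indices} yields $\psi^m_{w_1,w_2}(\gamma^m,\zeta^m)=C(\gamma^m\zeta^m)^{-2/(km)}$. I expect this direction to be the delicate one, since it is exactly where the (near-)attainment of the constant $K$ enters; the rest is bookkeeping with the earlier qualitative theorems.

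Finally, convexity and the open/closed dichotomy. Put $\alpha=2/(km)>0$; then $PB^k_{\gamma^m,\zeta^m}$ is, up to measure zero, the region $\{(p,q,c)\in(0,\infty)^3:\ c\ge C\,g(p,q)\}$ with $g(p,q)=p^{-\alpha}q^{-\alpha}$. A direct computation gives $g_{pp}=\alpha(\alpha+1)p^{-\alpha-2}q^{-\alpha}>0$ and $g_{pp}g_{qq}-g_{pq}^2=\alpha^2(2\alpha+1)\,p^{-2\alpha-2}q^{-2\alpha-2}>0$, so $g$ is convex on the convex domain $(0,\infty)^2$; hence its epigraph region, whether taken closed or open, is convex, for every $m\in\mathbb{N}$. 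For the alternative: running the construction of the previous paragraph with $c_0=C(\gamma_0\zeta_0)^{-2/k}$ shows that a boundary point is realizable iff there is a unit vector attaining $\gamma(f,1)\zeta(f,1)=K$ exactly, a condition independent of $(\gamma_0,\zeta_0)$; so either the whole graph lies in $PB^k_{\gamma^1,\zeta^1}$ (which is then closed) or none of it does (and it is open). By \eqref{possibility_bodies_relation} the map $(p,q,c)\mapsto(p^m,q^m,c)$ is a self-homeomorphism of $(0,\infty)^3$ carrying $PB^k_{\gamma^1,\zeta^1}$ onto $PB^k_{\gamma^m,\zeta^m}$, so the same alternative holds simultaneously for all $m$.
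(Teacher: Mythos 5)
Your proposal is correct, and it takes a genuinely different (more quantitative) route than the paper, even though both exploit the same underlying scaling idea. The paper stays inside its axiomatic framework: it shows that for the same $f$ a rescaling of $(a,b)$ moves $(\gamma,1,c_0)$ to $(\tfrac{1}{\zeta_0}\gamma,\zeta_0,c_0)$ inside $PB_{\gamma^1,\zeta^1}$, combines this with Lemma~\ref{homogeneity_lemma} to get $\psi^{\gamma^1}_{w_1(x),w_2(\omega)}(\zeta,c)=\psi^{\gamma^1}_{w_1(x),w_2(\omega)}(1,1)\,\zeta^{-1}c^{-k/2}$ together with a propagation-of-attainment argument for the open/closed alternative, defines the constant simply as $C=\psi^1_{w_1(x),w_2(\omega)}(1,1)$, and reads off convexity from the boundary graph (justified there, rather loosely, as a ``multiplication of two concave functions''). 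You instead work from the pointwise identity $(ab)^k\gamma^2(f,a)\zeta^2(f,b)=\gamma^2(f,1)\zeta^2(f,1)$ and the infimum $K$ of the moment product over unit vectors: the lower bound for $\psi$ is immediate, the upper bound comes from scaling near-minimizers plus the Step~4 monotonicity in $c$ from Theorem~\ref{generalized_uncertainty_theorem1}, convexity is a direct Hessian computation for $p^{-\alpha}q^{-\alpha}$ for each $m$, and the dichotomy follows from your observation that a boundary point is realizable iff $K$ is attained --- a condition independent of the point --- transported to all $m$ via \eqref{possibility_bodies_relation}. Your route buys three things the paper does not give: the identification $C=K^{2/k}$ (hence $C>0$ via the classical moment uncertainty inequality, which the paper never asserts), a sharper description of the body (strict versus closed epigraph, rather than only ``up to measure zero''), and a convexity argument that repairs the paper's questionable concavity justification (the factors $\zeta^{-1}$ and $c^{-k/2}$ are convex, not concave; your determinant computation is the correct substitute). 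The paper's route, in exchange, never needs to discuss near-minimizers; in your argument you should note explicitly that a unit vector with moment product close to $K$ automatically has both factors finite and strictly positive (each factor is nonzero for $f\neq 0$, and $K<\infty$ by testing a Gaussian, since the weights are polynomially bounded), so the choice of $a=(\gamma(g,1)/\gamma_0)^{2/k}$ and $b=(\zeta(g,1)/\zeta_0)^{2/k}$ is legitimate; this is a minor point, not a gap.
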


\begin{proof}

First we show that $\forall c_0\in R^+$ a point $(\gamma,1,c_0)\in
PB_{\gamma_1,\zeta_1}$ iff the point
$(\frac{1}{\zeta_0}\gamma,\zeta_0,c_0)\in PB_{\gamma_1,\zeta_1}$:

 \vspace{3 mm}

$(\gamma,1,c_0)\in PB_{\gamma_1,zeta_1}$ implies that $\exists
f\in L^2$ such that for $a=c_0$, $b=1$ we have:

\begin{equation}
\gamma(f,c_0)=\frac{(\int_{-\infty}^\infty
{w_1}_{c_0}(x)|f(x)|^2dx)^\frac{1}{2}}{||f||}=\frac{c_0^\frac{-k}{2}(\int_{-\infty}^\infty
{w_1}_{1}(x)|f(x)|^2dx)^\frac{1}{2}}{||f||}=\gamma
\end{equation}

and

\begin{equation}
\zeta(f,1)=\frac{(\int_{-\infty}^\infty
{w_2}_{1}(\omega)|\hat{f}(\omega)|^2d\omega)^\frac{1}{2}}{||f||}=1
\end{equation}

then for the same function $f\in L^2$ and
$a=c_0\zeta_0^{\frac{2}{k}}$, $b=\zeta_0^{\frac{2}{k}}$ we have:

\begin{equation}
\gamma(f,c_0 \zeta_0^{\frac{2}{k}})=\frac{(\int_{-\infty}^\infty
\frac{1}{(c_0 \zeta_0^{\frac{2}{k}})^k}
{w_1}_{1}(x)|f(x)|^2dx)^\frac{1}{2}}{||f||}=\frac{1}{\zeta_0}\gamma(f,c_0)=\frac{1}{\zeta_0}\gamma
\end{equation}

and

\begin{equation}
\zeta(f,\zeta^{\frac{-2}{k}})=\frac{(\int_{-\infty}^\infty
{w_2}_{\zeta^{\frac{-2}{k}}}(\omega)|\hat{f}(\omega)|^2d\omega)^\frac{1}{2}}{||f||}=\zeta_0
\zeta(f,1)=\zeta_0
\end{equation}

which implies that $(\frac{1}{\zeta_0}\gamma, \zeta_0,c_0)\in
PB_{\gamma^1,\zeta^1}$.

The other direction is done in a similar way. \vspace{3 mm}

From the definition of
$\psi_{w_1(x),w_2(\omega)}^{\gamma^1}(\zeta^1,c)$
\eqref{psi_with_different_indices} we get that

\begin{equation*}
\psi_{w_1(x),w_2(\omega)}^{\gamma^1}(\zeta_0,c_0)=\frac{1}{\zeta_0}\psi_{w_1(x),w_2(\omega)}^{\gamma^1}(1,c_0)
\end{equation*}

and that if the infimum is attained in one point, say
$(\zeta^*,c_0)$, then it is attained for every $(\zeta,c_0)$,
$\zeta \in R^+$.

From lemma \ref{homogeneity_lemma} we get

\begin{equation}
\psi_{w_1(x),w_2(\omega)}^{\gamma^1}(\zeta,c)=c^\frac{-k}{2}\psi_{w_1(x),w_2(\omega)}^{\gamma^1}(\zeta,1)=\psi_{w_1(x),w_2(\omega)}^{\gamma^1}(1,1)\frac{1}{\zeta}c^\frac{-k}{2}
\end{equation}

and that if the infimum is attained for all the points
$(\zeta,c_0)$, $\zeta\in R^+$, then it is attained for all the
points $(\zeta,1)$, $\zeta\in R^+$, and therefore (using Lemma
\ref{homogeneity_lemma} again) it is attained for every point
$(\zeta,c)\in R^{2+}$.

From the explicit formula for
$\psi_{w_1(x),w_2(\omega)}^{\gamma^1}(\zeta,c)$ we see that the
graph of $\psi_{w_1(x),w_2(\omega)}^{\gamma^1}(\zeta,c)$ is
concave (as a multiplication of two concave functions; the proof
of this simple fact is similar to the case of a sum of two concave
functions that appears as part of the proof of Theorem
\ref{LPuncertainty}) and that it is equal to $\partial
PB_{\gamma^1,\zeta^1}$. From what we have showed above we get that
$PB_{\gamma^1,\zeta^1}$ is closed iff $\exists v\in R^3$ s.t.
$v\in
\partial PB_{\gamma^1,\zeta^1}\cap PB_{\gamma^1,\zeta^1}$.

Thus we get that $PB_{\gamma^1,\zeta^1}$ is convex and either open
or closed.

From convexity of $PB_{\gamma^1,\zeta^1}$ and the fact that
$\psi_{w_1(x),w_2(\omega)}^{\gamma^1}(\zeta^1,c)$,

$\psi_{w_1(x),w_2(\omega)}^{\zeta^1}(\gamma^1,c)$ and
$\psi_{w_1(x),w_2(\omega)}^1(\gamma^1,\zeta^1)$ are defined on
$(0, \infty) \times (0, \infty)$ (see Lemma \ref{argument_1} and
Definition \ref{psi_with_different_indices}) we get that their
graphs coincide and equal to:

\begin{equation}
\partial PB_{\gamma^1,\zeta^1}=\big\{(\gamma,\zeta,c)\in R^{3+}|
\gamma=\psi_{w_1(x),w_2(\omega)}^{\gamma^1}(1,1)\frac{1}{\zeta}c^\frac{-k}{2}\big\}
\end{equation}

and we get

\begin{equation*}
\psi_{w_1(x),w_2(\omega)}^1(\gamma^1,\zeta^1)=C(\gamma
\zeta)^\frac{-2}{k}
\end{equation*}

substituting $\gamma^1=1$ and $\zeta^1=1$ we find that
$C=\psi_{w_1(x),w_2(\omega)}^1(1,1)$

Thus we have

\begin{equation*}
\psi_{w_1(x),w_2(\omega)}^1(\gamma^1,\zeta^1)=\psi_{w_1(x),w_2(\omega)}^1(1,1)(\gamma
\zeta)^\frac{-2}{k}
\end{equation*}

and

\begin{equation}
\label{psi_m2}
\psi_{w_1(x),w_2(\omega)}^m(\gamma^m,\zeta^m)=\psi_{w_1(x),w_2(\omega)}^1(1,1)(\gamma^m\zeta^m)^{\frac{-2}{km}}
\end{equation}

From the relation between the possibility bodies
\eqref{possibility_bodies_relation} and from \eqref{psi_m2} it is
easy to see that: $\forall k\in N$ the sets,
$PB^k_{\gamma^m,\zeta^m}$ $m\in N$ are convex; they are open iff
the set $PB^k_{\gamma^1,\zeta^1}$ is open and closed iff the set
$PB^k_{\gamma^1,\zeta^1}$ is closed.
\end{proof}

When we take $k=2$ and $m=1$ in Theorem
\ref{generalized_uncertainty_theorem3} we get the
Heisenberg-Pauli-Weyl general uncertainty principle

\begin{equation*}
\psi_{w_1(x),w_2(\omega)}^1(\gamma^1,\zeta^1)=C(\gamma \zeta)^{-1}
\end{equation*}

In Subsection \ref{subsec_Heisenberg-Pauli-Weyl} we also managed
to compute the constant $C=\frac{1}{4\pi}$ for this special case
(see \eqref{HPW_theta_function} and \eqref{HPW_psi_fumction}).


\begin{thebibliography}{7777777}
\bibitem{BouwkampCJ} C.J. Bouwkamp: On spheroidal wave functions of order zero,
J.Math. Phys. Mass. Inst. Tech 26  79-92 (1947).


\bibitem{DonohoDL_StarkPB} D.L.Donoho, P.B.Stark, Uncertainty principles and signal
recovery, SIAM J. Appl. Math, 49, 906-931 (1989)

\bibitem{FuchsWHJ} W.H.J.Fuchs, On the Eigenvalues of an Integral Equation Arising
in the Theory of Band-Limited Signals, J. Math. Anal. Appl. 9
(1964), 317-330.

\bibitem{FollandGB_SitaramA} G.B. Folland, A. Sitaram,
"The uncertainty principle: A mathematical survey", The Jornal of
Fourier Analysis and Applications, volume 3, number 3, 1997.

\bibitem{GaborD} D. Gabor, Theory of communication, J.IEE, 93, 429-457 (1946)

\bibitem{GrocheningK} K. Grochening, Foundations of Time-Frequency Analysis, Birkhauser
(2000)

\bibitem{HeisenbergW} W. Heisenberg (1927) $\ddot{U}$ber den anschaulichen Inhalt der
quantentheoretischen Kinematic und Mechanik. Zeit. Physik 43.
172-198.

\bibitem{KarouiA_MoumniT} A. Karoui, T. Moumni, New efficient methods for computing the
prolate spheroidal wave functions and their corresponding
eigenvalues

\bibitem{KhareK_GeorgeN} K. Khare, N. George, Sampling theory approach to
prolate spheroidal wave functions, J. Phys. A 36 (2003)
10011-10021.

\bibitem{KennardEH} E.H. Kennard, Zur Quantenmechanik einfacher
Bewegungstypen. Zeit. Physik 44, 326-352 (1927).

\bibitem{LandauHJ_PollakHO1} H.J. Landau, H.O. Pollak, Prolate spheroidal wave functions,
Fourier analysis and uncertainty - II, Bell Syst. Tech. J.,
40,65-84 (1961)

\bibitem{LandauHJ_PollakHO2} H.J. Landau, H.O. Pollak, Prolate spheroidal wave functions,
Fourier analysis and uncertainty - III: The dimension of the space
of essentially time and band limited signals, Bell Syst. Tech. J.,
July 1962, pages: 1295-1336

\bibitem{SlepianD_PollakHO} D. Slepian, H.O. Pollak, Prolate spheroidal wave functions,
Fourier analysis and uncertainty - I, Bell Syst. Tech. J.,40,
43-64 (1961).

\bibitem{PalamodovV} V. Palamodov, Degree of freedom of fields concentrated in a
compact domain, preprint 2006

\bibitem{TitchmarshEC} E.C. Titchmarsh, Eigenfunction expansions associated with second
order differential equations, part II; Oxford At The Clarendon
Press 1958.

\bibitem{WeylH} H. Weyl (1928). Gruppentheorie und Quantenmechanik. S.Hirzel,
Leipzig. Revised English edition: The Theory of Groups and Quantum
Mechanics Methuen, Lomdon, 1931; reprinted by Dover, New York,
1950.

\bibitem{XiaoH_RokhlinV_YarvinN} H. Xiao, V. Rokhlin, N. Yarvin, Prolate spheroidal wavefunctions,
quadrature and interpolation; Inverse Problems 17 (2001) 805-838

\end{thebibliography}
\end{document}